\newcommand{\bm}[1]{\mbox{\boldmath{$#1$}}}
\theoremstyle{plain}
\newtheorem{prop}{Proposition}
\newtheorem{rem}{Remark}
\begin{document}
	
\title{Exploiting Intelligent Reflecting Surfaces for Interference Channels with SWIPT} 
\author{Ying~Gao, Qingqing~Wu,~\IEEEmembership{Senior Member,~IEEE,} Wen~Chen,~\IEEEmembership{Senior Member,~IEEE,} Celimuge~Wu,~\IEEEmembership{Senior Member,~IEEE,} Derrick~Wing~Kwan~Ng,~\IEEEmembership{Fellow,~IEEE}, and Naofal~Al-Dhahir,~\IEEEmembership{Fellow,~IEEE}
\thanks{The work of Q.~Wu was supported by NSFC 62371289 and NSFC 62331022. The work of W.~Chen was supported by National Key Project 2020YFB1807700, NSFC 62071296, Shanghai 22JC1404000, 20JC1416502, and PKX2021-D02. The work of C.~Wu was supported in part by the ROIS NII Open Collaborative Research 23S0601, and in part by JSPS KAKENHI grant number 21H03424. The work of D.~W.~K.~Ng was supported by the Australian Research Council's Discovery Projects (DP210102169, DP230100603). The work of N.~Al-Dhahir was supported by Erik Jonsson Distinguished Professorship at UT-Dallas. The associate editor coordinating the review of this article and approving it for publication was A.S. Cacciapuoti. \emph{(Corresponding author: Qingqing Wu.)} }
\thanks{Y.~Gao is with the Department of Electronic Engineering, Shanghai Jiao Tong University, Shanghai 201210, China, and also with the State Key Laboratory of Internet of Things for Smart City, University of Macau, Macao 999078, China (e-mail: yinggao@um.edu.mo).}
\thanks{Q.~Wu and W.~Chen are with the Department of Electronic Engineering, Shanghai Jiao Tong University, Shanghai 201210, China (e-mail: qingqingwu@sjtu.edu.cn; whenchen@sjtu.edu.cn).}
\thanks{C.~Wu is with the Graduate School of Informatics and Engineering, The University of Electro-Communications, Tokyo 182-8585, Japan
(e-mail: clmg@is.uec.ac.jp).}
\thanks{D.~W.~K.~Ng is with the School of Electrical  Engineering and Telecommunications, University of New South Wales, NSW 2052, Australia (e-mail: w.k.ng@unsw.edu.au).}
\thanks{N.~Al-Dhahir is with the Department of Electrical and Computer Engineering, The University of Texas at Dallas, Richardson, TX 75080 USA (e-mail: aldhahir@utdallas.edu).}}

\maketitle

\begin{abstract}
	This paper considers intelligent reflecting surface (IRS)-aided simultaneous wireless information and power transfer (SWIPT) in a multi-user multiple-input single-output (MISO) interference channel (IFC), where multiple transmitters (Txs) serve their corresponding receivers (Rxs) in a shared spectrum with the aid of IRSs. Our goal is to maximize the sum rate of the Rxs by jointly optimizing the transmit covariance matrices at the Txs, the phase shifts at the IRSs, and the resource allocation subject to the individual energy harvesting (EH) constraints at the Rxs. Towards this goal and based on the well-known power splitting (PS) and time switching (TS) receiver structures, we consider three practical transmission schemes, namely the IRS-aided hybrid TS-PS scheme, the IRS-aided time-division multiple access (TDMA) scheme, and the IRS-aided TDMA-D scheme. The latter two schemes differ in whether the Txs employ deterministic energy signals known to all the Rxs. Despite the non-convexity of the three optimization problems corresponding to the three transmission schemes, we develop computationally efficient algorithms to address them suboptimally, respectively, by capitalizing on the techniques of alternating optimization (AO) and successive convex approximation (SCA). Moreover, we conceive feasibility checking methods for these problems, based on which the initial points for the proposed algorithms are constructed. Simulation results demonstrate that our proposed IRS-aided schemes significantly outperform their counterparts without IRSs in terms of sum rate and maximum EH requirements that can be satisfied under various setups. In addition, the IRS-aided hybrid TS-PS scheme generally achieves the best sum rate performance among the three proposed IRS-aided schemes, and if not, increasing the number of IRS elements can always accomplish it. 
\end{abstract}

\begin{IEEEkeywords}
Intelligent reflecting surface, SWIPT, interference channel, energy harvesting, resource allocation. 
\end{IEEEkeywords}

\section{Introduction}	
Harvesting energy from the environment, which can provide cost-effective and almost unlimited energy supplies for Internet-of-Things (IoT) devices, is a more convenient and greener alternative to conventional battery-powered solutions \cite{2011_Sujesha_EH_survey}. Compared to other renewable energy sources (e.g., solar and wind), radio-frequency (RF) signals are more appealing for energy harvesting (EH) due to their high availability and controllability. On the other hand, as RF signals carry both information and energy, the notion of \emph{simultaneous wireless information and power transfer} (SWIPT) comes into the picture and has attracted increasing attention. Indeed, the SWIPT concept was first proposed by Varshney, who also derived a capacity-energy function to characterize the rate-energy (R-E) trade-off for a single-input single-output (SISO) flat fading channel in \cite{2008_Varshney_SWIPT}. Besides, the authors of \cite{2010_Pulkit_SWIPT} extended the work in \cite{2008_Varshney_SWIPT} to a SISO frequency-selective channel and revealed that a non-trivial R-E trade-off exists in frequency-domain power allocation. Both of the above works assumed that the receiver circuits can decode information and harvest energy from the same RF signal concurrently, which, however, may not be realizable in practice yet due to technical limitations in circuitries \cite{2013_Rui_MIMO_SWIPT}. Hence, the authors of \cite{2013_Rui_MIMO_SWIPT} proposed two practical receiver structures, namely, time switching (TS) and power splitting (PS), for co-located receivers with both information decoding (ID) and EH requirements. Moreover, the R-E regions of three-node multiple-input multiple-output (MIMO) broadcast channels were studied in \cite{2013_Rui_MIMO_SWIPT}, where both cases of separated and co-located receivers were considered. Also, following \cite{2013_Rui_MIMO_SWIPT}, the R-E trade-off study was extended to multi-user multiple-input single-output (MISO) broadcast systems with TS receivers \cite{2015_Yanjie_TS}, PS receivers \cite{2014_Qingjiang_PS}, coexisting PS and TS receivers \cite{2019_Ruihong_coexist}, and separated receivers \cite{2014_Jie_SWIPT}. 

Meanwhile, there have been several SWIPT studies for more general multi-point-to-multi-point systems, where multiple transmitters communicate
with their corresponding receivers over a commonly shared spectrum, e.g., \cite{2013_Park_two-user,2014_Park_K-user,2016_Jian_robust_IFC,2014_Baosheng_SWIPT,2014_Qingjiang_SOCP,2014_Chao_IFC,2016_Ming-Min_SWIPT,2015_Zhiyuan_SWIPT}. From the perspective of wireless information transfer (WIT), such systems are generally modeled as interference channels (IFCs). However, unlike conventional IFC without EH, the cross-link interference, while generally detrimental for ID, is helpful for EH since strong interference serves as an abundant energy source. Therefore, interesting R-E trade-offs naturally exist in SWIPT IFCs and the design of transmission strategies for SWIPT is a critical issue in IFCs. Depending on whether the information and energy receivers are geographically separated or co-located, we can identify two different lines of research along this direction. For the case of separated receivers, the authors of \cite{2013_Park_two-user} first investigated SWIPT for a two-user MIMO IFC and then extended their work to a more general $K$-user MIMO IFC in \cite{2014_Park_K-user}, where the necessary condition for the optimal transmission strategy was revealed. Besides, the study in \cite{2013_Park_two-user} was also extended to the case of imperfect channel state information (CSI) at the transmitters in \cite{2016_Jian_robust_IFC}. On the other hand, for the case of co-located receivers, the authors of \cite{2014_Baosheng_SWIPT} investigated the joint design of transmit power allocation and receive PS ratios for a two-user SISO IFC with SWIPT. This work was then extended to $K$-user MISO and MIMO IFCs in \cite{2014_Qingjiang_SOCP,2014_Chao_IFC,2016_Ming-Min_SWIPT} and \cite{2015_Zhiyuan_SWIPT}, respectively. In particular, the authors of \cite{2014_Chao_IFC} proposed three practical SWIPT schemes, namely, the PS scheme, the time-division mode switching (TDMS) scheme, and the time-division multiple access (TDMA) scheme. Interestingly, it was shown in \cite{2014_Chao_IFC} that none of the three schemes can consistently outperform the other two in terms of maximizing the sum rate of the receivers while satisfying their EH requirements. Specifically, the PS scheme generally performs the best when the direct links dominate, while the TDMS scheme generally performs the best when the cross links dominate. Additionally, if the interference is overwhelming and the EH requirement is also stringent, the TDMA scheme may perform the best. 

Despite the fruitful progress in the development of theory, there are still various challenges in implementing SWIPT systems in practice. For example, due to severe path loss, the efficiency of wireless power transfer (WPT) decreases rapidly with the increase of transmission distance, which fundamentally limits the performance of SWIPT systems. Recently, intelligent reflecting surface (IRS), composed of a large number of low-cost reflecting elements, has emerged as a promising solution to boost the efficiencies of both WPT and WIT \cite{2022_Qingqing_WEIT_overview}. By adapting each reflecting element in real-time to tune the phase shifts of the impinging signals, an IRS can achieve fine-grained passive beamforming gains, thereby reconfiguring the wireless propagation channels to achieve different desired design objectives \cite{2020_Qingqing_Towards}. Particularly, it was first revealed in \cite{2019_Qingqing_Joint} that an IRS is capable of providing an asymptotic squared power gain in terms of the user receive power, which has prompted considerable research to exploit IRSs for enhancing the performance of WIT \cite{2019_Chongwen_EE,2019_Miao_Secure,2021_Meng_CoMP,2022_Kaitao_Sensing}, WPT \cite{2022_Chi_WPT}, both downlink WPT and uplink WIT \cite{2022_Qingqing_WPCN,2022_Guangji_MEC}, and SWIPT \cite{2020_Qingqing_SWIPT_letter,2020_Qingqing_SWIPT_QoS,2020_Cunhua_SWIPT,2021_Shayan_SWIPT,2022_Yang_SWIPT}. For instance, for an IRS-aided SWIPT system, it was shown in \cite{2020_Qingqing_SWIPT_letter} that the operating range of WPT and the corresponding achievable signal-to-interference-plus-noise ratio (SINR)-energy region can be significantly enlarged by deploying an IRS in close proximity to the energy receivers. Also, the results in \cite{2020_Qingqing_SWIPT_QoS} and \cite{2020_Cunhua_SWIPT} indicated that IRS-aided SWIPT systems are clearly superior to their conventional counterparts without the IRS in terms of the total transmit power required at the access point and the weighted sum rate of the information receivers, respectively. Besides \cite{2020_Qingqing_SWIPT_letter,2020_Qingqing_SWIPT_QoS,2020_Cunhua_SWIPT} where the information and energy receivers are separated, the authors of \cite{2021_Shayan_SWIPT} considered co-located receivers based on the PS receiver structure and studied the max-min energy efficiency (EE) problem for an IRS-aided SWIPT system. Their simulation results verified the effectiveness of the IRS in enhancing the system EE. In addition, the authors of \cite{2022_Yang_SWIPT} showed that the IRS can introduce remarkable R-E gains with robustness to quantized IRS phase shifts. 

Although the above works have validated the advantages of integrating IRSs into SWIPT systems under various setups and criteria, to our best knowledge, the research on IRS-aided SWIPT in IFCs is still in its infancy. We note that there have been several studies on IFCs aided by IRSs (see e.g., \cite{2020_Wei_IFC,2022_Abdollahi_IFC,2022_Tao_IFC}). However, the transmit/reflect beamforming and resource allocation designs proposed in \cite{2020_Wei_IFC,2022_Abdollahi_IFC,2022_Tao_IFC} do not consider WPT. Their results may not achieve satisfactory R-E trade-offs in IRS-aided IFCs with SWIPT and even cannot satisfy the EH requirements at the receivers. Moreover, some fundamental issues remain unknown or uninvestigated when it comes to introducing IRSs into IFCs with SWIPT. First, \emph{will the performance comparison results among the three transmission strategies in \cite{2014_Chao_IFC} still hold after the introduction of IRSs}? This question is motivated by the fact that compared with the PS scheme without time division, the TDMS and TDMA schemes can allow IRSs to adopt different phase-shift vectors in different time slots to customize favorable time-varying channels. Thus, it is unknown whether IRSs can bring more performance gains to the TDMS and TDMA schemes than that to the PS scheme, thereby rewriting the performance comparison results in the case without IRSs. Besides, since none of the three schemes in \cite{2014_Chao_IFC} can always outperform one another, another question arises: \emph{Can we propose a new transmission strategy that unifies some of these three schemes, or even always outperforms them}? 

\begin{figure}[!t]
	\centering
	\includegraphics[width=0.48\textwidth]{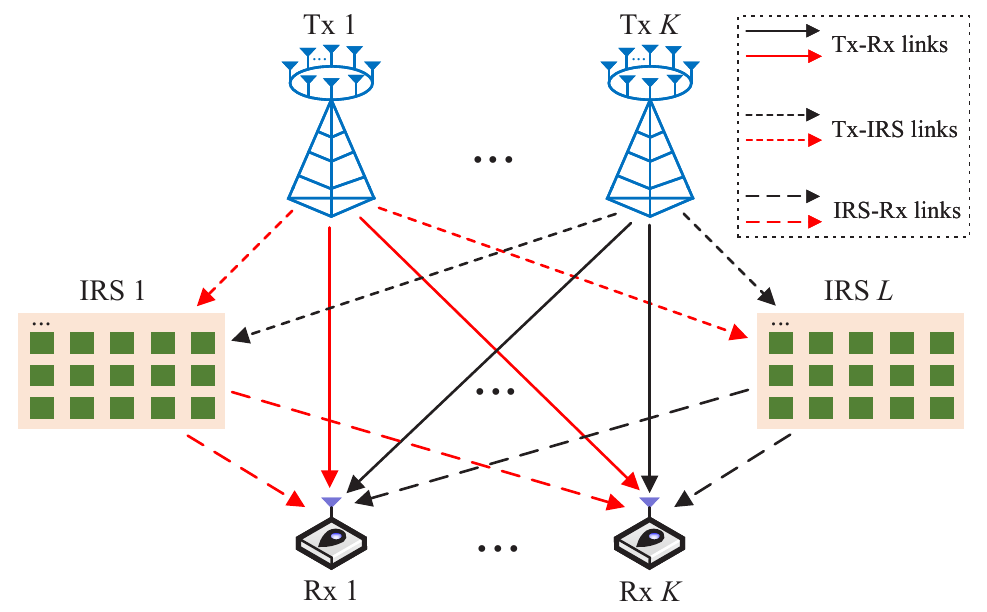}
	\caption{Illustration of IRS-aided SWIPT in a $K$-user MISO IFC.}
	\label{fig:system_model}
\end{figure}

Motivated by the aforementioned open issues, we study IRS-aided SWIPT in a multi-user MISO IFC, where multiple IRSs are deployed to assist the downlink SWIPT between multiple transmitters (Txs) and their corresponding receivers (Rxs), as shown in Fig. \ref{fig:system_model}. We aim to maximize the sum rate of the Rxs by jointly optimizing the transmit covariance matrices at the Txs, the phase shifts at the IRSs, and the resource allocation subject to the constraints of individual EH requirements at the Rxs. Our main contributions are summarized as follows. 
\begin{itemize}
	\item We consider three practical transmission schemes for SWIPT in IFCs, namely the IRS-aided hybrid TS-PS scheme, the IRS-aided TDMA scheme, and the IRS-aided TDMA-D scheme. The first scheme is a new strategy that unifies the PS and TDMS schemes in \cite{2014_Chao_IFC}. The third scheme differs from the second one in that the energy signals transmitted by the Txs are known deterministic signals to all the Rxs. For each of the schemes, we formulate a sum rate maximization problem, where the transmit covariance matrices and IRS phase shifts in different time slots are jointly optimized with the resource allocation to fully utilize the available degrees-of-freedom (DoF). These non-convex problems, characterized by coupled optimization variables, are considerably distinct and significantly more challenging compared to those in \cite{2014_Chao_IFC}. This is mainly because of the newly introduced time-varying IRS phase-shift variables. Additionally, the unified hybrid TS-PS scheme involves the optimization of both the TS- and PS-related variables, which further complicates the solution development.   
	\item To handle these non-convex formulated problems, we first propose suboptimal algorithms based on the proper change of variables, the alternating optimization (AO) method, and the successive convex approximation (SCA) technique. Moreover, useful properties of the obtained solutions are revealed to offer important engineering insights. Finally, we provide methods for checking the feasibility of these formulated problems, based on which we construct initial points for the proposed algorithms. 
	\item Our numerical results demonstrate that: 
	1) the performance advantages of the distributed IRS deployment over the centralized IRS deployment or vice verse depend on the  geographical distribution of the Tx-Rx pairs; 2) the introduction of IRSs rewrites some fundamental results in the comparisons of the considered schemes; 3) our proposed IRS-aided schemes perform much better in achievable rate and harvested energy than their counterparts without IRSs; 4) the IRS-aided hybrid TS-PS scheme is generally the best among all the considered schemes for obtaining the highest sum rate, and if not, we can always increase the number of IRS elements to achieve this. 
\end{itemize} 

The rest of this paper is organized as follows. Section \ref{Sec_model_formu} introduces the system model and problem formulations for a multi-user MISO IFC under different transmission strategies. Algorithms proposed for the three formulated problems are presented in Sections \ref{Sec_P1_solution} and \ref{Sec_P2_solution}. In Section \ref{Sec_feasibility}, we present feasibility checking and initialization methods for the formulated problems and the proposed algorithms, respectively. Section \ref{Sec_simulation} provides numerical results to demonstrate the effectiveness of our proposed algorithms and compare their performance. Section \ref{Sec_conclusion} concludes the paper. 

\emph{Notations:} Lower-case letters refer to scalars, while boldface lower-case (upper-case) letters represent vectors (matrices). Let $\mathbb C^{M\times N}$ denote the space of $M\times N$ complex-valued matrices. The trace, transpose, conjugate transpose, and expectation operators are denoted by ${\rm tr}(\cdot)$, $(\cdot)^T$, $(\cdot)^H$, and $\mathbb E\left(\cdot\right)$, respectively. ${\rm diag}\left(\cdot\right)$ represents the diagonalization operation. We denote the cardinality of a set $\mathcal K$ by $\left|\mathcal K\right|$. $\mathcal{CN}\left(\bm\mu,\mathbf \Sigma\right)$ denotes a complex Gaussian distribution with a mean vector $\bm\mu$ and co-variance matrix $\mathbf \Sigma$. $\{a_i\}_{i\in\mathcal I}$ represents the collection of all variables $a_i$, where $i$ belongs to the set $\mathcal I$. $\mathbf 0$ is an all-zero matrix whose dimension is determined by the context. $\mathbf S\succeq \mathbf 0$ indicates that $\mathbf S$ is a positive semidefinite matrix. $\jmath$ refers to the imaginary unit, i.e., $\jmath^2 = 1$, and ${\rm Re}\{\cdot\}$ denotes the real part of a complex number. $[\cdot]_m$ refers to the $m$-th element of a vector, while the $(m,n)$-th element of a matrix is denoted by $[\cdot]_{m,n}$. ${\rm rank}(\cdot)$ represents the rank of a matrix. 

\section{System Model and Problem Formulation}\label{Sec_model_formu}
As depicted in Fig. \ref{fig:system_model},  we consider a narrow-band IRS-aided MISO IFC with $L$ passive IRSs\footnote{As an initial study on performance comparison among different IRS-aided transmission strategies for SWIPT in IFCs, we consider the simplest type of IRS architecture, i.e., passive IRS \cite{2021_Qingqing_Tutorial}. It will be interesting to extend this work to active IRS \cite{2023_Ying_Active_SWIPT}, hybrid IRS \cite{2022_Nguyen_hybrid}, intelligent omni-surface \cite{2022_Shuhang_omni}, etc., to see whether the obtained results still hold, which are left for future work. \looseness=-1} and $K$ Tx-Rx pairs, indexed by $\mathcal L = \{1,\cdots,L\}$ and $\mathcal K = \{1,\cdots,K\}$, respectively. All the Txs are assumed to operate over the same frequency band. Each $M$-antenna Tx serves its corresponding single-antenna Rx and simultaneously interferes with the other $K-1$ Rxs. Particularly, in the presence of the $L$ IRSs, the communication links from Tx $k$ to Rx $k$ ($k \in\mathcal K$) include not only a direct link but also reflected links. Note that due to the multiplicative path loss, we ignore the signals undergoing two or more reflections \cite{2020_Cunhua_SWIPT,2021_Shayan_SWIPT,2022_Yang_SWIPT}. At each Rx, the received signal can be exploited for either ID or EH, or be split into two parts for ID and EH, respectively. Without loss of generality, the overall transmission interval is normalized to unity. 

Suppose that IRS $\ell$ ($\ell \in\mathcal L$) is equipped with $N_\ell$ reflecting elements. The set of all the IRS elements is denoted as $\mathcal N$ with $\left|\mathcal N\right| = N = \sum_{\ell=1}^L N_\ell$. Let $\mathbf G_{i,\ell}\in\mathbb C^{N_\ell \times M}$, $\mathbf f_{\ell,k}^H \in\mathbb C^{1\times N_\ell}$, and $\mathbf h_{i,k}^H \in\mathbb C^{1\times M}$ denote the equivalent baseband channels from Tx $i$ to IRS $\ell$, from IRS $\ell$ to Rx $k$, and from Tx $i$ to Rx $k$, respectively. Denoted by $\mathbf \Phi_{i,\ell,k} = {\rm diag}\left(\mathbf f_{\ell,k}^H\right)\mathbf G_{i,\ell} \in\mathbb C^{N_{\ell}\times M}$ the cascaded channel from Tx $i$ to Rx $k$ via IRS $\ell$. To quantify the performance limit, we assume that perfect CSI of all the channels involved can be acquired by utilizing existing channel estimation methods, e.g., \cite{2021_Wei_channel,2021_Wei_channel_conf}. 

\begin{figure}[!t]
	\centering
	\subfigure[]{\label{fig:hybrid_general_model}
		\includegraphics[width=0.37\textwidth]{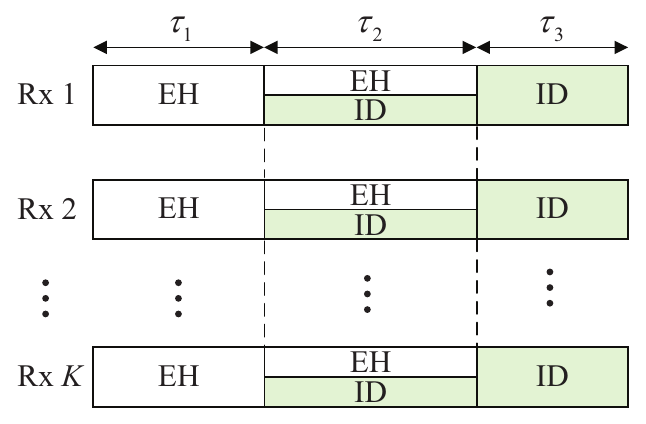}}
	\subfigure[]{\label{fig:TDMA_model}
		\includegraphics[width=0.37\textwidth]{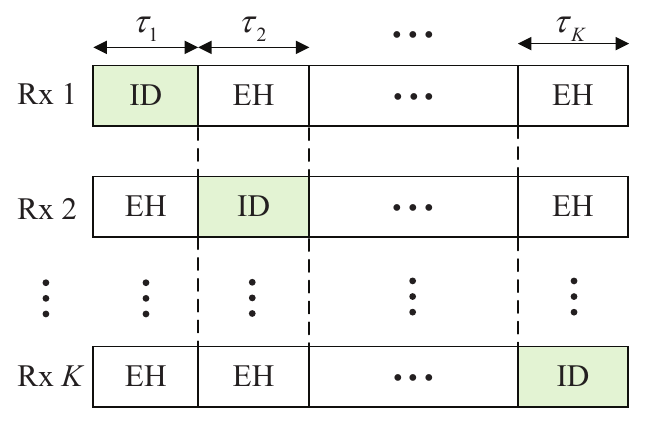}}
	\caption{Illustration of the proposed transmission strategies: (a) IRS-aided hybrid TS-PS scheme; (b) IRS-aided TDMA and TDMA-D schemes.}
\end{figure}

\subsection{IRS-Aided Hybrid TS-PS Scheme}
For the hybrid TS-PS scheme shown in Fig. \ref{fig:hybrid_general_model}, the transmission interval is divided into three time slots. All the Rxs operate in the EH mode in time slot $1$, split the received signal into two streams for ID and EH, respectively, in time slot $2$, and switch to the ID mode in time slot $3$. The time fraction of the $j$-th time slot is assumed to be $\tau_j \in\left[0,1\right] $, $j\in\{1,2,3\}$. Denoted by $\bm x_{i,j} \in\mathbb C^{M\times 1}$ the signal vector transmitted by Tx $i$ in time slot $j$. It is assumed that $\{\bm x_{i,j}\}_{i\in\mathcal K}$ are independent over $i$. We further assume that a Gaussian codebook is adopted at each Tx and $\bm x_{i,j} \sim\mathcal {CN}\left(\mathbf 0, \mathbf S_{i,j}\right)$, where $\mathbf S_{i,j} = \mathbb E\left(\bm x_{i,j}\bm x_{i,j}^H\right) \succeq \mathbf 0$ denotes the transmit covariance matrix for Tx $i$ in time slot $j$ \cite{2014_Chao_IFC}. Let $\mathbf \Theta_{\ell,j} = {\rm diag}\left(\beta_{\ell,1,j}e^{\jmath\theta_{\ell,1,j}}, \cdots, \beta_{\ell,N_\ell,j}e^{\jmath\theta_{\ell,N_\ell,j}} \right)$ denote the reflection-coefficient matrix exploited at the $\ell$-th IRS in time slot $j$, where $\beta_{\ell,n,j}$ and $\theta_{\ell,n,j}$ represent the reflection amplitude and phase shift of element $n \in \{1,\cdots,N_\ell\}$, respectively. In addition, $\beta_{\ell,n,j}$ and $\theta_{\ell,n,j}$ can be independently adjusted over $[0,1]$ and $[0,2\pi)$, respectively \cite{2021_Qingqing_Tutorial,2016_Huanhuan_amplitude}. It is not difficult to see that in the case of without IRSs, this scheme reduces to the PS and TDMS schemes in \cite{2014_Chao_IFC} when $\tau_1 = \tau_3 = 0$ and $\tau_2 = 0$, respectively.  
\begin{itemize}
	\item Time slot $1$: The baseband received signal at Rx $k$, $k\in\mathcal K$, in time slot $1$ is given by
	\begin{align}\label{Hy_received_signal_1}	
	y_{k,1} & = \sum_{i=1}^K\left(\sum_{\ell=1}^L\mathbf f_{\ell,k}^H\mathbf \Theta_{\ell,1}\mathbf G_{i,\ell} + \mathbf h_{i,k}^H \right)\bm x_{i,1} + \tilde{n}_k \nonumber\\
	& = \sum_{i=1}^K\left(\sum_{\ell=1}^L\mathbf u_{\ell,1}^H\mathbf \Phi_{i,\ell,k} + \mathbf h_{i,k}^H \right)\bm x_{i,1} + \tilde{n}_k \nonumber\\
	& = \sum_{i=1}^K\mathbf v_1^H\mathbf H_{i,k}\bm x_{i,1} + \tilde{n}_k,
	\end{align}
	where $\mathbf u_{\ell,1}^H = \left[\beta_{\ell,1,1}e^{\jmath\theta_{\ell,1,1}}, \cdots, \beta_{\ell,N_\ell,1}e^{\jmath\theta_{\ell, N_{\ell},1}}\right]$, $\mathbf v_1 = \left[\mathbf u_{1,1}^H,\cdots,\mathbf u_{L,1}^H, 1\right]^H$ denotes the phase-shift vector associated with all the IRSs in time slot $1$, $\mathbf H_{i,k} = \left[\mathbf \Phi_{i,1,k}; \cdots; \mathbf\Phi_{i,L,k}; \mathbf h_{i,k}^H\right]$, and $\tilde n_k \sim\mathcal {CN} \left(0, \tilde \sigma_k^2 \right)$ denotes the RF-band antenna noise at Rx $k$ with $\sigma_k^2$ being the noise variance. By adopting the linear EH model \cite{2020_Qingqing_SWIPT_letter,2020_Qingqing_SWIPT_QoS,2020_Cunhua_SWIPT} and ignoring the negligible noise power, the harvested RF-band energy at Rx $k$ during $\tau_1$ can be expressed as $Q_{k,1} = \zeta\tau_1\sum_{i = 1}^K{\rm tr}\left(\mathbf a_{i,k,1}\mathbf a_{i,k,1}^H\mathbf S_{i,1}\right)$,  where $\mathbf a_{i,k,1}^H = \mathbf v_1^H\mathbf H_{i,k}$ and $\zeta\in\left(0,1\right] $ represents the constant energy conversion efficiency of each Rx.  
	\item Time slot $2$: At Rx $k$, the portion of signal power split for ID is denoted by $\rho_k \in \left[0,1\right] $ and that for EH by $1 - \rho_k$. Similar to \eqref{Hy_received_signal_1}, the received signal at Rx $k$ in time slot $2$ for ID and EH can be written as $y_{k,2}^{\rm ID} = \sqrt{\rho_k}\left(\sum_{i=1}^K\mathbf v_2^H\mathbf H_{i,k}\bm x_{i,2} + \tilde n_k\right) + \hat n_k$ and $y_{k,2}^{\rm EH} = \sqrt{1-\rho_k}\left(\sum_{i=1}^K\mathbf v_2^H\mathbf H_{i,k}\bm x_{i,2} + \tilde n_k\right)$, respectively, where $\mathbf v_2$ denotes the phase-shift vector associated with all the IRSs in this time slot with $\mathbf v_2 = \left[\mathbf u_{1,2}^H,\cdots,\mathbf u_{L,2}^H, 1\right]^H$ and $\mathbf u_{\ell,2}^H = \left[\beta_{\ell,1,2}e^{\jmath\theta_{\ell,1,2}}, \cdots, \beta_{\ell,N_\ell,2}e^{\jmath\theta_{\ell, N_{\ell},2}}\right]$, and $\hat n_k \sim\mathcal {CN} \left(0, \hat \sigma_k^2 \right)$ represents the baseband processing noise. By treating the cross-link interference as noise, the achievable rate at Rx $k$ in bits/second/Hertz (bps/Hz) is given by $R_{k,2} = \tau_2\log_2\left(1 + \gamma_{k,2}\right)$ with
	\begin{align}\label{Hy_SINR_2}
	\gamma_{k,2}  = \frac{{\rm tr}\left(\mathbf a_{k,k,2}\mathbf a_{k,k,2}^H \mathbf S_{k,2}\right) }{\sum_{i=1,i\neq k}^K{\rm tr}\left(\mathbf a_{i,k,2}\mathbf a_{i,k,2}^H \mathbf S_{i,2}\right) + \tilde\sigma_k^2 + \frac{\hat\sigma_k^2}{\rho_k}},
	\end{align}
	where $\mathbf a_{i,k,2}^H = \mathbf v_2^H\mathbf H_{i,k}$, $i,k\in\mathcal K$. On the other hand, the harvested RF-band energy at Rx $k$ during $\tau_2$ can be expressed as $Q_{k,2} = \zeta\tau_2\left(1 - \rho_k\right)\sum_{i = 1}^K{\rm tr}\left(\mathbf a_{i,k,2}\mathbf a_{i,k,2}^H\mathbf S_{i,2}\right)$, $k\in\mathcal K$. 
    \item Time slot $3$: In this time slot, the achievable rate at Rx $k$, $k\in\mathcal K$, in bps/Hz can be expressed as $R_{k,3} = \tau_3\log_2\left(1 + \gamma_{k,3}\right)$ with 
    \begin{align}\label{Hy_SINR_3}
    \gamma_{k,3}  = \frac{{\rm tr}\left(\mathbf a_{k,k,3}\mathbf a_{k,k,3}^H \mathbf S_{k,3}\right) }{\sum_{i=1,i\neq k}^K{\rm tr}\left(\mathbf a_{i,k,3}\mathbf a_{i,k,3}^H \mathbf S_{i,3}\right) + \sigma_k^2},
    \end{align}
    where $\sigma_k^2 = \tilde{\sigma}_k^2 + \hat{\sigma}_k^2$, which means that two types of noise exist at Rx $k$: the RF-band antenna noise $\tilde n_k$ and the baseband processing noise $\hat n_k$ \cite{2013_Rui_MIMO_SWIPT}. Additionally, $\mathbf a_{i,k,3}^H = \mathbf v_3^H\mathbf H_{i,k}$ with $\mathbf v_3$ denoting the phase-shift vector associated with all the IRSs in time slot $3$, $\mathbf v_3 = \left[\mathbf u_{1,3}^H,\cdots,\mathbf u_{L,3}^H, 1\right]^H$ and $\mathbf u_{\ell,3}^H = \left[\beta_{\ell,1,3}e^{\jmath\theta_{\ell,1,3}}, \cdots, \beta_{\ell,N_\ell,3}e^{\jmath\theta_{\ell, N_{\ell},3}}\right]$. 
\end{itemize}

\subsection{IRS-Aided TDMA and TDMA-D Schemes}
For the IRS-aided TDMA-based schemes shown in Fig. \ref{fig:TDMA_model}, we divide the transmission interval into $K$ time slots, each of which occupies $\tau_j \in\left[0,1\right]$ fraction of the time, $\forall j\in\mathcal K$. The $k$-th Rx operates in the ID mode only in the $k$-th time slot while in the EH mode in other $K-1$ time slots. Let $\mathbf S_{i,j}\succeq \mathbf 0$ denote the transmit covariance matrix for the $i$-th Tx in the $j$-th time slot. In addition, $\mathbf v_j = \left[ \beta_{1,1,j}e^{\jmath\theta_{1,1,j}}, \cdots, \beta_{1, N_1,j}e^{\jmath\theta_{1, N_1,j}}, \cdots,\beta_{L,1,j}e^{\jmath\theta_{L,1,j}}, \cdots,\right. $\\$\left. \beta_{L,N_L,j}e^{\jmath\theta_{L, N_L,j}},1\right]^H$ denotes the phase-shift vector related to the $L$ IRSs in time slot $j$. 
\subsubsection{IRS-Aided TDMA Scheme}
Assume that in time slot $k$, the interference at Rx $k$ induced by the random unknown energy signals from Tx $i$, $i\in\mathcal K\backslash\{k\}$, cannot be canceled. Then, the achievable rate at Rx $k$ in bps/Hz is given by $R_k = \tau_k\log_2\left(1 + \gamma_k\right)$ with 
\begin{align}
\gamma_k  = \frac{{\rm tr}\left(\mathbf b_{k,k}\mathbf b_{k,k}^H \mathbf S_{k,k}\right) }{\sum_{i=1,i\neq k}^K{\rm tr}\left(\mathbf b_{i,k}\mathbf b_{i,k}^H \mathbf S_{i,k}\right) + \sigma_k^2},
\end{align}
where $\mathbf b_{i,k}^H = \mathbf v_k^H\mathbf H_{i,k}$, $i,k\in\mathcal K$. On the other hand, the harvested energy at Rx $k$ over the whole transmission interval is given by $Q_k = \sum_{j=1,j\neq k}^K\zeta\tau_j\sum_{i=1}^K{\rm tr}\left(\mathbf c_{i,k,j}\mathbf c_{i,k,j}^H\mathbf S_{i,j}\right)$, where $\mathbf c_{i,k,j}^H = \mathbf v_j^H\mathbf H_{i,k}$, $i,k,j\in\mathcal K$. 

\subsubsection{IRS-Aided TDMA-D Scheme}
Particularly, when a certain Rx operates in the EH mode, its corresponding Tx can transmit deterministic signals (e.g., training/pilot signals) known to all the Rxs. This means that in time slot $k$, Rx $k$ operating in the ID mode can cancel the cross-link interference introduced by the energy signals to improve its data rate. Accordingly, the achievable rate at Rx $k$, $k\in\mathcal K$, in bps/Hz can be expressed as $R_k^{\rm D} = \tau_k\log_2\left(1 + \gamma_k^{\rm D}\right)$, where 
\begin{align}
\gamma_k^{\rm D} = \frac{{\rm tr}\left(\mathbf b_{k,k}\mathbf b_{k,k}^H \mathbf S_{k,k}\right)}{\sigma_k^2}.
\end{align}

\subsection{Problem Formulation}
Our target is to maximize the sum rate of the Rxs by jointly optimizing the transmit covariance matrices and IRS phase-shift vectors over all the time slots together with the resource allocation while satisfying the individual EH requirements at the Rxs, given by $E_k$, $k\in\mathcal K$. For the IRS-aided hybrid TS-PS scheme, the problem of interest can be mathematically formulated as 
\begin{subequations}
	\begin{align}
	\text{(P1)}: \hspace{3mm}&\underset{\substack{\{\mathbf S_{i,j} \succeq \mathbf 0\}, \{\tau_j\}, \{\rho_k\},\\ \{\mathbf v_j\}, i,k\in\mathcal K, j\in\{1,2,3\}}}{\max} \hspace{2mm} \sum_{k=1}^K\Big[ \tau_2\log_2\left(1 + \gamma_{k,2} \right) \nonumber\\
	& \hspace{3cm} + \tau_3\log_2\left( 1 + \gamma_{k,3}\right)\Big] \\
	\text{s.t.} \hspace{3mm}&  \zeta\tau_1\sum_{i = 1}^K{\rm tr}\left(\mathbf a_{i,k,1}\mathbf a_{i,k,1}^H\mathbf S_{i,1}\right) \nonumber\\ 
	& + \zeta\tau_2\left(1 - \rho_k\right)\sum_{i = 1}^K{\rm tr}\left(\mathbf a_{i,k,2}\mathbf a_{i,k,2}^H\mathbf S_{i,2}\right) \geq E_k, \nonumber\\
	& \hspace{1.5mm}\forall k \in\mathcal K, \label{P1_cons:b}\\
	&  {\rm tr}\left(\mathbf S_{i,j} \right) \leq P_i, \ \forall i\in\mathcal K, j \in \{1,2,3\}, \label{P1_cons:c}\\
	&  \sum_{j=1}^3 \tau_j \leq 1, \  \tau_j \geq 0, \ \forall j \in\{1,2,3\}, \label{P1_cons:d}\\
	&  0 \leq \rho_k \leq 1, \ \forall k\in\mathcal K, \label{P1_cons:e}\\
	& \left|\left[\mathbf v_j\right]_n \right| \leq 1,  \ \left[\mathbf v_j\right]_{N+1} = 1, \ \forall n\in\mathcal N, j \in\{1,2,3\}, \label{P1_cons:f}
	\end{align}
\end{subequations} 
where $P_i > 0$ in \eqref{P1_cons:c} denotes the maximum instantaneous transmit power at Tx $i$, $i\in\mathcal K$, \eqref{P1_cons:d} and \eqref{P1_cons:e} impose the constraints on the time allocation and the PS ratios, respectively, and \eqref{P1_cons:f} ensures the modulus constraints on the IRS phase-shift vectors. Similarly, for the IRS-aided TDMA and TDMA-D schemes, the corresponding sum rate maximization problems can be respectively formulated as follows 
\begin{subequations}
	\begin{align}
	\text{(P2)}:\hspace{3mm} &\underset{\substack{\{\mathbf S_{i,j} \succeq \mathbf 0\}, \{\tau_j\},\\ \{\mathbf v_j\}, i,j\in\mathcal K}}{\max} \hspace{2mm} \sum_{k=1}^K\tau_k\log_2\left( 1 + \gamma_k\right) \\
	\text{s.t.}  \hspace{3mm} & \sum_{j=1,j\neq k}^K\zeta\tau_j\sum_{i=1}^K{\rm tr}\left(\mathbf c_{i,k,j}\mathbf c_{i,k,j}^H\mathbf S_{i,j}\right) \geq E_k, \ \forall k \in\mathcal K, \label{P2_cons:b}\\
	& {\rm tr}\left(\mathbf S_{i,j} \right) \leq P_i, \ \forall i,j\in\mathcal K, \label{P2_cons:c}\\
	& \sum_{j=1}^K \tau_j \leq 1,\  \tau_j \geq 0, \ \forall j \in \mathcal K, \label{P2_cons:d}\\
	& \left|\left[\mathbf v_j\right]_n \right| \leq 1, \ \left[\mathbf v_j\right]_{N+1} = 1, \ \forall n\in\mathcal N, j \in \mathcal K, \label{P2_cons:e}
	\end{align}
\end{subequations} 
\begin{subequations}
	\begin{align}
    \text{(P2-D)}: \underset{\substack{\{\mathbf S_{i,j} \succeq \mathbf 0\}, \{\tau_j\},\\ \{\mathbf v_j\}, i,j\in\mathcal K}}{\max} \hspace{3mm} & \sum_{k=1}^K\tau_k\log_2\left( 1 + \gamma_k^{\rm D}\right) \\
	\text{s.t.} \hspace{3mm} & \eqref{P2_cons:b} - \eqref{P2_cons:e}.
	\end{align} 
\end{subequations}
Note that none of the above three problems are convex due to the intricately coupled optimization variables in all the objective functions and the EH constraints. Therefore, it is very difficult, if not impossible, to solve these problems optimally. 

\section{Proposed Algorithm for (P1)}\label{Sec_P1_solution}
To tackle (P1), we resort to the AO method, which has been widely utilized in solving IRS-related optimization problems in existing works (e.g., \cite{2019_Chongwen_EE,2019_Miao_Secure,2020_Qingqing_SWIPT_letter}). Nevertheless, problem (P1) involves more sets of optimization variables than those considered in \cite{2019_Chongwen_EE,2019_Miao_Secure,2020_Qingqing_SWIPT_letter} that typically only focus on the transmit and reflect beamforming design. Despite this, we efficiently handle these optimization variables by dividing them into only two blocks, i.e., $\left\lbrace \{\mathbf S_{i,j}\}, \{\tau_j\}, \{\rho_k\}\right\rbrace$ and $\{\mathbf v_j\}$. Further details are provided in the next subsections.

\subsection{Optimizing $\left\lbrace \{\mathbf S_{i,j}\}, \{\tau_j\}, \{\rho_k\}\right\rbrace$ for Given $\{\mathbf v_j\}$} \label{Sec_P1_sub1}
For any fixed $\{\mathbf v_j\}$, 
(P1) is reduced to
\begin{subequations}\label{P1-sub1}
	\begin{align}
    \underset{\substack{\{\mathbf S_{i,j} \succeq \mathbf 0\}, \{\tau_j\}, \{\rho_k\},\\ \{\mathbf v_j\}, i,k\in\mathcal K, j\in\{1,2,3\}}}{\max} \hspace{3mm} & \sum_{k=1}^K\Big[ \tau_2\log_2\left(1 + \gamma_{k,2} \right) \nonumber\\
    & + \tau_3\log_2\left( 1 + \gamma_{k,3}\right)\Big] \\
	\text{s.t.} \hspace{3mm} & \eqref{P1_cons:b}-\eqref{P1_cons:e}. 
	\end{align}
\end{subequations} 
To make this problem more tractable, we apply the change of variables $\mathbf W_{i,j} = \tau_j\mathbf S_{i,j}$ and introduce slack variables $\{e_k\}$ such that $\frac{e_k}{\tau_2} = \frac{1}{\rho_k}$, where $i,k\in\mathcal K$ and $j\in\{1,2,3\}$. Then, problem \eqref{P1-sub1} can be equivalently transformed into problem \eqref{P1-sub1-Eqv1}, shown at the top of the next page,  
\begin{figure*}[!t]
\begin{subequations}\label{P1-sub1-Eqv1}
	\begin{align}
	& \hspace{-1.4cm}\underset{\substack{\{\mathbf W_{i,j} \succeq \mathbf 0\}, \{\tau_j\}, \{\rho_k\}, \\ \{e_k\}, i,k\in\mathcal K, j\in\{1,2,3\}}}{\max} \hspace{3mm}\sum_{k=1}^K\tau_2\log_2\left( 1 + \frac{\frac{{\rm tr}\left(\mathbf a_{k,k,2}\mathbf a_{k,k,2}^H \mathbf W_{k,2}\right)}{\tau_2} }{\frac{\sum_{i=1,i\neq k}^K{\rm tr}\left(\mathbf a_{i,k,2}\mathbf a_{i,k,2}^H \mathbf W_{i,2}\right)}{\tau_2} + \tilde\sigma_k^2 + \frac{e_k\hat\sigma_k^2}{\tau_2}}\right) \nonumber\\
	& \hspace{1.6cm} + \sum_{k=1}^K\tau_3\log_2\left( 1 + \frac{ \frac{{\rm tr}\left(\mathbf a_{k,k,3}\mathbf a_{k,k,3}^H \mathbf W_{k,3}\right)}{\tau_3} }{\frac{\sum_{i=1,i\neq k}^K{\rm tr}\left(\mathbf a_{i,k,3}\mathbf a_{i,k,3}^H \mathbf W_{i,3}\right)}{\tau_3} + \sigma_k^2}\right)  \label{P1-sub1-Eqv1_obj}\\
	\text{s.t.} \hspace{3mm} & \frac{e_k}{\tau_2} \geq \frac{1}{\rho_k}, \ \forall k\in\mathcal K, \label{P1-sub1-Eqv1_cons:b}\\
	& \zeta\sum_{i = 1}^K{\rm tr}\left(\mathbf a_{i,k,1}\mathbf a_{i,k,1}^H\mathbf W_{i,1}\right) + \zeta\left(1 - \rho_k\right)\sum_{i = 1}^K{\rm tr}\left(\mathbf a_{i,k,2}\mathbf a_{i,k,2}^H\mathbf W_{i,2}\right) \geq E_k, \ \forall k \in\mathcal K, \label{P1-sub1-Eqv1_cons:c}\\
	&  {\rm tr}\left(\mathbf W_{i,j}\right) \leq \tau_jP_i, \ \forall i\in\mathcal K, j\in\{1,2,3\}, \label{P1-sub1-Eqv1_cons:d}\\ 
	&  \eqref{P1_cons:d}, \eqref{P1_cons:e},
	\end{align}
\end{subequations}\hrulefill
\end{figure*}
where constraint \eqref{P1-sub1-Eqv1_cons:b} is obtained by replacing the equality signs in  $\frac{e_k}{\tau_2} = \frac{1}{\rho_k}$, $\forall k\in\mathcal K$, with inequality signs as it is active at the optimal solution to problem \eqref{P1-sub1-Eqv1},  since otherwise we can further improve the objective function by decreasing $e_k$. Next, we focus on solving the non-convex problem \eqref{P1-sub1-Eqv1}, whose non-convexity stems from the objective function and the constraints in \eqref{P1-sub1-Eqv1_cons:b} and \eqref{P1-sub1-Eqv1_cons:c}. Before tackling the non-concave objective function, we first rewrite it as 
$\sum_{k=1}^K\left(f_{k,2} - g_{k,2} + f_{k,3} - g_{k,3}\right) \triangleq R_{\rm sum}^{\rm Hy}$, where 
\begin{align}
& f_{k,2} = \tau_2\log_2\left(\frac{\sum\limits_{i=1}^K{\rm tr}\left(\mathbf a_{i,k,2}\mathbf a_{i,k,2}^H \mathbf W_{i,2}\right)}{\tau_2} + \frac{e_k\hat\sigma_k^2}{\tau_2} + \tilde\sigma_k^2 \right),\\
& g_{k,2} \nonumber\\
& = \tau_2\log_2\left(\frac{\sum\limits_{i=1,i\neq k}^K{\rm tr}\left(\mathbf a_{i,k,2}\mathbf a_{i,k,2}^H \mathbf W_{i,2}\right)}{\tau_2} + \frac{e_k\hat\sigma_k^2}{\tau_2} + \tilde\sigma_k^2 \right),\\
& f_{k,3} = \tau_3\log_2\left(\frac{\sum_{i=1}^K{\rm tr}\left(\mathbf a_{i,k,3}\mathbf a_{i,k,3}^H \mathbf W_{i,3}\right)}{\tau_3} + \sigma_k^2 \right),\\
& g_{k,3} = \tau_3\log_2\left(\frac{\sum_{i=1,i\neq k}^K{\rm tr}\left(\mathbf a_{i,k,3}\mathbf a_{i,k,3}^H \mathbf W_{i,3}\right)}{\tau_3} + \sigma_k^2 \right). 
\end{align}
It can be easily verified that the Hessian matrices of the above four functions are all negative semidefinite, i.e., $f_{k,2}$, $g_{k,2}$, $f_{k,3}$, and $g_{k,3}$ are all jointly concave with respect to (w.r.t.) their corresponding variables. Clearly, the concavity of $g_{k,2}$ and $g_{k,3}$ leads to the non-concavity of $R_{\rm sum}^{\rm Hy}$, which, however, paves the way for the application of the iterative SCA technique \cite{2020_Qingqing_SWIPT_letter}. Specifically, by using the fact that the first-order Taylor expansion of any concave function at any point is its global upper bound, we arrive at \eqref{Taylor_expansion_1} and \eqref{Taylor_expansion_2}, shown at the top of the next page,  
\begin{figure*}[!t]
\begin{align}
g_{k,2}\left(\mathbf W_2, e_k, \tau_2 \right) & \leq \tau_2^t\log_2(\Psi_{k,2}^t) + \frac{\sum\nolimits_{i=1,i\neq k}^K{\rm tr}\left(\mathbf a_{i,k,2}\mathbf a_{i,k,2}^H\left(\mathbf W_{i,2} - \mathbf W_{i,2}^t\right)\right)}{\Psi_{k,2}^t\ln2} 
+ \frac{\hat \sigma_k^2\left(e_k - e_k^t \right)}{\Psi_{k,2}^t\ln2} \nonumber \\
& + \left(\log_2(\Psi_{k,2}^t) - \frac{\Psi_{k,2}^t - \tilde\sigma_k^2}{\Psi_{k,2}^t\ln2}\right) \left(\tau_2 - \tau_2^t \right) \triangleq g_{k,2}^{\rm ub, \it t}\left(\mathbf W_2, e_k, \tau_2 \right), \ \forall k\in\mathcal K, \label{Taylor_expansion_1}\\
g_{k,3}\left(\mathbf W_3, \tau_3 \right) & \leq \tau_3^t\log_2(\Psi_{k,3}^t) + \frac{\sum\nolimits_{i=1,i\neq k}^K{\rm tr}\left(\mathbf a_{i,k,3}\mathbf a_{i,k,3}^H\left(\mathbf W_{i,3} - \mathbf W_{i,3}^t\right)\right)}{\Psi_{k,3}^t\ln2} \nonumber\\
& + \left(\log_2(\Psi_{k,3}^t) - \frac{\Psi_{k,3}^t - \sigma_k^2}{\Psi_{k,3}^t\ln2}\right)  \left(\tau_3 - \tau_3^t \right) \triangleq g_{k,3}^{\rm ub, \it t}\left(\mathbf W_3, \tau_3 \right), \  \forall k\in\mathcal K, \label{Taylor_expansion_2}
\end{align}\hrulefill
\end{figure*}
where $\mathbf W_j  = \{\mathbf W_{i,j}\}_{i\in\mathcal K\backslash\{k\}}$, $j \in \{2,3\}$, $\Psi_{k,2}^t = \frac{\sum_{i=1,i\neq k}^K{\rm tr}\left(\mathbf a_{i,k,2}\mathbf a_{i,k,2}^H \mathbf W_{i,2}^t\right)}{\tau_2^t} + \frac{e_k^t\hat\sigma_k^2}{\tau_2^t} + \tilde\sigma_k^2$, and $\Psi_{k,3}^t = \frac{\sum_{i=1,i\neq k}^K{\rm tr}\left(\mathbf a_{i,k,3}\mathbf a_{i,k,3}^H \mathbf W_{i,3}^t\right)}{\tau_3^t} + \sigma_k^2$. Besides, $\mathbf W_{i,j}^t$, $e_k^t$, $\rho_k^t$, and $\tau_j^t$ are the given local points in the $t$-th iteration. Accordingly, the objective function of problem \eqref{P1-sub1-Eqv1}, $R_{\rm sum}^{\rm Hy}$, is lower bounded by $\sum_{k=1}^K\left(f_{k,2} - g_{k,2}^{\rm ub, \it t}\left(\mathbf W_2, e_k, \tau_2 \right) + f_{k,3} - g_{k,3}^{\rm ub, \it t}\left(\mathbf W_3, \tau_3 \right)\right)$, which is a concave function.

To proceed, we deal with the non-convex constraints \eqref{P1-sub1-Eqv1_cons:b} and \eqref{P1-sub1-Eqv1_cons:c}. First, we note that \eqref{P1-sub1-Eqv1_cons:b} is equivalent to $\tau_2 \leq e_k\rho_k = \frac{1}{2}\left(e_k + \rho_k \right)^2 - \frac{1}{2}\left(e_k^2 + \rho_k^2 \right)$, $\forall k\in\mathcal K$. As the term $\frac{1}{2}\left(e_k + \rho_k \right)^2$ is jointly convex w.r.t. $e_k$ and $\rho_k$, it can be bounded from below by its first-order Taylor expansion at the given local points $e_k^t$ and $\rho_k^t$ in the $t$-th iteration, i.e., 
\begin{align}\label{Taylor_expansion_3}
\frac{1}{2}\left(e_k + \rho_k \right)^2 & \geq - \frac{1}{2}\left(e_k^t + \rho_k^t \right)^2 + \left(e_k^t + \rho_k^t \right)\left(e_k + \rho_k \right) \nonumber\\
& \triangleq \chi^{\rm lb, \it t}\left(e_k,\rho_k\right), \  \forall k\in\mathcal K.
\end{align}
Subsequently, a convex subset of non-convex constraint \eqref{P1-sub1-Eqv1_cons:b} is given by
\begin{align}\label{P1-sub1-Eqv1_cons:b_sca}
\tau_2 \leq \chi^{\rm lb, \it t}\left(e_k,\rho_k\right) - \frac{1}{2}\left(e_k^2 + \rho_k^2 \right), \ \forall k\in\mathcal K.
\end{align}
Second, to handle constraint \eqref{P1-sub1-Eqv1_cons:c}, we start by introducing slack variables $\{z_k\}$ and transforming \eqref{P1-sub1-Eqv1_cons:c} into 
\begin{subequations}\label{P1-sub1-Eqv1_cons:c_eqv}
\begin{align}
\zeta\sum_{i = 1}^K{\rm tr}\left(\mathbf a_{i,k,1}\mathbf a_{i,k,1}^H\mathbf W_{i,1}\right) + z_k^2 \geq E_k, \ \forall k \in\mathcal K, \label{P1-sub1-Eqv1_cons:c_1}\\
\frac{z_k^2}{1 - \rho_k} \leq \zeta\sum_{i = 1}^K{\rm tr}\left(\mathbf a_{i,k,2}\mathbf a_{i,k,2}^H\mathbf W_{i,2}\right), \ \forall k\in\mathcal K. \label{P1-sub1-Eqv1_cons:c_2}
\end{align}
\end{subequations}
It is not hard to see that if any constraint in \eqref{P1-sub1-Eqv1_cons:c_2} holds with strict inequality at the optimum, the corresponding $z_k^2$ can be increased to force this constraint hold with strict equality, yet without violating the constraints in \eqref{P1-sub1-Eqv1_cons:c_1} or decreasing the objective value. This verifies the equivalence between \eqref{P1-sub1-Eqv1_cons:c} and \eqref{P1-sub1-Eqv1_cons:c_eqv}. Notice that constraint \eqref{P1-sub1-Eqv1_cons:c_2} is convex while constraint \eqref{P1-sub1-Eqv1_cons:c_1} is not due to the convex term $z_k^2$. Similar to \eqref{P1-sub1-Eqv1_cons:b_sca}, we replace $z_k^2$ with its first-order Taylor expansion-based underestimator that yields the following convex subset of constraint \eqref{P1-sub1-Eqv1_cons:c_1}: \looseness=-1 
{
\begin{align}\label{P1-sub1-Eqv1_cons:c_1_sca}
\zeta\sum_{i = 1}^K{\rm tr}\left(\mathbf a_{i,k,1}\mathbf a_{i,k,1}^H\mathbf W_{i,1}\right) - \left(z_k^t\right)^2 + 2z_k^tz_k \geq E_k, \ \forall k \in\mathcal K,
\end{align}}%
where $z_k^t$ is the given local point in the $t$-th iteration. 

Combining the above results, a convex semidefinite program (SDP) for the non-convex problem \eqref{P1-sub1-Eqv1} in the $t$-th iteration is given by 
\begin{subequations}\label{P1-sub1-Eqv1-SCA}
	\begin{eqnarray}
	&\underset{\Delta}{\max}& \sum_{k=1}^K\left(f_{k,2} - g_{k,2}^{\rm ub, \it t}\left(\mathbf W_2, e_k, \tau_2 \right) + f_{k,3}\right. \nonumber\\
	&& \left. - g_{k,3}^{\rm ub, \it t}\left(\mathbf W_3, \tau_3 \right)\right) \\
	&\text{s.t.}& \eqref{P1-sub1-Eqv1_cons:b_sca}, \eqref{P1-sub1-Eqv1_cons:c_1_sca}, \eqref{P1-sub1-Eqv1_cons:c_2}, \eqref{P1-sub1-Eqv1_cons:d}, \eqref{P1_cons:d}, \eqref{P1_cons:e}, 
	\end{eqnarray}
\end{subequations}
where $\Delta \triangleq \left\lbrace \{\mathbf W_{i,j} \succeq \mathbf 0\}, \{\tau_j\}, \{\rho_k\}, \{e_k\}, \{z_k\}\right\rbrace$. Problem \eqref{P1-sub1-Eqv1-SCA} can be optimally solved by existing solvers such as CVX \cite{2004_S.Boyd_cvx}. It is worth noting that the solution of problem \eqref{P1-sub1-Eqv1-SCA} is also feasible for problem \eqref{P1-sub1-Eqv1}, but the reverse does not hold in general. Thus, we can obtain a lower bound on the optimal value of problem \eqref{P1-sub1-Eqv1} by solving problem \eqref{P1-sub1-Eqv1-SCA}. Besides, for problem \eqref{P1-sub1-Eqv1-SCA}, we have the following proposition. 
\begin{prop}\label{prop1}
	Suppose that problem \eqref{P1-sub1-Eqv1-SCA} is feasible for $P_i > 0$ and $E_k > 0$, $\forall i,k\in\mathcal K$, then its optimal solution, denoted by $\left\lbrace \{\mathbf W_{i,j}^\star\}, \{\tau_j^\star\}, \{\rho_k^\star\}, \{e_k^\star\}, \{z_k^\star\}\right\rbrace$, satisfies the following conditions:\\
	C1: ${\rm rank}\left(\mathbf W_{i,1}^\star\right) = 1$ if $\tau_1^{\star} > 0$ and $\{\mathbf a_{i,k,1}\}_{\forall k\in\mathcal K}$ are independently distributed, $\forall i\in\mathcal K$; \\
	C2: ${\rm rank}\left(\mathbf W_{i,2}^\star\right) = 1$ if $\tau_2^{\star} > 0$, ${\rm tr}\left(\mathbf W_{i,2}^{\star}\right) = \tau_2^{\star}P_i$, and $\{\mathbf a_{i,k,j}\}_{\forall k\in\mathcal K, j\in\{1,2\}}$ are independently distributed, $\forall i \in\mathcal K$;\\
	C3: ${\rm rank}\left(\mathbf W_{i,3}^\star\right) = 1$ if $\tau_3^{\star} > 0$, ${\rm tr}\left(\mathbf W_{i,3}^{\star}\right) = \tau_3^{\star}P_i$, and $\{\mathbf a_{i,k,3}\}_{\forall k\in\mathcal K}$ are independently distributed, $\forall i \in\mathcal K$. 
\end{prop}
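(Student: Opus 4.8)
The plan is to prove Proposition~\ref{prop1} from the Karush--Kuhn--Tucker (KKT) conditions of the convex SDP~\eqref{P1-sub1-Eqv1-SCA}. Under the stated feasibility together with $P_i>0$ and $E_k>0$, Slater's condition holds, so these conditions are necessary and sufficient. First I would attach multipliers $\lambda_k\ge 0$ to the slot-$1$ energy constraints~\eqref{P1-sub1-Eqv1_cons:c_1_sca}, $\eta_k\ge 0$ to~\eqref{P1-sub1-Eqv1_cons:c_2}, $\mu_{i,j}\ge 0$ to the power budgets in~\eqref{P1-sub1-Eqv1_cons:d}, and PSD matrices $\mathbf Y_{i,j}\succeq\mathbf 0$ to $\mathbf W_{i,j}\succeq\mathbf 0$, and then differentiate the Lagrangian with respect to each $\mathbf W_{i,j}$. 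Because $\mathbf W_{i,1}$ appears only in~\eqref{P1-sub1-Eqv1_cons:c_1_sca} and in the power budget, stationarity yields
\begin{equation*}
\mathbf Y_{i,1}=\mu_{i,1}\mathbf I-\zeta\sum\nolimits_{k\in\mathcal K}\lambda_k\,\mathbf a_{i,k,1}\mathbf a_{i,k,1}^H ,
\end{equation*}
i.e., a positive-definite shift minus a \emph{nonnegative} Gram combination of the rank-one matrices $\mathbf a_{i,k,1}\mathbf a_{i,k,1}^H$. For $j\in\{2,3\}$ the matrix $\mathbf W_{i,j}$ additionally enters the concave objective through $f_{k,j}$ and, with a negative sign, the linearized self-interference terms $g_{k,j}^{\mathrm{ub},t}$ (and, for $j=2$, constraint~\eqref{P1-sub1-Eqv1_cons:c_2}), so stationarity gives $\mathbf Y_{i,j}=\mu_{i,j}\mathbf I-\sum_{k\in\mathcal K}\omega_{i,k,j}\,\mathbf a_{i,k,j}\mathbf a_{i,k,j}^H$, where $\omega_{i,i,j}>0$ but the interfering coefficients $\omega_{i,k,j}$, $k\neq i$, need not be sign-definite.

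The heart of the argument is the standard rank reduction: complementary slackness $\mathbf Y_{i,j}\mathbf W_{i,j}^\star=\mathbf 0$ forces ${\rm range}(\mathbf W_{i,j}^\star)\subseteq\ker(\mathbf Y_{i,j})$, so it suffices to prove $\dim\ker(\mathbf Y_{i,j})\le 1$ together with $\mathbf W_{i,j}^\star\neq\mathbf 0$. To obtain the first part, I would (a) establish that the shift is genuinely positive, $\mu_{i,j}>0$: for $j\in\{2,3\}$ this is exactly where the hypothesis ${\rm tr}(\mathbf W_{i,j}^\star)=\tau_j^\star P_i>0$ is used, since a binding power budget whose relaxation would strictly increase the objective has $\mu_{i,j}>0$; for $j=1$ one invokes $\tau_1^\star>0$, which precludes slot~$1$ from being idle, so that $\mathbf W_{i,1}^\star\neq\mathbf 0$ and $\mathbf Y_{i,1}$ has a nontrivial kernel, pinning $\mu_{i,1}$ to the top eigenvalue of the Gram combination. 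With $\mu_{i,j}>0$ one shows $\ker(\mathbf Y_{i,j})$ equals the eigenspace of $\sum_k\omega_{i,k,j}\mathbf a_{i,k,j}\mathbf a_{i,k,j}^H$ associated with its largest eigenvalue $\mu_{i,j}$; then (b) invoke the hypothesis that the relevant vectors are independently distributed --- hence almost surely in general position --- to conclude that this largest eigenvalue is simple, so $\dim\ker(\mathbf Y_{i,j})=1$ and ${\rm rank}(\mathbf W_{i,j}^\star)\le 1$. Finally, $\mathbf W_{i,j}^\star\neq\mathbf 0$ follows for $j\in\{2,3\}$ directly from ${\rm tr}(\mathbf W_{i,j}^\star)=\tau_j^\star P_i>0$ and for $j=1$ from $\tau_1^\star>0$ (an otherwise idle slot could be traded for slot~$2$ or~$3$), giving ${\rm rank}(\mathbf W_{i,j}^\star)=1$. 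The stronger general-position requirement in~C2 --- over the vectors of slots~$1$ \emph{and}~$2$ jointly --- arises because the multipliers $\eta_k$ couple $\mathbf W_{i,2}$ to the slot-$1$ energy terms via the slack variables $z_k$ and~\eqref{P1-sub1-Eqv1_cons:c_1}, so the admissible perturbations ruling out ${\rm rank}>1$ must stay feasible for those constraints as well.

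The step I expect to be the main obstacle is justifying that the largest eigenvalue of $\sum_k\omega_{i,k,j}\mathbf a_{i,k,j}\mathbf a_{i,k,j}^H$ is simple. Unlike textbook single-constraint problems, the interfering coefficients $\omega_{i,k,j}$ ($k\neq i$) are differences of an objective-gradient term and a linearization term, so the Gram combination need not be PSD and one cannot simply quote a ``$\mathbf C\succ\mathbf 0$ plus rank-one constraints'' lemma; moreover, even in the PSD case ($j=1$) the weights are pinned down by the optimization rather than free, so ``generic channels $\Rightarrow$ simple top eigenvalue'' needs a careful statement. I would make this rigorous either by (i) a perturbation/exchange argument: assuming ${\rm rank}(\mathbf W_{i,j}^\star)\ge 2$, move $\mathbf W_{i,j}^\star$ within the two-dimensional eigenspace so as to keep all active constraints satisfied while strictly increasing the objective (or strictly relaxing the power budget), contradicting optimality --- mirroring the rank-reduction arguments of~\cite{2014_Chao_IFC,2020_Qingqing_SWIPT_letter}; or by (ii) an exceptional-set argument showing that for channel realizations outside a measure-zero set no nonnegatively weighted Gram combination of the $\{\mathbf a_{i,k,j}\}$ supporting a feasible $\mathbf W_{i,j}^\star$ can have a repeated top eigenvalue. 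The degenerate cases ($\mu_{i,j}=0$, or all energy multipliers vanishing) would be treated separately and are precisely what the hypotheses $\tau_j^\star>0$ and, for $j\in\{2,3\}$, the power-tightness condition rule out.
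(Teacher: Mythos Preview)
Your proposal is essentially the paper's proof: write the Lagrangian of the convex SDP, use complementary slackness $\mathbf Y_{i,j}\mathbf W_{i,j}^\star=\mathbf 0$, identify $\mathbf Y_{i,j}=\mu_{i,j}\mathbf I-\sum_k\omega_{i,k,j}\mathbf a_{i,k,j}\mathbf a_{i,k,j}^H$ from stationarity, argue $\mu_{i,j}>0$, conclude that $\ker(\mathbf Y_{i,j})$ is the top eigenspace of the Gram combination and is one-dimensional under the independence hypothesis, and finish with $\mathbf W_{i,j}^\star\neq\mathbf 0$. Two small points where the paper's route is cleaner than yours:

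\emph{(i) Positivity of $\mu_{i,1}$.} The paper first observes, before writing any KKT equations, that the slot-1 power budgets in~\eqref{P1-sub1-Eqv1_cons:d} are necessarily active, i.e., ${\rm tr}(\mathbf W_{i,1}^\star)=\tau_1^\star P_i$ for all $i$ (because $\mathbf W_{i,1}$ enters only the EH constraint~\eqref{P1-sub1-Eqv1_cons:c_1_sca} and the power budget, and raising it never hurts). From this tightness plus $\tau_1^\star>0$ it immediately gets $\mu_{i,1}>0$. Your argument ``$\tau_1^\star>0$ precludes slot~1 from being idle, so $\mathbf W_{i,1}^\star\neq\mathbf 0$, so $\mathbf Y_{i,1}$ has a nontrivial kernel, pinning $\mu_{i,1}$ to the top eigenvalue'' is a bit circular and does not by itself yield $\mu_{i,1}>0$; the paper's preliminary full-power observation is the missing ingredient.

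\emph{(ii) The stronger hypothesis in C2.} Your explanation that the $z_k$-coupling drags the slot-1 vectors into the stationarity condition for $\mathbf W_{i,2}$ is not correct: differentiating the Lagrangian with respect to $\mathbf W_{i,2}$ touches only~\eqref{P1-sub1-Eqv1_cons:c_2}, the objective terms, and the power budget, all of which involve $\mathbf a_{i,k,2}$ alone. The paper does not actually justify why C2 is phrased over $j\in\{1,2\}$; it proves only C1 in detail and states the others are similar. So your extra rationale is unneeded (and mis-aimed).

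Finally, the ``main obstacle'' you flag --- simplicity of the top eigenvalue --- is handled by the paper exactly via your option~(ii): the independent-channel hypothesis is simply asserted to make a repeated maximum eigenvalue a measure-zero event, with no perturbation/exchange argument. Your option~(i) would work but is more than what is needed here.
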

\begin{proof}
	Please refer to the appendix.
\end{proof}	
\begin{rem}\label{rem1}
    It is worth noting that the above conditions of independently distributed $\{\mathbf a_{i,k,1}\}_{\forall k\in\mathcal K}$, $\{\mathbf a_{i,k,j}\}_{\forall k\in\mathcal K, j\in\{1,2\}}$, and $\{\mathbf a_{i,k,3}\}_{\forall k\in\mathcal K}$ are sufficient but unnecessary. For example, the independently distributed $\{\mathbf a_{i,k,1}\}_{\forall k\in\mathcal K}$ guarantees that the dominant eigenvalue of the matrix $\mathbf \Upsilon_{i,1}$ defined in the appendix has multiplicity $1$, but not vice versa. Actually, in almost all of our simulations, if $\tau_j^{\star} > 0$ for $j\in\{1,2,3\}$, then Tx $i$, $i\in\mathcal K$, employs either full or zero power and we almost always have ${\rm rank}\left(\mathbf W_{i,j}^\star\right) \leq 1$. How to theoretically characterize the optimal solution structure under looser conditions deserves further investigation. It is also worth mentioning that the reveal of the rank properties of $\{\mathbf W_{i,j}^\star\}$ in Proposition \ref{prop1} aims to offer important engineering insights, whereas the ranks of $\{\mathbf W_{i,j}^\star\}$ have no impact on the convergence of the proposed algorithm. 
\end{rem}

After obtaining the optimal solution, we can recover $\{\mathbf S_{i,j}^{\star}\}$ by setting $\mathbf S_{i,j}^{\star} = \frac{\mathbf W_{i,j}^{\star}}{\tau_j^{\star}}$ if $\tau_j^{\star} > 0$ and $\mathbf S_{i,j}^{\star} = \mathbf 0$ otherwise, $\forall i\in\mathcal K$, $j\in\{1,2,3\}$. 

\subsection{Optimizing $\left\lbrace \mathbf v_j \right\rbrace$ for Given $\left\lbrace \{\mathbf S_{i,j}\}, \{\tau_j\}, \{\rho_k\}\right\rbrace$} \label{Sec_P1_sub2}
For given $\left\lbrace \{\mathbf S_{i,j}\}, \{\tau_j\}, \{\rho_k\}\right\rbrace$, the subproblem of (P1) for only optimizing $\{\mathbf v_j\}$ can be expressed as 
\begin{subequations}\label{P1-sub2}
	\begin{align}
    & \hspace{-8mm}\underset{\{\mathbf v_j\}, j\in\{1,2,3\}}{\max} \hspace{2mm} \sum_{k=1}^K\Big[ \tau_2\log_2\left(1 + \bar\gamma_{k,2} \right) + \tau_3\log_2\left( 1 + \bar\gamma_{k,3}\right)\Big] \\
    \text{s.t.} \hspace{3mm}&\zeta\tau_1\sum_{i = 1}^K\mathbf v_1^H\mathbf A_{i,k,1}\mathbf v_1 + \zeta\tau_2\left(1 - \rho_k\right)\sum_{i = 1}^K\mathbf v_2^H\mathbf A_{i,k,2}\mathbf v_2 \nonumber\\
    & \geq E_k, \ \forall k \in\mathcal K, \label{P1_sub2_cons:b}\\
	& \eqref{P1_cons:f},
	\end{align}
\end{subequations} 
where 
\begin{align}
\bar\gamma_{k,2} & = \frac{\mathbf v_2^H\mathbf A_{k,k,2}\mathbf v_2}{\sum_{i=1,i\neq k}^K\mathbf v_2^H\mathbf A_{i,k,2}\mathbf v_2 + \tilde\sigma_k^2 + \frac{\hat\sigma_k^2}{\rho_k}}, \\
\bar\gamma_{k,3} & = \frac{\mathbf v_3^H\mathbf A_{k,k,3}\mathbf v_3}{\sum_{i=1,i\neq k}^K\mathbf v_3^H\mathbf A_{i,k,3}\mathbf v_3 + \sigma_k^2}, 
\end{align} 
with $\mathbf A_{i,k,j} = \mathbf H_{i,k}\mathbf S_{i,j}\mathbf H_{i,k}^H$, $i,k\in\mathcal K$, $j\in\{1,2,3\}$. Observe that problem \eqref{P1-sub2} is non-convex with a non-concave objective function and a non-convex feasible set. To facilitate the solution development of problem \eqref{P1-sub2}, we reformulate it by introducing slack variables $\{\mu_{k,2}\}$ and  $\{\mu_{k,3}\}$, $k\in\mathcal K$ as 
\begin{subequations}\label{P1-sub2-Eqv1}
	\begin{align}
    &\hspace{-7mm}\underset{\substack{\{\mathbf v_j\}, \{\mu_{k,2}\}, \{\mu_{k,3}\}, \\ j\in\{1,2,3\}, k\in\mathcal K}}{\max} \hspace{2mm} \sum_{k=1}^K\Big[\tau_2\log_2\left( 1 + \mu_{k,2}\right) \nonumber\\
    & \hspace{2cm} + \tau_3\log_2\left( 1 + \mu_{k,3}\right)\Big]  \\
    \text{s.t.} \hspace{3mm} & \frac{\mathbf v_2^H\mathbf A_{k,k,2}\mathbf v_2}{\mu_{k,2}} \geq \sum_{i=1,i\neq k}^K\mathbf v_2^H\mathbf A_{i,k,2}\mathbf v_2 + \tilde\sigma_k^2 + \frac{\hat\sigma_k^2}{\rho_k}, \nonumber\\
	& \forall k\in\mathcal K, \label{P1-sub2-Eqv1_cons:b}\\
	& \frac{\mathbf v_3^H\mathbf A_{k,k,3}\mathbf v_3}{\mu_{k,3}} \geq \sum_{i=1,i\neq k}^K\mathbf v_3^H\mathbf A_{i,k,3}\mathbf v_3 + \sigma_k^2, \ \forall k\in\mathcal K, \label{P1-sub2-Eqv1_cons:c}\\
	& \eqref{P1_sub2_cons:b}, \eqref{P1_cons:f}.
	\end{align}
\end{subequations}
The equivalence between problems \eqref{P1-sub2} and \eqref{P1-sub2-Eqv1} is guaranteed by the fact that all the constraints in \eqref{P1-sub2-Eqv1_cons:b} and \eqref{P1-sub2-Eqv1_cons:c} are active when the optimality of problem \eqref{P1-sub2-Eqv1} is attained. By such a reformulation, the objective function is now concave, but with new non-convex constraints \eqref{P1-sub2-Eqv1_cons:b} and \eqref{P1-sub2-Eqv1_cons:c} in addition to the original non-convex constraint \eqref{P1_sub2_cons:b}. Similarly, in the following, we tackle the non-convexity of these constraints based on the SCA technique as in the previous subsection. To this end, we define $\mathcal F_{\mathbf B}(\mathbf x, y) \triangleq \frac{\mathbf x^H\mathbf B\mathbf x}{y}$ and $\mathcal G_{\mathbf B}(\mathbf x) \triangleq \mathbf x^H\mathbf B\mathbf x$, where $\mathbf x\in\mathbb C^{M\times1}$, $y>0$ and $\mathbf B\succeq \mathbf 0$. It is not difficult to verify that $\mathcal F_{\mathbf B}(\mathbf x, y)$ and $\mathcal G_{\mathbf B}(\mathbf x)$ are both convex functions. Thus, based on the first-order Taylor expansion, their affine lower bounds can be established, denoted by $\mathcal F^{\rm lb, \it t}_{\mathbf B}(\mathbf x, y) \triangleq \frac{2{\rm Re}\left\lbrace \left( \mathbf x^t\right) ^H\mathbf B\mathbf x\right\rbrace }{y^t} - \frac{\left( \mathbf x^t\right) ^H\mathbf B\mathbf x^t}{\left( y^t\right)^2}y$ and $\mathcal G^{\rm lb, \it t}_{\mathbf B}(\mathbf x) \triangleq 2{\rm Re}\left\lbrace \left( \mathbf x^t\right) ^H\mathbf B\mathbf x\right\rbrace  - \left( \mathbf x^t\right) ^H\mathbf B\mathbf x^t$, respectively, with $\mathbf x^t\in\mathbb C^{M\times1} $ and $y^t > 0$ being any given points. Accordingly, given the local feasible points $\{\mathbf v_j^t\}$ and $\{\mu_{k,j}^t\}$ in the $t$-th iteration, we can approximate \eqref{P1-sub2-Eqv1_cons:b}, \eqref{P1-sub2-Eqv1_cons:c}, and \eqref{P1_sub2_cons:b} as 
\begin{align}
& \mathcal F^{\rm lb, \it t}_{\mathbf A_{k,k,2}}(\mathbf v_2, \mu_{k,2}) \geq \sum_{i=1,i\neq k}^K\mathbf v_2^H\mathbf A_{i,k,2}\mathbf v_2 + \tilde\sigma_k^2 + \frac{\hat\sigma_k^2}{\rho_k}, \nonumber\\
& \forall k\in\mathcal K, \label{P1-sub2-Eqv1_cons:b_sca}\\
& \mathcal F^{\rm lb, \it t}_{\mathbf A_{k,k,3}}(\mathbf v_3, \mu_{k,3}) \geq \sum_{i=1,i\neq k}^K\mathbf v_3^H\mathbf A_{i,k,3}\mathbf v_3 + \sigma_k^2, \ \forall k\in\mathcal K, \label{P1-sub2-Eqv1_cons:c_sca}\\
& \zeta\tau_1\sum_{i = 1}^K\mathcal G^{\rm lb, \it t}_{\mathbf A_{i,k,1}}(\mathbf v_1) + \zeta\tau_2\left(1 - \rho_k\right)\sum_{i = 1}^K\mathcal G^{\rm lb, \it t}_{\mathbf A_{i,k,2}}(\mathbf v_2)\geq E_k, \nonumber\\
& \forall k \in\mathcal K, \label{P1-sub2_cons:b_sca}
\end{align}
respectively, which are obtained by replacing all the convex terms in the left-hand-sides (LHSs) of \eqref{P1-sub2-Eqv1_cons:b}, \eqref{P1-sub2-Eqv1_cons:c}, and \eqref{P1_sub2_cons:b} with their respective lower bounds. 

Consequently, in the $t$-th iteration, a lower bound objective value of problem \eqref{P1-sub2-Eqv1} can be obtained by solving 
\begin{subequations}\label{P1-sub2-Eqv1-SCA}
	\begin{align}
    \underset{\substack{\{\mathbf v_j\}, \{\mu_{k,2}\}, \{\mu_{k,3}\}, \\ j\in\{1,2,3\}, k\in\mathcal K}}{\max} \hspace{2mm}& \sum_{k=1}^K\Big[\tau_2\log_2\left( 1 + \mu_{k,2}\right) \nonumber\\
	& + \tau_3\log_2\left( 1 + \mu_{k,3}\right)\Big]   \\
	\text{s.t.} \hspace{3mm}& \eqref{P1-sub2-Eqv1_cons:b_sca}-\eqref{P1-sub2_cons:b_sca}, \eqref{P1_cons:f}.
	\end{align}
\end{subequations}
Note that problem \eqref{P1-sub2-Eqv1-SCA} is a convex quadratically constrained quadratic program (QCQP) and its numerical solution can be found by off-the-shelf solvers, e.g., CVX \cite{2004_S.Boyd_cvx}.  

\subsection{Overall Algorithm} \label{Sec_P1_alg}
\begin{algorithm}[!tp]  
	\caption{AO algorithm for (P1)}  \label{Alg1}  
	\begin{algorithmic}[1]
		\STATE Set $t=0$. Initialize $\Delta^0$, $\{\mathbf v_j^0\}$, and $\{\mu_{k,j}^0\}$. 
		\REPEAT
		\STATE Obtain $\Delta^{t+1}$ by solving problem \eqref{P1-sub1-Eqv1-SCA} with given $\Delta^{t}$ and $\{\mathbf v_j^{t}\}$, and set $\mathbf S_{i,j}^{t+1} = {\mathbf W_{i,j}^{t+1}}/{\tau_j^{t+1}}$ if $\tau_j^{t+1} > 0$, $\forall i\in\mathcal K$, $j\in\{1,2,3\}$ and $\mathbf S_{i,j}^{t+1} = \mathbf 0$ otherwise. \label{Alg:solve_sub1}
		\STATE Obtain $\{\mathbf v_j^{t+1}\}$ and $\{\mu_{k,j}^{t+1}\}$ by solving problem \eqref{P1-sub2-Eqv1-SCA} with given $\left\lbrace \{\mathbf S_{i,j}^{t+1}\}, \{\tau_j^{t+1}\}, \{\rho_k^{t+1}\}\right\rbrace$, $\{\mathbf v_j^{t}\}$, and $\{\mu_{k,j}^t\}$. \label{Alg:solve_sub2}
		\STATE Set $t=t+1$.  \label{Alg:update_index}
		\UNTIL The fractional increase of the objective value of (P1) between two consecutive iterations is below a threshold $\epsilon > 0$. \label{Alg:stop}
	\end{algorithmic} 
\end{algorithm}

Using the above results, an overall algorithm for seeking a suboptimal solution to (P1) is provided in Algorithm \ref{Alg1}, where $\Delta^t \triangleq \left\lbrace \{\mathbf W_{i,j}^t\}, \{\tau_j^t\}, \{\rho_k^t\}, \{e_k^t\}, \{z_k^t\}\right\rbrace$. In the following, we provide a detailed convergence proof of Algorithm \ref{Alg1}. Firstly, define $R\left(\{\mathbf S_{i,j}\}, \{\tau_j\}, \{\rho_k\},\{\mathbf v_j\}\right)$, $r_1\left(\Delta\backslash\{z_k\}, \{\mathbf v_j\}\right)$, $r_1^{\rm lb}\left(\Delta, \{\mathbf v_j\}\right)$ as the objective values of problems (P1), \eqref{P1-sub1-Eqv1}, and \eqref{P1-sub1-Eqv1-SCA}, respectively. Then, in step \ref{Alg:solve_sub1} of Algorithm \ref{Alg1}, we have
\begin{align}\label{Alg1_proof_1}
& R\left(\{\mathbf S_{i,j}^t\}, \{\tau_j^t\}, \{\rho_k^t\},\{\mathbf v_j^t\}\right) \nonumber\\
& \overset{(a)}{=} r_1\left(\Delta^t\backslash\{z_k^t\}, \{\mathbf v_j^t\}\right) \nonumber\\
& \overset{(b)}{=} r_1^{\rm lb}\left(\Delta^t, \{\mathbf v_j^t\}\right) \nonumber\\
& \overset{(c)}{\leq} r_1^{\rm lb}\left(\Delta^{t+1}, \{\mathbf v_j^t\}\right) \nonumber\\ 
& \overset{(d)}{\leq} r_1\left(\Delta^{t+1}\backslash\{z_k^{t+1}\}, \{\mathbf v_j^t\}\right) \nonumber\\
& \overset{(e)}{=} R\left(\{\mathbf S_{i,j}^{t+1}\}, \{\tau_j^{t+1}\}, \{\rho_k^{t+1}\},\{\mathbf v_j^t\}\right), 
\end{align}
where $(a)$ and $(e)$ hold since problem \eqref{P1-sub1-Eqv1} is equivalent to problem \eqref{P1-sub1} (i.e., the first subproblem of (P1) under given $\{\mathbf v_j^t\}$); $(b)$ holds due to the fact that the first-order Taylor expansions in \eqref{Taylor_expansion_1} and \eqref{Taylor_expansion_2} are tight at given $\Delta^t$, respectively, which means that the objective value of problem \eqref{P1-sub1-Eqv1-SCA} at $\Delta^t$ is equal to that of problem \eqref{P1-sub1-Eqv1} at $\Delta^t\backslash\{z_k^t\}$; $(c)$ is true because $\Delta^{t+1}$ is the optimal solution of problem \eqref{P1-sub1-Eqv1-SCA} under given $\{\mathbf v_j^t\}$; $(d)$ holds since the optimal value of problem \eqref{P1-sub1-Eqv1-SCA} provides a lower bound for that of problem \eqref{P1-sub1-Eqv1}. Secondly, define $r_2\left(\{\mathbf S_{i,j}\}, \{\tau_j\}, \{\rho_k\},\{\mathbf v_j\},\{\mu_{k,j}\}\right)$ and $r_2^{\rm lb}\left(\{\mathbf S_{i,j}\}, \{\tau_j\}, \{\rho_k\},\{\mathbf v_j\},\{\mu_{k,j}\}\right)$ as the objective values of problems \eqref{P1-sub2-Eqv1} and \eqref{P1-sub2-Eqv1-SCA}, respectively. Then, in step \ref{Alg:solve_sub2} of Algorithm \ref{Alg1}, for reasons similar to those explained above for \eqref{Alg1_proof_1}, the following results hold: 
\begin{align}\label{Alg1_proof_2}
& R\left(\{\mathbf S_{i,j}^{t+1}\}, \{\tau_j^{t+1}\}, \{\rho_k^{t+1}\},\{\mathbf v_j^t\}\right) \nonumber\\
& = r_2\left(\{\mathbf S_{i,j}^{t+1}\}, \{\tau_j^{t+1}\}, \{\rho_k^{t+1}\},\{\mathbf v_j^t\},\{\mu_{k,j}^t\}\right) \nonumber\\ 
& = r_2^{\rm lb}\left(\{\mathbf S_{i,j}^{t+1}\}, \{\tau_j^{t+1}\}, \{\rho_k^{t+1}\},\{\mathbf v_j^t\},\{\mu_{k,j}^t\}\right) \nonumber\\
& \leq r_2^{\rm lb}\left(\{\mathbf S_{i,j}^{t+1}\}, \{\tau_j^{t+1}\}, \{\rho_k^{t+1}\},\{\mathbf v_j^{t+1}\},\{\mu_{k,j}^{t+1}\}\right)  \nonumber\\
& \leq r_2\left(\{\mathbf S_{i,j}^{t+1}\}, \{\tau_j^{t+1}\}, \{\rho_k^{t+1}\},\{\mathbf v_j^{t+1}\},\{\mu_{k,j}^{t+1}\}\right) \nonumber\\
& = R\left(\{\mathbf S_{i,j}^{t+1}\}, \{\tau_j^{t+1}\}, \{\rho_k^{t+1}\},\{\mathbf v_j^{t+1}\}\right). 
\end{align}
Based on \eqref{Alg1_proof_1} and \eqref{Alg1_proof_2}, we have
\begin{align}
& R\left(\{\mathbf S_{i,j}^t\}, \{\tau_j^t\}, \{\rho_k^t\},\{\mathbf v_j^t\}\right) \nonumber\\
& \leq R\left(\{\mathbf S_{i,j}^{t+1}\}, \{\tau_j^{t+1}\}, \{\rho_k^{t+1}\},\{\mathbf v_j^{t+1}\}\right),
\end{align}
which indicates that by repeating steps \ref{Alg:solve_sub1}-\ref{Alg:update_index} of Algorithm \ref{Alg1}, we can obtain a non-decreasing sequence of objective values of (P1). Additionally, the optimal value of (P1) is bounded from above. Thus, Algorithm \ref{Alg1} is guaranteed to converge.

In each iteration, the main computational complexity of Algorithm 1 lies in solving the SDP problem \eqref{P1-sub1-Eqv1-SCA} and the QCQP problem \eqref{P1-sub2-Eqv1-SCA}. According to the complexity analyses in \cite{2010_Imre_SDR_complexity} and \cite{2014_K.wang_complexity}, the computational cost of solving problem \eqref{P1-sub1-Eqv1-SCA} is $\mathcal O\left(\sqrt{M}\ln\left(\frac{1}{\varepsilon}\right) \left(KM^3+K^2M^2+K^3\right)\right)$ and that of solving problem \eqref{P1-sub2-Eqv1-SCA} is $\mathcal O\left(\sqrt{KN}\ln\left(\frac{1}{\varepsilon}\right)\left(KN^4 + K^2N^3 + K^3N^2\right) \right)$, with $\varepsilon > 0$ denoting the given solution accuracy. Therefore, the computational complexity of each iteration of Algorithm 1 is about $\mathcal O\Big[\ln\left(\frac{1}{\varepsilon}\right)\Big(  \sqrt{M}\left(KM^3+K^2M^2+K^3\right) +  \sqrt{KN}(KN^4 + K^2N^3 + K^3N^2)\Big)\Big]$.

\section{Proposed Algorithms for (P2) and (P2-D)}\label{Sec_P2_solution}
\subsection{Proposed Algorithm for (P2)}
Similar to (P1), we leverage the AO method to suboptimally solve (P2) by decomposing it into two subproblems, as detailed below. 

\subsubsection{Optimizing $\left\lbrace\{\mathbf S_{i,j}\}, \{\tau_j\} \right\rbrace$ for Given $\{\mathbf v_j\}$} \label{Sec_P2_sub1}
Given any feasible $\{\mathbf v_j\}$, $\left\lbrace\{\mathbf S_{i,j}\}, \{\tau_j\} \right\rbrace$ can be optimized by solving (P2) with the constraints in \eqref{P2_cons:b}-\eqref{P2_cons:d}. By comparing the objective functions of this subproblem and the first subproblem of (P1), we find that the former is similar to but much simpler than the latter. In view of this, instead of applying the change of variables at the very beginning as in Section \ref{Sec_P1_sub1}, we start by rewriting the objective function $\sum_{k=1}^K\tau_k\log_2\left( 1 + \gamma_k\right)$ in an equivalent form as $\sum_{k=1}^K\tau_k\left(o_k - q_k\right)$, where $o_k$ and $q_k$ are both concave functions given by $o_k = \log_2\left(\sum_{i=1}^K{\rm tr}\left(\mathbf b_{i,k}\mathbf b_{i,k}^H \mathbf S_{i,k}\right) + \sigma_k^2\right)$, and $q_k = \log_2\left(\sum_{i=1,i\neq k}^K{\rm tr}\left(\mathbf b_{i,k}\mathbf b_{i,k}^H \mathbf S_{i,k}\right) + \sigma_k^2\right)$, respectively, $\forall k\in\mathcal K$. In the $r$-th iteration, by applying the first-oder Taylor expansion at the given points $\{\mathbf S_{i,k}^r\}$, $i,k\in\mathcal K$ to $q_k$, we have
\begin{align}
q_k\left(\mathbf S_k\right) & \leq \log_2\left(\Lambda_k^r\right)  + \frac{\sum\nolimits_{i=1,i\neq k}^K{\rm tr}\left(\mathbf b_{i,k}\mathbf b_{i,k}^H\left(\mathbf S_{i,k} - \mathbf S_{i,k}^r\right) \right)}{\Lambda_k^r\ln2} \nonumber\\
& \triangleq q_k^{\rm ub, \it r}\left(\mathbf S_k\right), \ \forall k\in\mathcal K, 
\end{align}
where $\mathbf S_k = \{\mathbf S_{i,k}\}_{\forall i\in\mathcal K\backslash\{k\}}$ and $\Lambda_k^r = \sum_{i=1,i\neq k}^K{\rm tr}\left(\mathbf b_{i,k}\mathbf b_{i,k}^H \mathbf S_{i,k}^r\right) + \sigma_k^2$. Then, we replace the objective function $\sum_{k=1}^K\tau_k\log_2\left( 1 + \gamma_k\right)$ with its lower bound $\sum_{k=1}^K\tau_k\left(o_k - q_k^{\rm ub, \it r}\left(\mathbf S_k\right)\right)$, yielding 
\begin{subequations}\label{P2-sub1-SCA}
\begin{align}
	\underset{\substack{\{\mathbf S_{i,j} \succeq \mathbf 0\}, \{\tau_j\},\\ i,j\in\mathcal K}}{\max} \hspace{2mm}& \sum_{k=1}^K\tau_k\left(o_k - q_k^{\rm ub, \it r}\left(\mathbf S_k\right)\right)  \\
	\text{s.t.} \hspace{3mm}& \eqref{P2_cons:b} - \eqref{P2_cons:d}.
\end{align}
\end{subequations}
The key observation is that although problem \eqref{P2-sub1-SCA} is non-convex in its current form due to the non-concave objective function and non-convex constraint \eqref{P2_cons:b}, it can be equivalently transformed into the following convex SDP by simply applying the change of variables $\mathbf W_{i,j} = \tau_j\mathbf S_{i,j}$, $\forall i,j\in\mathcal K$. 
\begin{subequations}\label{P2-sub1-SCA-Eqv}
	\begin{align}
	\underset{\substack{\{\mathbf W_{i,j} \succeq \mathbf 0\},\\\{\tau_j\}, i,j\in\mathcal K}}{\max} & \hspace{1mm}\sum_{k=1}^K\left[ \tau_k\log_2\left(\frac{\sum_{i=1}^K{\rm tr}\left(\mathbf b_{i,k}\mathbf b_{i,k}^H \mathbf W_{i,k}\right)}{\tau_k} + \sigma_k^2\right)\right.  \nonumber\\
	& - \tau_k\log_2\left(\Lambda_k^r\right) \nonumber\\
	& \left. - \frac{\sum\nolimits_{i=1,i\neq k}^K{\rm tr}\left(\mathbf b_{i,k}\mathbf b_{i,k}^H\left(\mathbf W_{i,k} - \tau_k\mathbf S_{i,k}^r\right) \right)}{\Lambda_k^r\ln2}\right]  \\
	\text{s.t.}\hspace{3mm}& \sum_{j=1,j\neq k}^K\zeta\sum_{i=1}^K{\rm tr}\left(\mathbf c_{i,k,j}\mathbf c_{i,k,j}^H\mathbf W_{i,j}\right) \geq E_k, \nonumber\\
	& \forall k \in\mathcal K, \label{P2-sub1-SCA-Eqv_cons:b}\\
	& {\rm tr}\left(\mathbf W_{i,j} \right) \leq \tau_jP_i, \ \forall i,j\in\mathcal K, \\
	& \eqref{P2_cons:d}.
	\end{align}
\end{subequations}
The optimal solution of problem \eqref{P2-sub1-SCA-Eqv}, denoted by $\left\lbrace \{\mathbf W_{i,j}^{\star}\}, \{\tau_j^{\star}\}\right\rbrace$, can be found by standard solvers such as CVX \cite{2004_S.Boyd_cvx}. Furthermore, we have the following proposition.
\begin{prop}\label{prop2}
	Assuming that problem \eqref{P2-sub1-SCA-Eqv} is feasible for $P_i > 0$ and $E_k > 0$, $\forall i,k\in\mathcal K$, then if $\tau_k^{\star} > 0$, $k\in\mathcal K$, it follows that:\\
	C1: ${\rm rank}\left(\mathbf W_{k,k}^\star\right) = 1$ if $\{\mathbf b_{k,k}, \mathbf c_{k,k',k}\}_{\forall k'\in\mathcal K\backslash\{k\}}$ are independently distributed;\\
	C2: ${\rm rank}\left(\mathbf W_{i,k}^\star\right) = 1$ if ${\rm tr}\left(\mathbf W_{i,k}^{\star}\right) = \tau_k^{\star}P_i$ and $\{\mathbf b_{i,k}, \mathbf c_{i,k',k}\}_{\forall k'\in\mathcal K\backslash\{k\}}$ are independently distributed, $\forall i\in\mathcal K$, $i\neq k$. 
\end{prop}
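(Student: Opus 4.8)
The plan is to derive both rank statements from the Karush--Kuhn--Tucker (KKT) conditions of the convex SDP~\eqref{P2-sub1-SCA-Eqv}, following the same line of argument as in the proof of Proposition~\ref{prop1}. Since \eqref{P2-sub1-SCA-Eqv} is convex and, under the assumed feasibility together with $P_i>0$, $E_k>0$, satisfies Slater's condition, the KKT system is necessary and sufficient for optimality. I would attach multipliers $\nu_k\ge 0$ to the EH constraints~\eqref{P2-sub1-SCA-Eqv_cons:b}, $\lambda_{i,j}\ge 0$ to the per-slot power budgets, matrices $\mathbf Y_{i,j}\succeq\mathbf 0$ to the constraints $\mathbf W_{i,j}\succeq\mathbf 0$, and scalar multipliers to~\eqref{P2_cons:d} (the latter play no role in the $\mathbf W$-stationarity). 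Complementary slackness gives $\mathbf Y_{i,j}\mathbf W_{i,j}^\star=\mathbf 0$, hence ${\rm range}(\mathbf W_{i,j}^\star)\subseteq\ker\mathbf Y_{i,j}$, so the argument reduces to showing that the relevant $\mathbf Y_{i,j}$ has a one-dimensional kernel and that $\mathbf W_{i,j}^\star\neq\mathbf 0$.

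For C1, collect the Lagrangian terms containing $\mathbf W_{k,k}$. This covariance enters the objective only through the $\log$-numerator of the $k$-th rate term --- not through the SCA surrogate $q_k^{\rm ub, \it r}$, whose sum runs over $i\ne k$ --- producing a gradient proportional to the rank-one matrix $\mathbf b_{k,k}\mathbf b_{k,k}^H$; it enters the EH constraints of the other users $k'\ne k$ with coefficients $\zeta\,\mathbf c_{k,k',k}\mathbf c_{k,k',k}^H$, and the power budget with coefficient $\mathbf I$. Hence stationarity reads $\mathbf Y_{k,k}=\lambda_{k,k}\mathbf I-\mathbf \Xi_{k,k}$, where $\mathbf \Xi_{k,k}$ is a nonnegative combination of $\mathbf b_{k,k}\mathbf b_{k,k}^H$ and the $\mathbf c_{k,k',k}\mathbf c_{k,k',k}^H$'s, so $\mathbf \Xi_{k,k}\succeq\mathbf 0$ and $\mathbf \Xi_{k,k}\ne\mathbf 0$. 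Then $\mathbf Y_{k,k}\succeq\mathbf 0$ forces $\lambda_{k,k}\ge\lambda_{\max}(\mathbf \Xi_{k,k})>0$. Moreover $\mathbf W_{k,k}^\star\ne\mathbf 0$: if $\mathbf W_{k,k}^\star=\mathbf 0$ then (since $\tau_k^\star>0$, $P_k>0$) its power budget is slack, and adding a small rank-one perturbation along $\mathbf b_{k,k}$ strictly increases the $k$-th rate while only raising the harvested energy at the other receivers (in slot $k$ only Rx $k$ decodes, the rest harvest) and respecting all constraints --- contradicting optimality. Therefore $\mathbf Y_{k,k}$ is not positive definite, so $\lambda_{k,k}=\lambda_{\max}(\mathbf \Xi_{k,k})$ and $\ker\mathbf Y_{k,k}$ equals the dominant eigenspace of $\mathbf \Xi_{k,k}$. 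Under the independence hypothesis on $\{\mathbf b_{k,k},\mathbf c_{k,k',k}\}_{k'\ne k}$ this eigenspace is one-dimensional (a generic Hermitian combination of rank-one terms has a simple top eigenvalue almost surely --- the fact invoked for $\mathbf \Upsilon_{i,1}$ in Remark~\ref{rem1}), which together with $\mathbf W_{k,k}^\star\ne\mathbf 0$ yields ${\rm rank}(\mathbf W_{k,k}^\star)=1$.

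For C2 the structure is identical except that $\mathbf W_{i,k}$ with $i\ne k$ \emph{also} appears in the SCA surrogate $q_k^{\rm ub, \it r}$, adding a term $-c\,\mathbf b_{i,k}\mathbf b_{i,k}^H$ (with $c>0$ a constant built from $\Lambda_k^r$) to the objective gradient; stationarity becomes $\mathbf Y_{i,k}=\lambda_{i,k}\mathbf I-\mathbf \Xi_{i,k}$ with $\mathbf \Xi_{i,k}=\alpha\,\mathbf b_{i,k}\mathbf b_{i,k}^H+\sum_{k'\ne k}\nu_{k'}\zeta\,\mathbf c_{i,k',k}\mathbf c_{i,k',k}^H$, where the scalar $\alpha$ has \emph{a priori indefinite sign}. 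Consequently $\mathbf \Xi_{i,k}$ need not be positive semidefinite and one cannot conclude $\lambda_{i,k}>0$ as above. Instead I would use the hypothesis ${\rm tr}(\mathbf W_{i,k}^\star)=\tau_k^\star P_i$, which with $\tau_k^\star>0$, $P_i>0$ gives $\mathbf W_{i,k}^\star\ne\mathbf 0$; hence $\mathbf Y_{i,k}$ is not positive definite, so $\lambda_{i,k}=\lambda_{\max}(\mathbf \Xi_{i,k})$ and $\ker\mathbf Y_{i,k}$ is the dominant eigenspace of $\mathbf \Xi_{i,k}$. Independence of $\{\mathbf b_{i,k},\mathbf c_{i,k',k}\}_{k'\ne k}$ again makes this eigenspace one-dimensional almost surely (simplicity of the \emph{largest} eigenvalue is a generic property of $\mathbf \Xi_{i,k}$ regardless of the sign of $\alpha$ or the values of the $\nu_{k'}$), so ${\rm range}(\mathbf W_{i,k}^\star)$ is a line and ${\rm rank}(\mathbf W_{i,k}^\star)=1$.

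I expect the cross-link case of C2 to be the main obstacle: the SCA linearization of $q_k$ injects the sign-indefinite rank-one term $\alpha\,\mathbf b_{i,k}\mathbf b_{i,k}^H$ into $\mathbf \Xi_{i,k}$, which destroys the clean structure $\mathbf Y_{i,k}=\lambda_{i,k}\mathbf I-(\text{positive semidefinite})$ available for C1; the fix is to give up on proving $\lambda_{i,k}>0$ directly and instead exploit the assumed activeness of the power budget both to pin $\lambda_{i,k}$ at the largest eigenvalue of $\mathbf \Xi_{i,k}$ and to rule out the degenerate rank-$0$ solution --- which is precisely why C2, unlike C1, must carry the extra hypothesis ${\rm tr}(\mathbf W_{i,k}^\star)=\tau_k^\star P_i$. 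A second, more routine point is to justify rigorously that independently distributed channel vectors force the dominant eigenvalue of $\mathbf \Xi_{k,k}$, resp.\ $\mathbf \Xi_{i,k}$, to be simple almost surely; as Remark~\ref{rem1} already notes, this condition is sufficient but far from necessary.
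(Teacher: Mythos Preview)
Your proposal is correct and follows essentially the same KKT-based route the paper intends: the paper's own proof simply states that the argument is ``similar to that in the appendix for Proposition~\ref{prop1}'' and omits the details, so your write-up is in fact more explicit than the paper. In particular, your identification of why C2 (unlike C1) must carry the extra hypothesis ${\rm tr}(\mathbf W_{i,k}^\star)=\tau_k^\star P_i$ --- namely, that the SCA linearization of $q_k$ injects the sign-indefinite term $\alpha\,\mathbf b_{i,k}\mathbf b_{i,k}^H$ into $\mathbf \Xi_{i,k}$, destroying the clean $\lambda\mathbf I-(\text{PSD})$ structure --- is exactly the mechanism implicit in the paper's parallel treatment of C2/C3 in Proposition~\ref{prop1}.
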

\begin{proof}
	The detailed proof is similar to that in the appendix for Proposition \ref{prop1} and is omitted here for brevity. 
\end{proof}
Once $\left\lbrace \{\mathbf W_{i,j}^{\star}\}, \{\tau_j^{\star}\}\right\rbrace$ is obtained by solving problem \eqref{P2-sub1-SCA-Eqv}, we set $\mathbf S_{i,j}^{\star} = \frac{\mathbf W_{i,j}^{\star}}{\tau_j^{\star}}$ if $\tau_j^{\star} > 0$ and $\mathbf S^{\star}_{i,j} = \mathbf 0$ otherwise, $\forall i,j\in\mathcal K$.

\subsubsection{Optimizing $\{\mathbf v_j\}$ for Given $\{\mathbf S_{i,j}\}$ and $\{\tau_j\}$} By introducing slack variables $\{\mu_k\}$, the subproblem of (P2) w.r.t. $\{\mathbf v_j\}$ can be equivalently expressed as
\begin{subequations}\label{P2-sub2}
	\begin{align}
	&\hspace{-8mm}\underset{\{\mathbf v_j\}, \{\mu_k\}, j,k\in\mathcal K}{\max}\hspace{2mm} \sum_{k=1}^K\tau_k\log_2\left(1 + \mu_k\right)\\
	\text{s.t.} \hspace{3mm} & \frac{\mathbf v_k^H\mathbf A_{k,k,k}\mathbf v_k}{\mu_k} \geq \sum_{i=1,i\neq k}^K\mathbf v_k^H\mathbf A_{i,k,k}\mathbf v_k + \sigma_k^2,\ \forall k\in\mathcal K, \label{P2-sub2_cons:b}\\
	& \sum_{j=1,j\neq k}^K\zeta\tau_j\sum_{i=1}^K\mathbf v_j^H\mathbf A_{i,k,j}\mathbf v_j\geq E_k, \ \forall k \in\mathcal K, \label{P2-sub2_cons:c}\\
	& \eqref{P2_cons:e},
	\end{align}
\end{subequations} 
where $\mathbf A_{i,k,j} = \mathbf H_{i,k}\mathbf S_{i,j}\mathbf H_{i,k}^H$, $i,k,j\in\mathcal K$ and the newly introduced constraint \eqref{P2-sub2_cons:b} must be satisfied with equality at the optimum. The non-convexity of problem \eqref{P2-sub2} stems from non-convex constraints \eqref{P2-sub2_cons:b}  and \eqref{P2-sub2_cons:c}. Since problem \eqref{P2-sub2} has a similar form as problem \eqref{P1-sub2-Eqv1}, it can be handled in the same manner as for \eqref{P1-sub2-Eqv1}. Specifically, following similar steps to \eqref{P1-sub2-Eqv1_cons:b_sca}-\eqref{P1-sub2_cons:b_sca}, in the $r$-th iteration of the proposed algorithm, problem \eqref{P2-sub2} can be approximated as 
\begin{subequations}\label{P2-sub2-SCA}
	\begin{align}
    & \hspace{-5mm}\underset{\{\mathbf v_j\},\{\mu_k\}, j,k\in\mathcal K}{\max} \hspace{2mm} \sum_{k=1}^K\tau_k\log_2\left(1 + \mu_k\right)\\
	\text{s.t.} \hspace{3mm}&  \mathcal F^{\rm lb, \it r}_{\mathbf A_{k,k,k}}(\mathbf v_k, \mu_k) \geq \sum_{i=1,i\neq k}^K\mathbf v_k^H\mathbf A_{i,k,k}\mathbf v_k + \sigma_k^2,\ \forall k\in\mathcal K, \label{P2-sub2-SCA_cons:b}\\
	& \sum_{j=1,j\neq k}^K\zeta\tau_j\sum_{i=1}^K\mathcal G^{\rm lb, \it r}_{\mathbf A_{i,k,j}}(\mathbf v_j)\geq E_k, \ \forall k \in\mathcal K, \label{P2-sub2-SCA_cons:c}\\
	& \eqref{P2_cons:e},
	\end{align}
\end{subequations} 
\normalsize where the expressions of $\mathcal F^{\rm lb, \it r}_{\mathbf A_{k,k,k}}(\mathbf v_k, \mu_k)$ and $\mathcal G^{\rm lb, \it r}_{\mathbf A_{i,k,j}}(\mathbf v_j)$ can be found in Section \ref{Sec_P1_sub2} after \eqref{P1-sub2-Eqv1} with index ``$t$'' replaced by ``$r$''. By direct inspection, problem \eqref{P2-sub2-SCA} is a convex QCQP that can be solved exactly using readily available solvers such as CVX \cite{2004_S.Boyd_cvx}. 

\subsubsection{Overall Algorithm} The detailed steps of the proposed algorithm for (P2) are omitted here due to the similarity between them and those listed in Algorithm \ref{Alg1}. 
In addition, this algorithm is guaranteed to converge for the same reasons as Algorithm \ref{Alg1}. 
Moreover, similar to the analysis in Section \ref{Sec_P1_alg}, the computational complexity of each iteration of this algorithm is in the order of $\mathcal O\left( \ln\left(\frac{1}{\varepsilon}\right) \left( \sqrt{M}\left(K^2M^3+K^4M^2+K^6\right) + K^6N^{\frac{9}{2}}\right)\right) $ \cite{2010_Imre_SDR_complexity,2014_K.wang_complexity}.

\subsection{Proposed Algorithm for (P2-D)} 
Since the only difference between (P2) and (P2-D) is that the objective function of (P2-D) is simpler, the above AO algorithm proposed for (P2) can be applied to solve (P2-D) but with slight modifications. To be specific, for (P2-D), we can exploit the hidden convexity of the subproblem w.r.t. $\left\lbrace\{\mathbf S_{i,j}\}, \{\tau_j\} \right\rbrace$ by only applying the change of variables (i.e., defining $\mathbf W_{i,j} = \tau_j\mathbf S_{i,j}$, $\forall i,j\in\mathcal K$). On the other hand, similar to \eqref{P2-sub2-SCA}, the subproblem of (P2-D) w.r.t. $\{\mathbf v_j\}$ can be approximated as a convex QCQP, which differs from \eqref{P2-sub2-SCA} in that the first constraint is $\mathcal G^{\rm lb, \it r}_{\mathbf A_{k,k,k}}(\mathbf v_k) \geq \mu_k\sigma_k^2,\ \forall k\in\mathcal K$.  

\section{Feasibility Checking and Initialization Methods}\label{Sec_feasibility}
Before executing the proposed algorithms, we have to check whether these schemes are feasible under the conflicting EH and transmit power constraints. If the answer is yes, how do we construct the initial points for these algorithms? To this end, for the IRS-aided hybrid TS-PS scheme, we consider the following optimization problem: 
\begin{subequations}\label{P1_feasi}
	\begin{align} 
	& \hspace{-9mm}\underset{\{\mathbf S_{i,1} \succeq \mathbf 0\}_{i\in\mathcal K}, \mathbf v_1, \delta}{\max}  \hspace{2mm} \delta \\
	\text{s.t.} \hspace{3mm} & \zeta\sum_{i = 1}^K{\rm tr}\left(\mathbf a_{i,k,1}\mathbf a_{i,k,1}^H\mathbf S_{i,1}\right) \geq \delta E_k, \ \forall k \in\mathcal K, \label{P1_feasi_cons:b}\\
	& {\rm tr}\left(\mathbf S_{i,1} \right) \leq P_i, \ \forall i\in\mathcal K, \\
	& \left|\left[\mathbf v_1\right]_n \right| \leq 1,  \ \left[\mathbf v_1\right]_{N+1} = 1, \ \forall n\in\mathcal N, \label{P1_feasi_cons:d}
	\end{align}
\end{subequations} 
where $\delta \triangleq \frac{1}{\tau_1}$. It is not hard to see that problem \eqref{P1_feasi} aims to use a minimal time fraction to satisfy the EH requirements at the Rxs. Since this problem is non-convex with $\{\mathbf S_{i,1}\}$ and $\mathbf v_1$ coupled in constraint \eqref{P1_feasi_cons:b}, we solve it suboptimally by alternately optimizing $\{\{\mathbf S_{i,1}\}, \delta\}$ and $\mathbf v_1$. The subproblem w.r.t. $\{\{\mathbf S_{i,1}\}, \delta\}$ under randomly generated $\mathbf v_1$ is a convex SDP,  while the subproblem w.r.t. $\mathbf v_1$ is non-convex but can be approximated as the following convex QCQP by applying the SCA technique:   
\begin{subequations}\label{P1_feasi_sub2_sca}
	\begin{align} \underset{\mathbf v_1, \delta'\geq 0}{\max}\hspace{2mm}& \delta + \delta' \\
	\text{s.t.}\hspace{3mm}& \zeta\sum_{i = 1}^K\mathcal G^{\rm lb, \it t}_{\mathbf A_{i,k,1}}(\mathbf v_1) \geq (\delta + \delta')E_k, \ \forall k \in\mathcal K, \\
	& \eqref{P1_feasi_cons:d}, 
	\end{align}
\end{subequations} 
where $\delta'$ is a newly introduced ``residual'' variable that makes the problem more efficient than the original feasibility-check subproblem in terms of the converged solution (please refer to \cite{2019_Qingqing_Joint} for a detailed explanation). By alternately optimizing $\{\{\mathbf S_{i,1}\}, \delta\}$ and $\mathbf v_1$, we can obtain a non-decreasing sequence of objective values. Once  the reciprocal of the objective value is less than $1$, we verify that the IRS-aided TS-PS scheme is feasible, denote the corresponding solution as $\Omega \triangleq \{\{\tilde{\mathbf S}_{i,1}\},\tilde{\delta}, \tilde{\mathbf v}_1\}$, and stop the iterations. Recall that we have to initialize $\Delta^0 \triangleq \left\lbrace \{\mathbf W_{i,j}^0\}, \{\tau_j^0\}, \{\rho_k^0\}, \{e_k^0\}, \{z_k^0\}\right\rbrace$ and $\{\mathbf v_j^0\}$, $\forall j\in\{1,2,3\}$, when executing Algorithm \ref{Alg1} for the IRS-aided hybrid TS-PS scheme. After obtaining $\Omega$, we set $\tau_1^0 = 1/\tilde{\delta}$, $\tau_j^0 = \left(1 - \tau_1^0\right)/2$, $\mathbf W_{i,1}^0 = \tau_1^0\tilde{\mathbf S}_{i,1}$, $\mathbf W_{i,j}^0 = \tau_j^0\sqrt{P_i}\mathbf a_{k,k,j}\mathbf a_{k,k,j}^H/\left\| \mathbf a_{k,k,j}\right\|^2$, $\rho_k^0 = 1$, $e_k^0 = \tau_2^0$, $z_k^0 = 0$, $\mathbf v_1^0 = \tilde{\mathbf v}_1$, and randomly generate $\mathbf v_j^0$, $\forall i,k\in\mathcal K$, $j\in\{2,3\}$. It is easy to see that this constructed initial point is feasible for problem \eqref{P1-sub1-Eqv1-SCA} (and also (P1)). For another case where the reciprocal of the objective value of problem \eqref{P1_feasi} is larger than $1$ when the proposed algorithm for \eqref{P1_feasi} converges, we consider that this scheme is infeasible under this setup.  

Similarly, to check the feasibility of the two IRS-aided TDMA-based schemes, we consider the following time minimization problem:
\begin{subequations}
	\begin{align}
    &\hspace{-1cm}\underset{\substack{\{\mathbf S_{i,j} \succeq \mathbf 0\}, \{\tau_j\},\\ \{\mathbf v_j\}, i,j\in\mathcal K}}{\min} \hspace{2mm} \sum_{j=1}^K\tau_j \\
	\text{s.t.} \hspace{3mm}&  \sum_{j=1,j\neq k}^K\zeta\tau_j\sum_{i=1}^K{\rm tr}\left(\mathbf c_{i,k,j}\mathbf c_{i,k,j}^H\mathbf S_{i,j}\right) \geq E_k, \ \forall k \in\mathcal K,\\
	& {\rm tr}\left(\mathbf S_{i,j} \right) \leq P_i, \ \forall i,j\in\mathcal K, \\
	& \left|\left[\mathbf v_j\right]_n \right| \leq 1, \ \left[\mathbf v_j\right]_{N+1} = 1, \ \forall n\in\mathcal N, j \in \mathcal K. 
	\end{align}
\end{subequations}
Note that the proposed algorithm for (P2) can be easily modified to solve this non-convex problem. The details are omitted here for brevity. If the obtained objective value is not larger than $1$, then the two TDMA-based schemes are feasible, and the obtained solution can be directly used as feasible initial points for the proposed algorithms for (P2) and (P2-D). Otherwise, these two schemes are considered infeasible.

\section{Simulation Results}\label{Sec_simulation}
\begin{figure}[!t]
	\centering
	\includegraphics[width=0.45\textwidth]{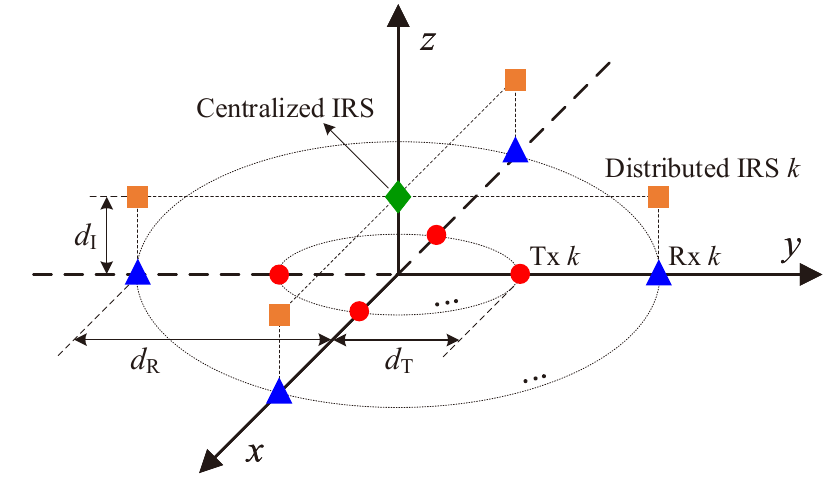}
	\caption{Simulation setup. The Txs, Rxs, distributed IRSs, and centralized IRS are marked by red `$\bullet$'s, blue `$\blacktriangle$'s, orange `$\blacksquare$'s, and green `$\blacklozenge$', respectively. }
	\label{fig:simulation_setup}
\end{figure}

In this section, we numerically demonstrate the efficacy of our proposed schemes. As depicted in Fig. \ref{fig:simulation_setup}, a three-dimensional (3D) coordinate setup is considered, where the $k$-th Tx and Rx are placed in spherical polar coordinates $\left(d_{\rm T}, \frac{2\pi(k-1)}{K}, \frac{\pi}{2} \right)$ and $\left(d_{\rm R}, \frac{2\pi(k-1)}{K}, \frac{\pi}{2} \right)$ in meters (m), respectively. In our simulations, we consider two practical IRS deployment strategies, the distributed and centralized, both of which have $N$ total IRS elements. For the distributed deployment, we assume that the number of IRSs is equal to that of the Tx-Rx pairs and each IRS is equipped with the same number of reflecting elements. The $k$-th distributed IRS is $d_{\rm I}$ m directly above the $k$-th Rx while the centralized IRS is located at $\left(d_{\rm I}, \frac{\pi}{2}, 0\right)$ in m. We adopt the distance-dependent large-scale path loss model in \cite{2019_Qingqing_Joint}, where the path loss at the reference distance of $1$ m is set to be $-30$ dB and the path loss exponents are set equal to $2.2$ for all the IRS-related links and to $3.5$ for all the other links. Moreover, it is assumed that all the IRS-related links undergo Rician fading with a Rician factor of 3 dB while all the other links experience Rayleigh fading. We assume that all the Rxs have identical EH requirements, i.e., $E_k = E$, $\forall k\in\mathcal K$. Unless otherwise stated, simulations are performed with the following parameters: $P_i = 23$ dBm, $\hat{\sigma}_k = \tilde{\sigma}_k^2 = \frac{1}{2}\sigma_k^2 = \frac{10^{-8}}{2}$ watt (W) \cite{2014_Chao_IFC}, $\forall i,k\in\mathcal K$, $\zeta = 70\%$ \cite{2014_Valenta_EH_efficiency}, $\epsilon = 10^{-4}$, $d_{\rm R} = 6$ m, and $d_{\rm I} = 1$ m. Besides, for comparison purposes, we consider two other transmission strategies, PS and TS, which can be obtained from our proposed hybrid TS-PS scheme by setting $\tau_1 = \tau_3 = 0$ and $\tau_2 = 0$, respectively. Note that for the case without IRSs, the TS scheme is the same as the TDMS scheme in \cite{2014_Chao_IFC}. Whenever a scheme is infeasible, we set its achievable sum rate to zero to account for the associated penalty. Each point on the simulation curves is obtained by averaging $500$ independent channel realizations. 

\begin{figure}[!t]
	\centering
	\subfigure[$d_{\rm T} = 0$ m.]{\label{fig:rate_vs_N_dTx0}
		\includegraphics[width=0.46\textwidth]{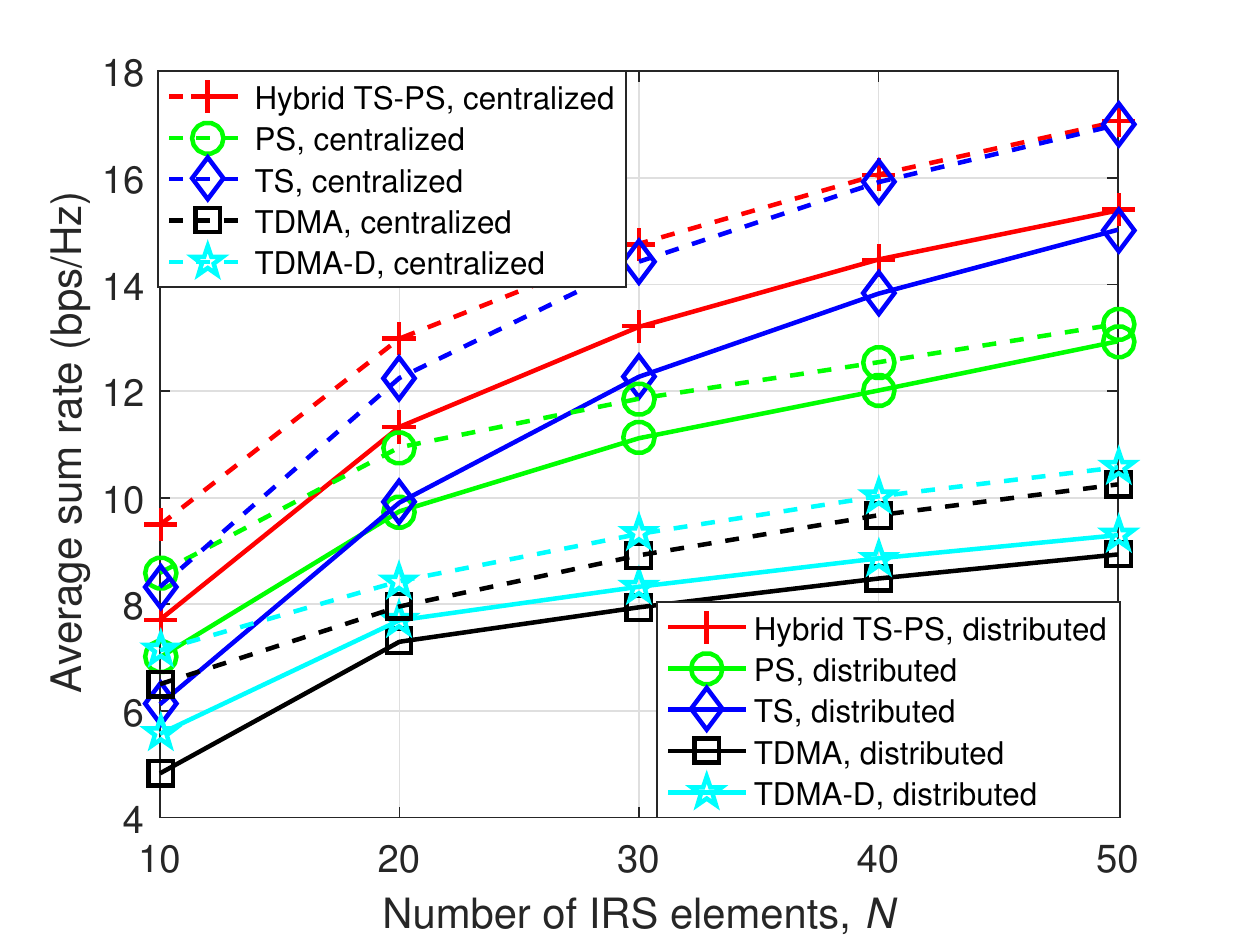}}
	\subfigure[$d_{\rm T} = 2$ m.]{\label{fig:rate_vs_N_dTx2}
		\includegraphics[width=0.46\textwidth]{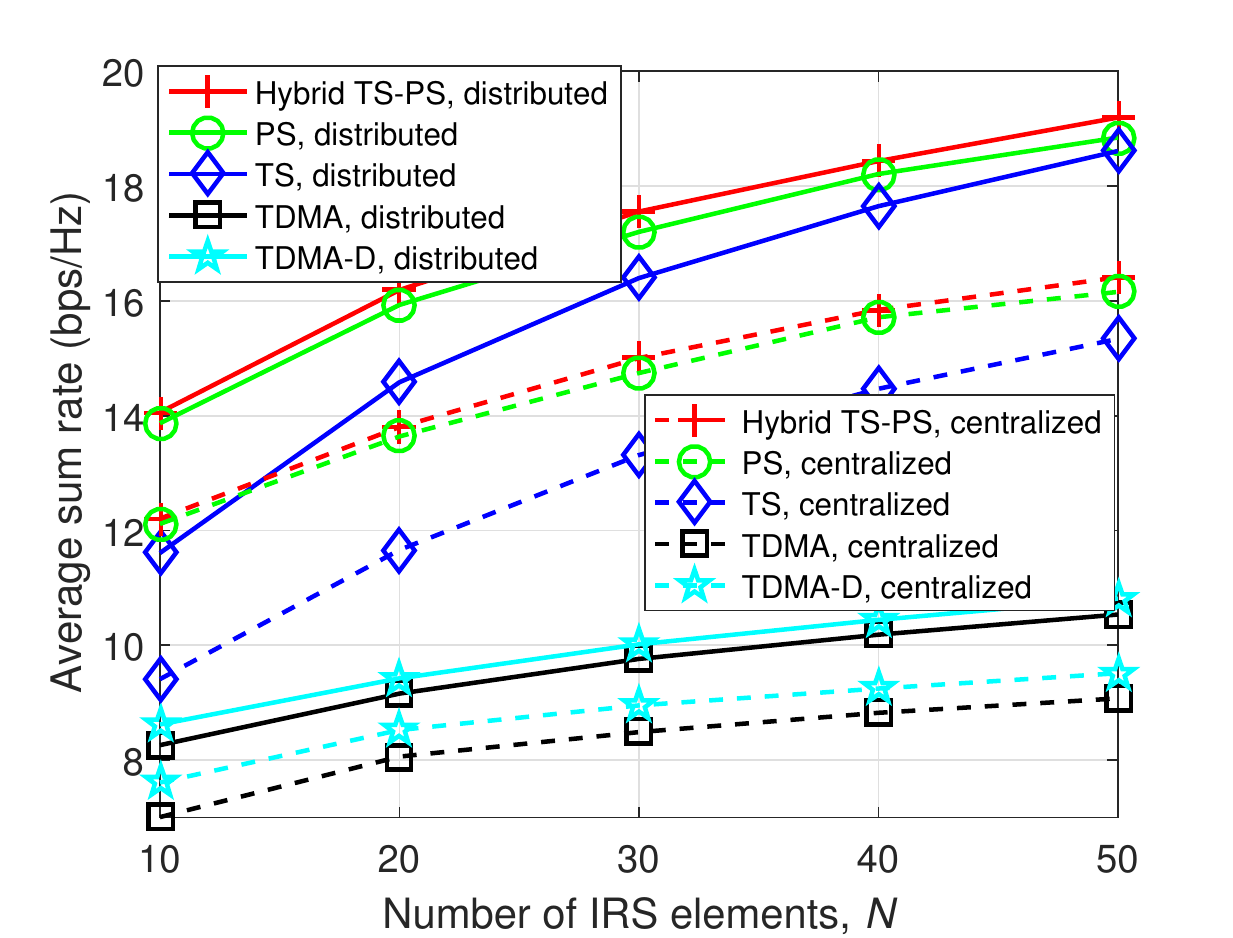}}
	\caption{Average sum rate versus the number of IRS elements for $M = K = 2$ and $E = 0.5$ $\mu\rm W$.}
	\label{fig:rate_vs_N}
\end{figure}

\subsection{Impact of Number of IRS Elements and IRS Deployment}
We first consider a setup where $M = K = 2$ and $E = 0.5$ microwatt ($\mu\rm W$). Fig. \ref{fig:rate_vs_N} plots the average sum rates of different schemes versus the number of IRS elements when $d_{\rm T} = 0$ and $2$ m, respectively. Both the distributed and centralized IRS deployment strategies are considered. Firstly, it is observed that the sum rates achieved by all the schemes increase with $N$, since more DoF at the IRS(s) can be exploited to establish a more favorable propagation environment. Nevertheless, there are diminishing returns in the sum rate gains achieved by the two TDMA-based schemes when $N > 30$. This is because the performance bottleneck would become the short information transmission duration of each Tx, which can hardly be extended even if $N$ is increased. Secondly, as expected, the hybrid TS-PS scheme always performs better than or as well as its sub-schemes, PS and TS. Also, the TDMA-D scheme consistently outperforms the TDMA scheme because the former is free from the cross-link interference compared to the latter. Lastly, we note that when $d_{\rm T} = 0$ m, the centralized deployment outperforms the distributed deployment, while the opposite is true for the case of $d_{\rm T} = 2$ m. In other words, neither a particular deployment strategy can always dominate the other in terms of achievable sum rate. To acquire further insights, we plot in Fig. \ref{fig:rate_vs_dTx_centr_distr} the average sum rate versus $d_{\rm T}$ when $N = 30$. From Fig. \ref{fig:rate_vs_dTx_centr_distr}, it can be seen that the performance gain of the centralized deployment over the distributed deployment decreases as $d_{\rm T}$ increases, and eventually, the performance of the distributed deployment surpasses that of the centralized deployment. This is expected since the distance between the centralized IRS and each Tx increases with $d_{\rm T}$, while each distributed IRS is always close to its corresponding Rx. It is worth noting that the sum rates of all the schemes adopting the distributed IRS deployment increase with $d_{\rm T}$ due to the increased channel power of the direct and reflected links. By contrast, the sum rates of all the schemes adopting the centralized IRS deployment initially decrease and then increase as $d_{\rm T}$ varies from $0$ m to $3$ m. This is because when $d_{\rm T}$ is small, the performance of these schemes is dominated by the reflected links whose channel power decreases with $d_{\rm T}$; when $d_{\rm T}$ is large, the performance of these schemes is dominated by the direct links whose channel power increases with $d_{\rm T}$.  

\begin{figure}[!t]
	\centering
	\includegraphics[width=0.46\textwidth]{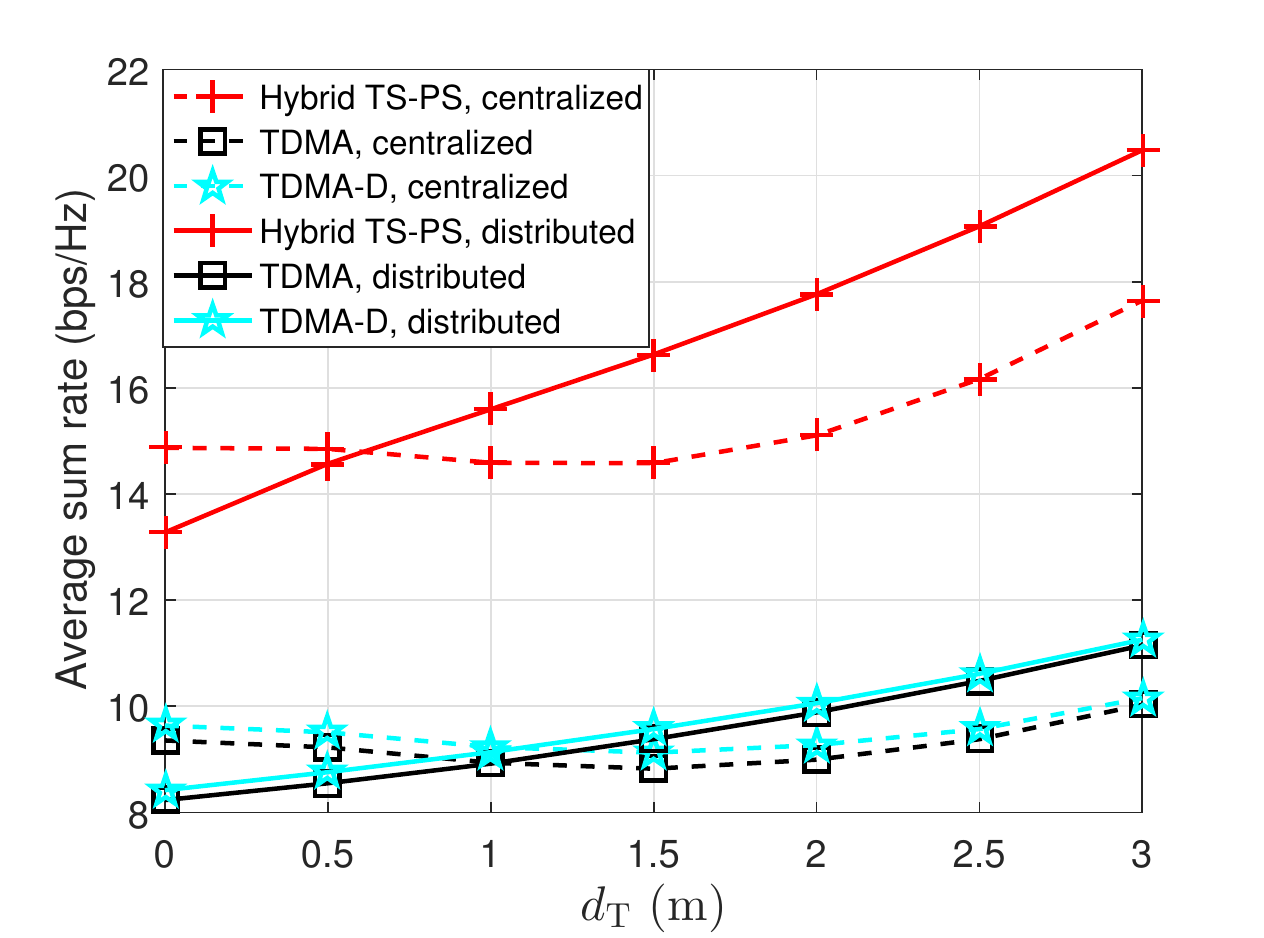}
	\caption{Average sum rate versus $d_{\rm T}$ for $M = K = 2$, $E = 0.5$ $\mu$W, and $N = 30$.}
	\label{fig:rate_vs_dTx_centr_distr}
\end{figure}

\begin{figure}[!t]
	\subfigure[]{\label{fig:time_fraction_vs_N}
		\includegraphics[width=0.46\textwidth]{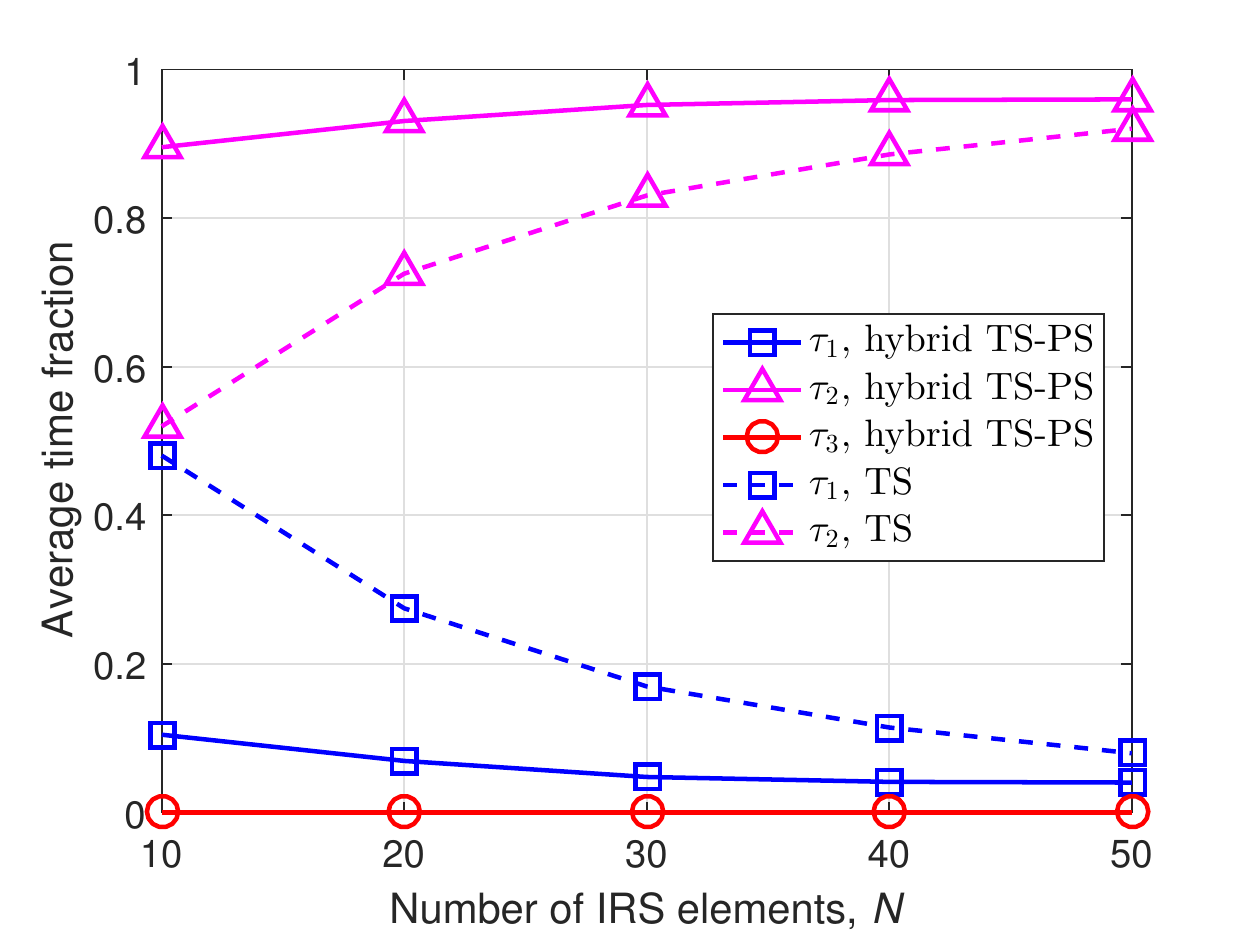}}
	\subfigure[]{\label{fig:PS_ratio_vs_N}
		\includegraphics[width=0.46\textwidth]{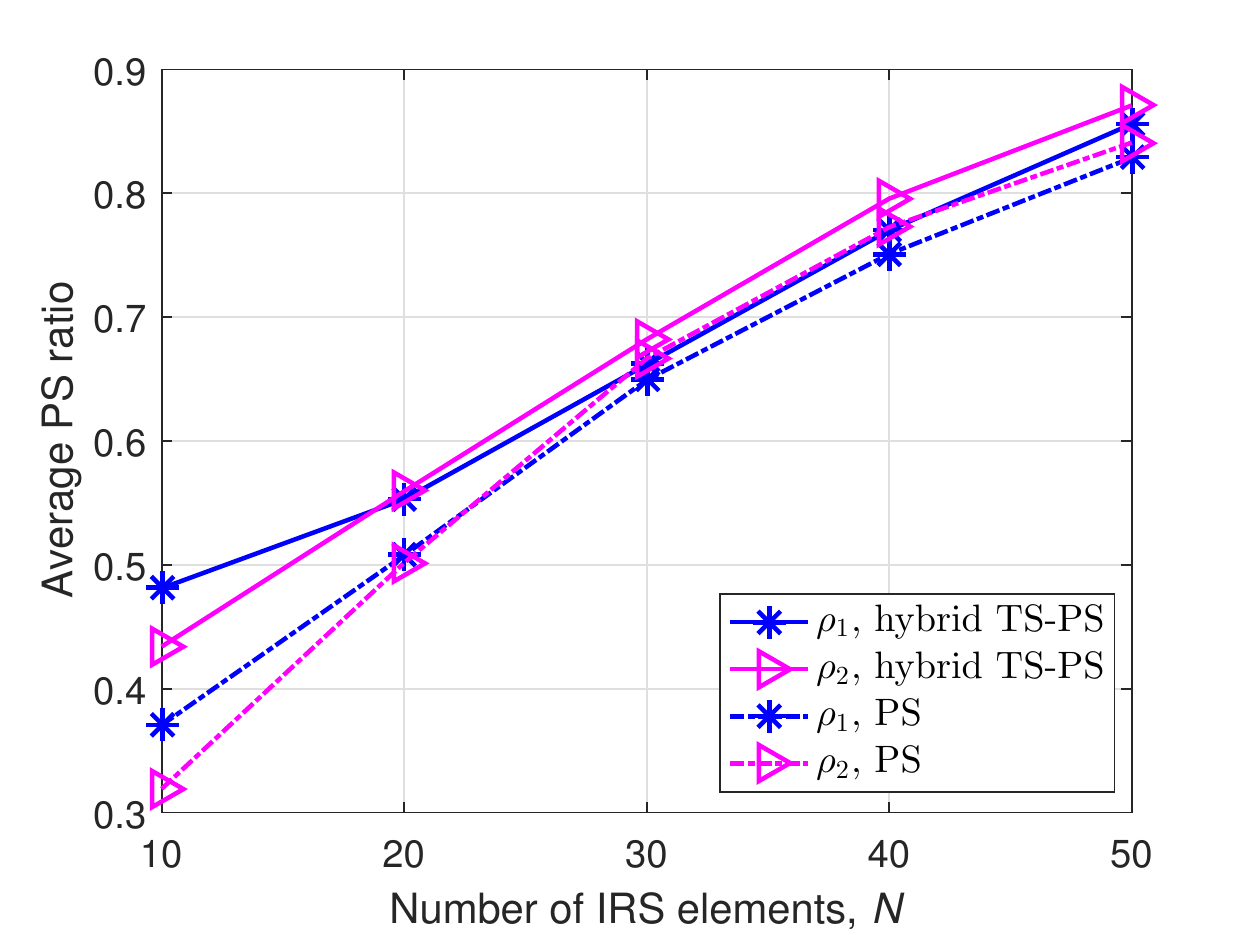}}
	\caption{(a) Average time fraction and (b) average PS ratio versus the number of IRS elements at the distributed IRSs for $d_{\rm T} = 0$ m, $M = K = 2$, and $E = 0.5$ $\mu\rm W$.}
	\label{fig:time_fraction_PS_ratio}
\end{figure}

To better understand how the hybrid TS-PS scheme outperforms the PS and TS schemes, we plot the average time fraction and the average PS ratio versus the number of IRS elements at the distributed IRSs when $d_{\rm T} = 0$ m in Figs. \ref{fig:time_fraction_vs_N} and \ref{fig:PS_ratio_vs_N}, respectively. Fig. \ref{fig:time_fraction_vs_N} shows that for the TS scheme, $\tau_1$ ($\tau_2$) decreases (increases) rapidly with the increase of $N$. This is because increasing $N$ can enhance the received signal power at each Rx \cite{2019_Qingqing_Joint}, thereby reducing the time required by each Tx to perform WPT to satisfy the EH requirement of its corresponding Rx. Consequently, more time becomes available for each Tx to engage in WIT, leading to an improvement in the overall sum rate performance. Nevertheless, with increasing $N$, the curve representing $\tau_1$ ($\tau_2$) of the hybrid TS-PS scheme only shows a slight downward (upward) trend. This implies that the increase in the achievable sum rate is mainly due to the increase in the PS ratios, which is confirmed by the numerical results shown in Fig. \ref{fig:PS_ratio_vs_N}. 
We also observe from Fig. \ref{fig:time_fraction_vs_N} that the 3rd time slot of the hybrid TS-PS scheme is allocated with a zero fraction of time. A possible explanation is that the Rxs can utilize the received signals more flexibly by adjusting the PS ratios in the 2nd time slot than in the 3rd time slot. As such, it may be unnecessary to allocate any time for the 3rd time slot. Moreover, compared to the TS scheme, the hybrid TS-PS scheme leaves more time for ID, although with part of the received signals being used for EH. Lastly, from Figs. \ref{fig:time_fraction_vs_N} and \ref{fig:PS_ratio_vs_N}, we can see that since the Rxs under the hybrid TS-PS scheme have harvested some energy by consuming a fraction of dedicated EH time, they can split more power for ID during the remaining time, as compared to the Rxs under the PS scheme. 

\begin{figure}[!t]
	\centering
	\subfigure[$K = 2$, $E = 0.6$ $\mu\rm W$.]{\label{fig:rate_vs_dTx_K2}
		\includegraphics[width=0.46\textwidth]{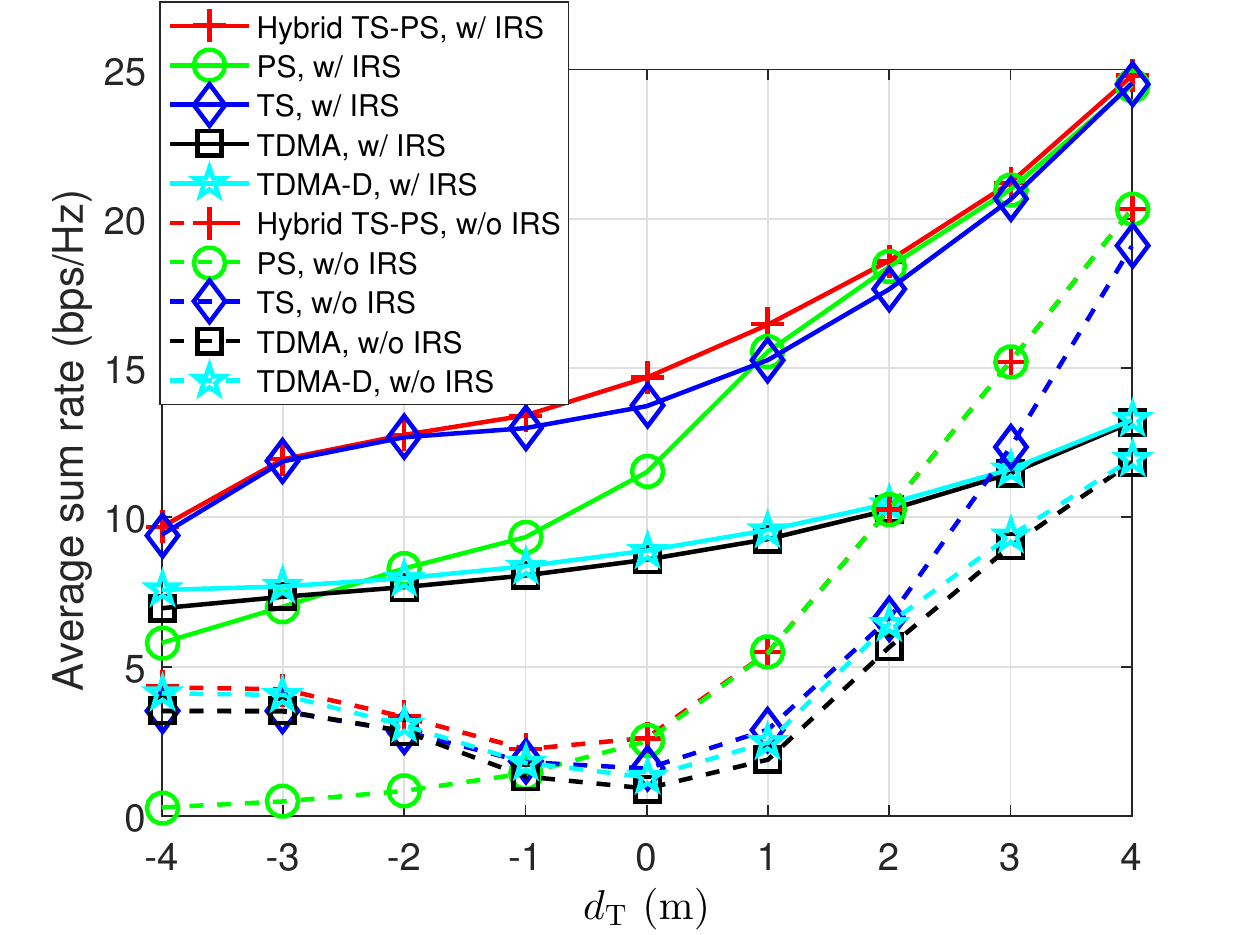}}
	\subfigure[$K = 5$, $E = 1.5$ $\mu\rm W$.]{\label{fig:rate_vs_dTx_K5}
		\includegraphics[width=0.46\textwidth]{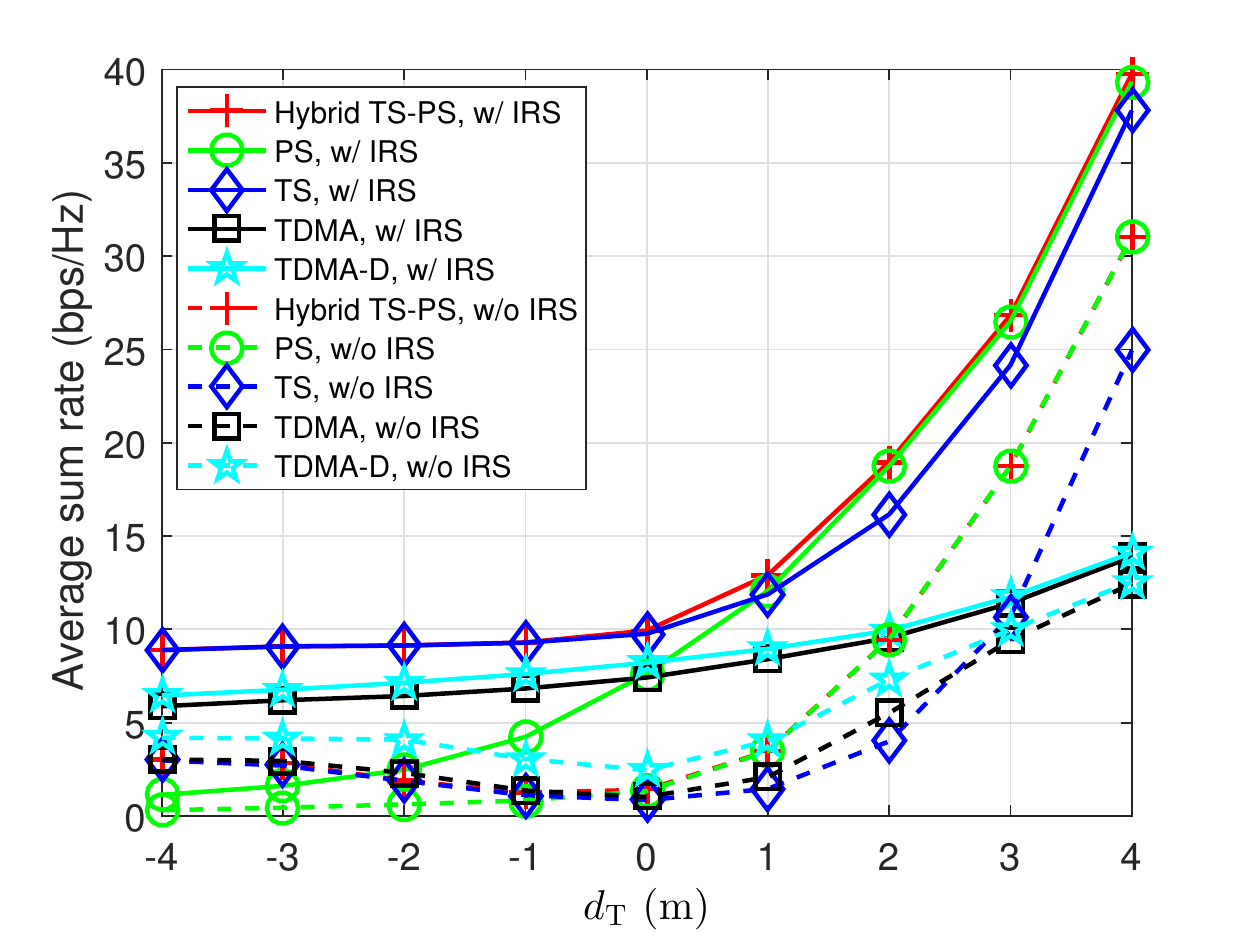}}
	\caption{Average sum rate versus $d_{\rm T}$ for $M = 2$ and $N = 40$.}
	\label{fig:rate_vs_dTx}
\end{figure}

\subsection{Impact of Cross-Link Channel Power}
In this subsection, we control the average relative cross-link channel power by varying $d_{\rm T}$ to study the impact of interference on the system performance. When $d_{\rm T} < 0$, the spherical polar coordinate of the $k$-th Tx is $\left( -d_{\rm T}, \frac{2\pi(k-1)}{K} + \pi, \frac{\pi}{2} \right)$ in m. Fig. \ref{fig:rate_vs_dTx} depicts the average sum rate versus $d_{\rm T}$ for different values of $K$ and $E$ when $M = 2$ and $N = 40$. Motivated by Fig. \ref{fig:rate_vs_N_dTx2}, we consider the distributed IRS deployment. The case without IRSs is also considered. From Fig. \ref{fig:rate_vs_dTx_K2}, it is first observed that for both cases with and without IRSs, the hybrid TS-PS scheme outperforms all the other schemes. Second, the introduction of IRSs widens the performance gap between the hybrid TS-PS and TDMA-based schemes, which is consistent with the observation in Fig. \ref{fig:rate_vs_N}. Third, when $d_{\rm T}$ increases from $0$ to $4$ m, all the schemes with and without IRSs experience an increase in the sum rate performance since the increased channel power of the direct links and reflected links (if any) favors both EH and ID. Fourth, when $d_{\rm T}$ decreases from $0$ to $-4$ m, the sum rates of all the other schemes except the PS scheme increase in the case without IRSs. The reasons are twofold. For one thing, the cross-link channel power becomes stronger with decreasing $d_{\rm T} < 0$, which is beneficial for EH. For another, the increased cross-link channel power has a direct negative impact on ID for the PS scheme, while the other schemes have the additional DoF in time to strike a balance between ID and EH for achieving better performance. In contrast, for the case with IRSs, all the rate curves decrease with decreasing $d_{\rm T}$. The explanation is that when deploying IRSs, the small EH demand can be easily satisfied and thus the increased cross-link channel power only causes rate reduction to all the IRS-aided schemes. 

The majority of the above observations can also be observed in Fig. \ref{fig:rate_vs_dTx_K5} when $K$ and $E$ are increased to $5$ and $1.5$ $\mu\rm W$, respectively. What is different is that in the absence of IRSs, the performance of the hybrid TS-PS scheme is inferior to that of the TDMA-D scheme when $d_{\rm T} < 2$ m. This is mainly because, under the setting of $K > M$, the hybrid TS-PS scheme cannot harness the strong cross-link interference well, while the TDMA-D scheme is free from the cross-link interference. Nevertheless, the advantages of the TDMA-D scheme will be weakened or even vanish after the introduction of IRSs, since IRSs can help the hybrid TS-PS scheme to better mitigate interference. Partly as a result, the hybrid TS-PS scheme outperforms the TDMA-D scheme in the case with IRSs. \looseness=-1 

\begin{figure*}[!t]
	\subfigcapskip = 6pt 
	\hspace{-2mm}
	\subfigure[$K = 2$.]{
		\begin{minipage}[]{0.33\textwidth}\label{fig:rate_vs_E_M4K2}
			\includegraphics[scale=0.45]{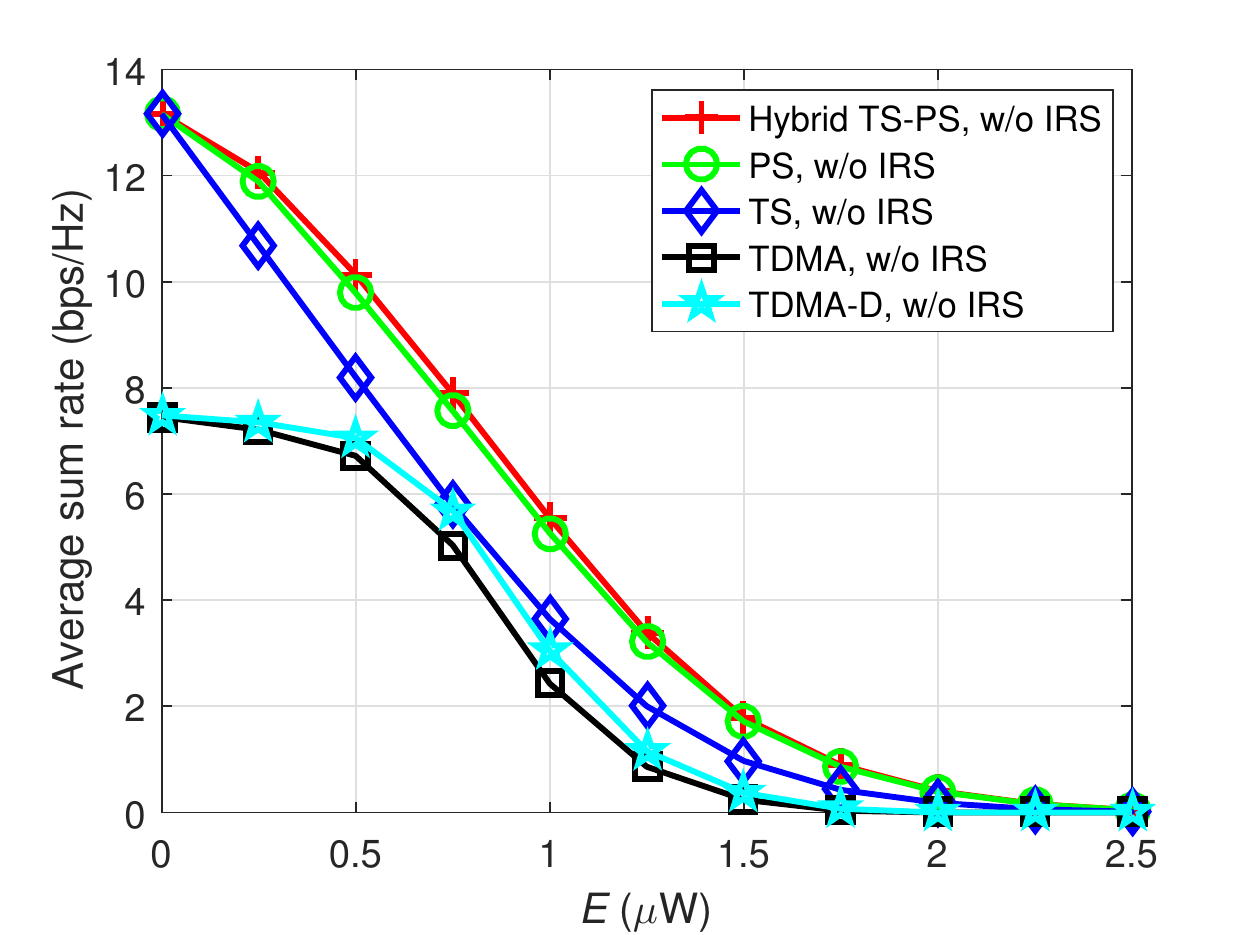}  \\ 
			\includegraphics[scale=0.45]{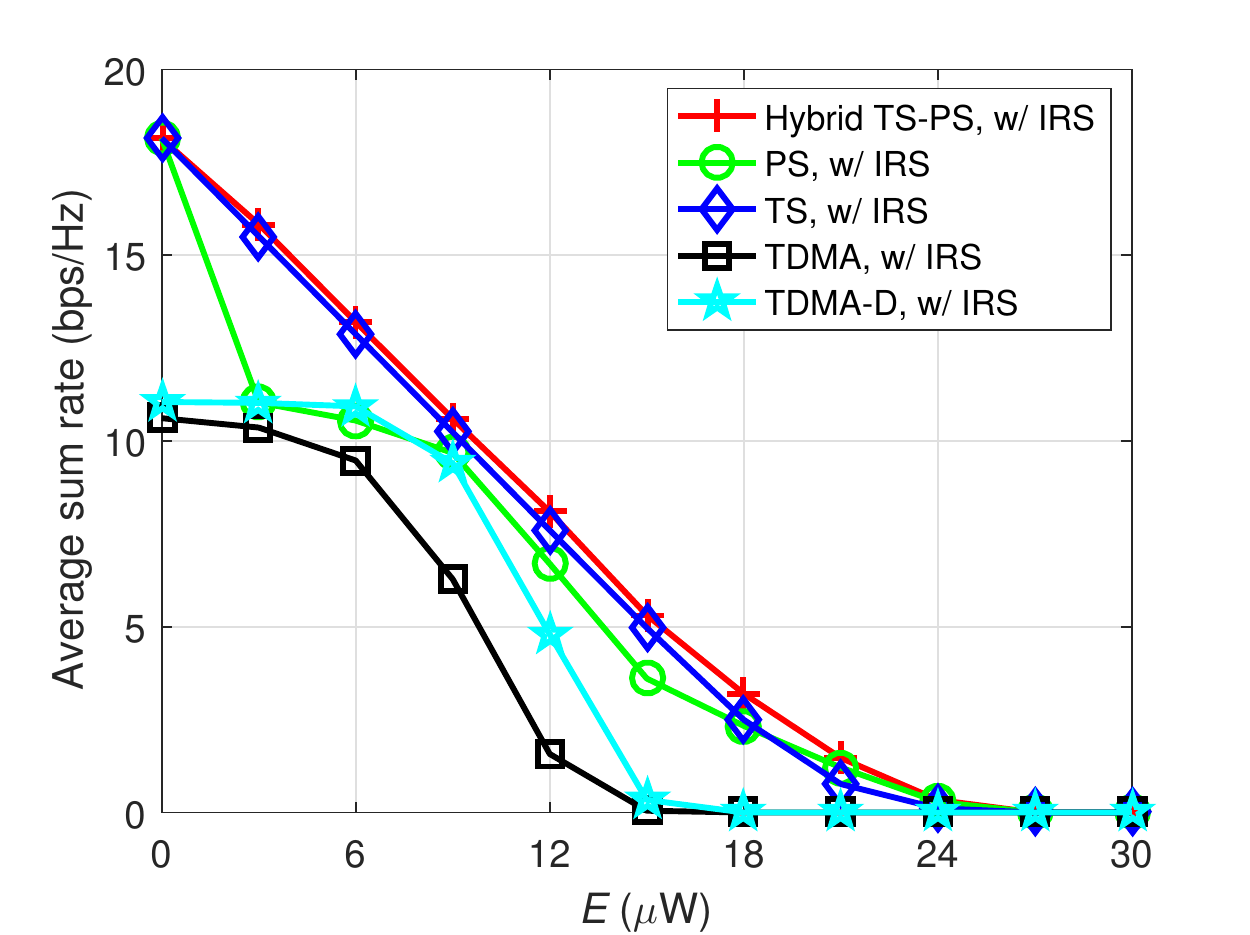}     
	\end{minipage}}
    \hspace{-5mm}
	\subfigure[$K = 4$.]{
		\begin{minipage}[]{0.33\textwidth}\label{fig:rate_vs_E_M4K4}
			\includegraphics[scale=0.45]{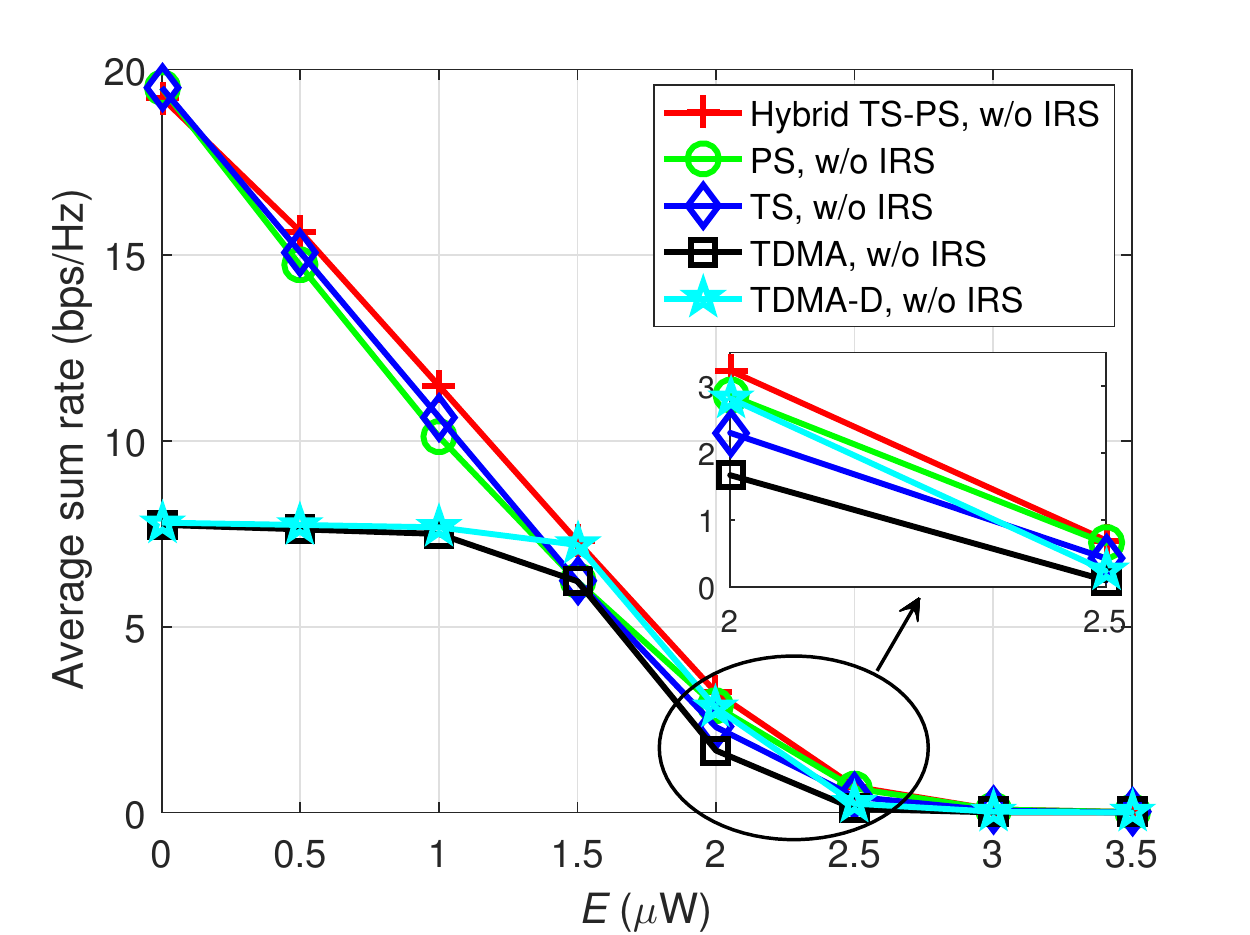}  \\ 
			\includegraphics[scale=0.45]{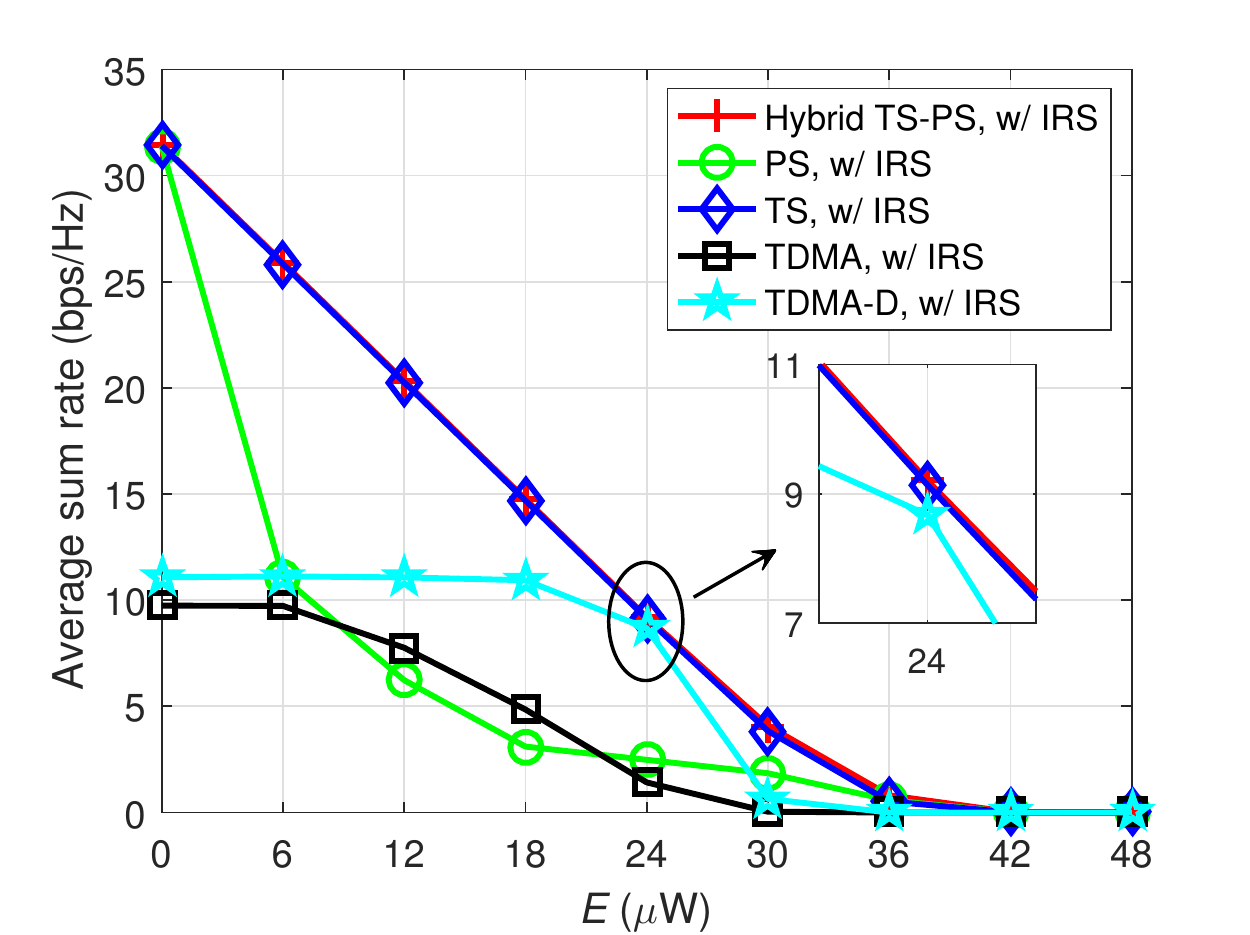}     
	\end{minipage}}
    \hspace{-5mm}
	\subfigure[$K = 8$.]{
		\begin{minipage}[]{0.33\textwidth}\label{fig:rate_vs_E_M4K8}
			\includegraphics[scale=0.45]{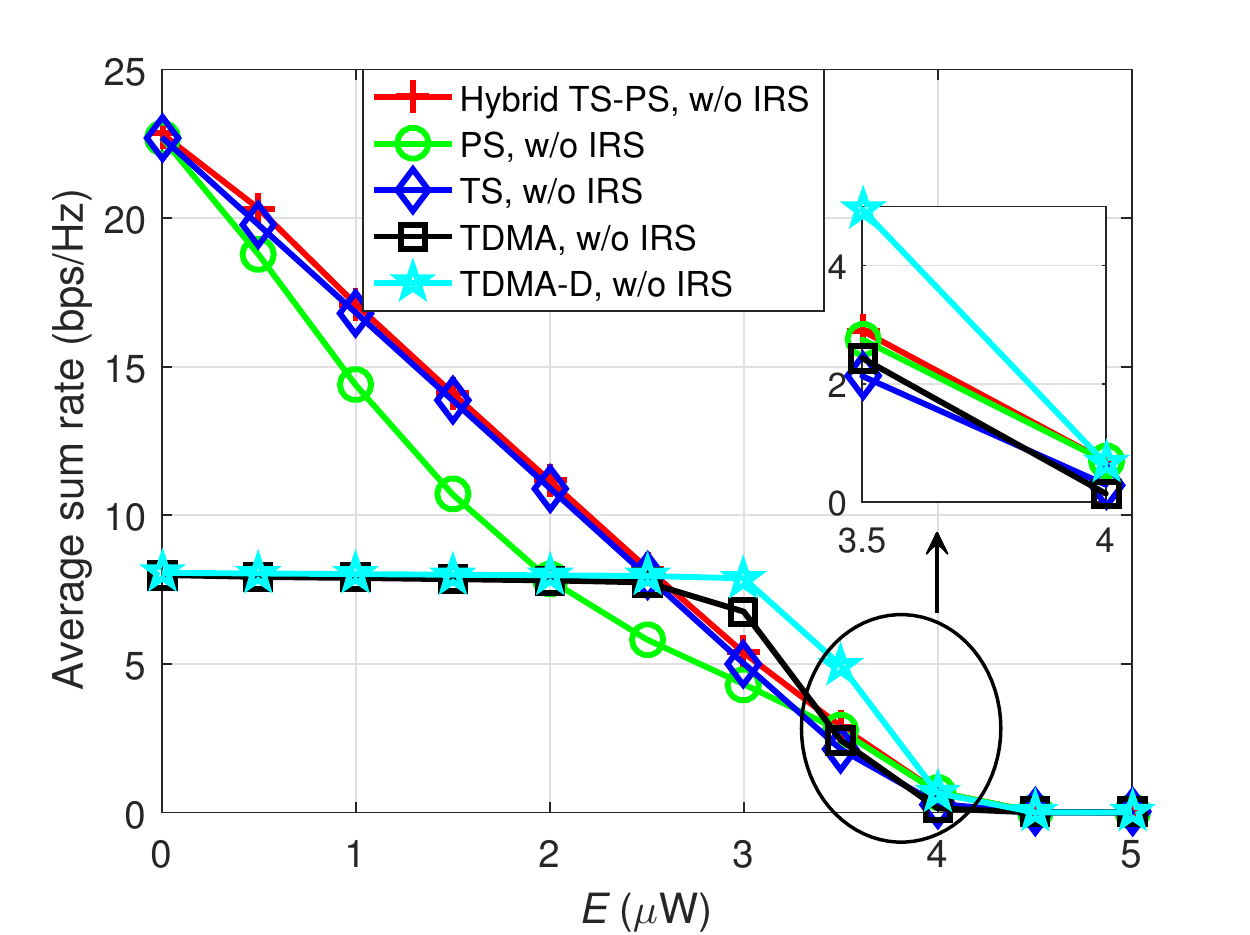}  \\ 
			\includegraphics[scale=0.45]{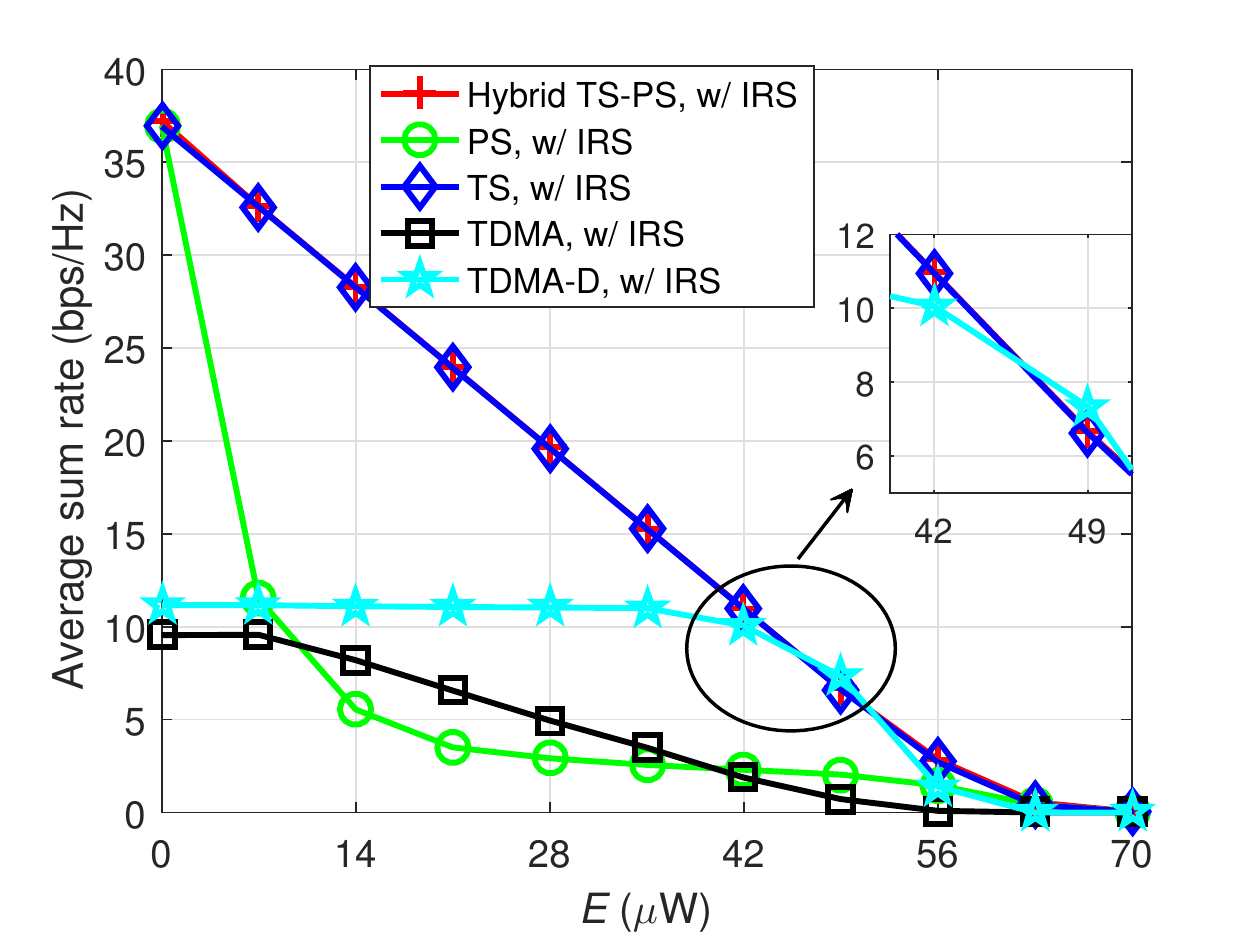}     
	\end{minipage}}
	\caption{Average sum rate versus $E$ for $d_{\rm T} = 0$ m, $M = 4$, and $N = 40$.}
	\label{fig:rate_vs_E_K}
\end{figure*}

\subsection{Impact of EH Requirement and Number of Tx-Rx Pairs}
In the final set of numerical experiments, we investigate the impact of the EH requirement and number of Tx-Rx pairs on the performance comparison. Specifically, Fig. \ref{fig:rate_vs_E_K} depicts the average sum rate versus $E$ for $d_{\rm T} = 0$ m, $M = 4$, $N = 40$, and $K\in\{2,4,8\}$. Two cases with and without the centralized IRS are considered. From Figs. \ref{fig:rate_vs_E_M4K2}-\ref{fig:rate_vs_E_M4K8}, it is first observed that under different values of $K$, the IRS-aided designs are superior to their counterparts without IRS, in terms of both the maximum EH requirement that can be satisfied and the achievable sum rate. This is expected since the high passive beamforming gain promised by the IRS can improve the efficiency of both WPT and WIT. Second, it can be seen that without IRS, the TS scheme is superior to the PS scheme when $K \geq M$ and $E$ is not very large. However, after deploying an IRS, the TS scheme outperforms the PS scheme for almost all the considered values of $K$ and $E$. The reason is that compared to the PS scheme, the TS scheme allows the IRS to reconfigure its phase-shift vector twice to achieve more flexible resource allocation. Finally, we observe from Figs. \ref{fig:rate_vs_E_M4K2} and \ref{fig:rate_vs_E_M4K4} that when $K \leq M$, the hybrid TS-PS scheme achieves the highest sum rate in all the feasible $E$ regimes. On the other hand, Fig. \ref{fig:rate_vs_E_M4K8} shows that when $K > M$ and $E$ is relatively large, the TDMA-D scheme may outperform the hybrid TS-PS scheme. Nevertheless, by comparing the two figures in Fig. \ref{fig:rate_vs_E_M4K8}, we observe that when $N$ increases from $0$ to $40$, the probability of the TDMA-D scheme performing better than the hybrid TS-PS scheme is greatly reduced. This observation motivates us to further increase $N$, as shown in Fig. \ref{fig:rate_vs_E_N}. It can be seen from Fig. \ref{fig:rate_vs_E_N} that if the IRS-aided hybrid TS-PS scheme is inferior to the IRS-aided TDMA-D scheme in some scenarios where the interference is overwhelming and the EH requirement is also stringent, we can reverse this result by further increasing $N$. The reasons arise from two aspects. First, a larger $N$ can help the hybrid TS-PS scheme to better mitigate interference, which further weakens the advantage of the interference-free TDMA-D scheme. Second, increasing $N$ can enable the Txs adopting the hybrid TS-PS scheme to shorten the time spent on WPT and leave more time for WIT, but it can hardly extend the short information transmission durations of the Txs adopting the TDMA-D scheme caused by the time division.   

\begin{figure}[!t]
	\centering
	\includegraphics[width=0.46\textwidth]{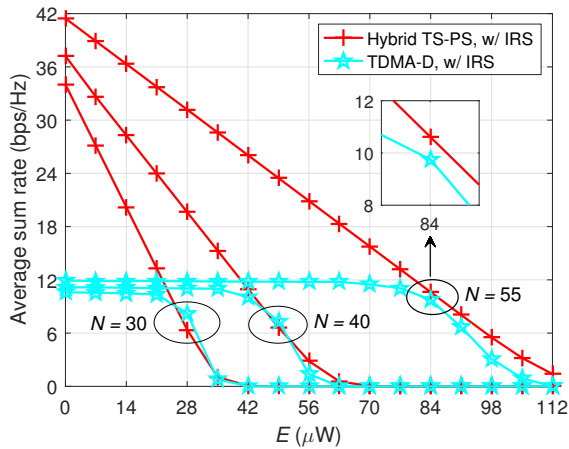}
	\caption{Average sum rate versus $E$ for $d_{\rm T} = 0$ m, $M = 4$, and $K = 8$.}
	\label{fig:rate_vs_E_N}
\end{figure}

\section{Conclusion}\label{Sec_conclusion}
In this paper, we proposed three practical transmission schemes, namely the IRS-aided hybrid TS-PS, TDMA, and TDMA-D schemes, for SWIPT in a multi-user MISO IFC. To maximize the sum rate of the Rxs while satisfying their individual EH requirements, we formulated three optimization problems, each corresponding to a proposed scheme, where the transmit covariance matrices, the IRS phase-shift vectors, and the resource allocation were jointly optimized. Since these optimization problems with the coupled variables were non-convex, we developed AO-based algorithms to solve them suboptimally. Simulation results verified the effectiveness of our proposed algorithms and provided some meaningful insights. First, the distributed IRS deployment is preferred for scenarios where the Tx-Rx pairs are geographically dispersed, while the centralized IRS deployment is a better choice for the opposite scenarios. Second, the performance comparison results among the schemes can vary significantly before and after introducing the IRSs. Third, the IRS-aided schemes can considerably enhance the performance of both WIT and WPT compared to the benchmark schemes without IRSs. Finally, the IRS-aided hybrid TS-PS scheme generally outperforms all other considered schemes in terms of sum rate performance. If not, increasing the number of IRS elements can achieve this. 

It is worth pointing out that since these formulated non-convex problems can hardly be solved by existing optimization techniques optimally and analytically, in this paper, the performance comparison among different schemes is limited to the simulation results obtained by the proposed suboptimal algorithms. How to derive theoretical results to validate or invalidate this paper's simulation results is a challenging issue and needs further study. Besides, in addition to our considered schemes, other transmission schemes certainly exist to separate the EH and ID modes of the Rxs across time. It will be interesting to see how their performance compares to that of our considered schemes. 

\appendix[Proof of Proposition \ref{prop1}]
First, it should be noted that the constraints in \eqref{P1-sub1-Eqv1_cons:d} associated with $\tau_1$ must be active at the optimal solution to problem \eqref{P1-sub1-Eqv1-SCA}, i.e., ${\rm tr}\left(\mathbf W_{i,1}^{\star}\right) =\tau_1^{\star}P_i$, $\forall i\in\mathcal K$. Then, to facilitate the proof and without loss of optimality and generality, we transform problem \eqref{P1-sub1-Eqv1-SCA} into the following equivalent form: 
\begin{subequations}\label{P1-sub1-Eqv1-SCA-Eqv}
	\begin{align}
	&\hspace{-9mm}\underset{\substack{\{\mathbf W_{i,j} \succeq \mathbf 0\}, \{\tau_j\}, \{\rho_k\}, \\ \{e_k\}, \{z_k\}, i,k\in\mathcal K, j\in\{1,2,3\}}}{\min} \hspace{1mm} \sum_{k=1}^K\sum_{j=2}^3 \left[ -\tau_j\log_2\left(\frac{\delta_{k,j}}{\tau_j}\right)\right.  \nonumber\\
	& \hspace{2.5cm}\left. + \frac{\sum\nolimits_{i=1,i\neq k}^K{\rm tr}\left(\mathbf a_{i,k,j}\mathbf a_{i,k,j}^H\mathbf W_{i,j}\right)}{\Psi_{k,j}^t\ln2}\right] \nonumber\\ 
	&\hspace{2.5cm} + \Pi \\
	\hspace{4mm}\text{s.t.} \hspace{3mm}& \delta_{k,2} \leq \sum_{i=1}^K{\rm tr}\left(\mathbf a_{i,k,2}\mathbf a_{i,k,2}^H\mathbf W_{i,2}\right) + e_k\hat\sigma_k^2 + \tau_2\tilde\sigma_k^2, \nonumber\\
	& \forall k\in\mathcal K, \label{P1-sub1-Eqv1-SCA-Eqv_cons:b}\\
	& \delta_{k,3} \leq \sum_{i=1}^K{\rm tr}\left(\mathbf a_{i,k,3}\mathbf a_{i,k,3}^H \mathbf W_{i,3}\right) + \tau_3\sigma_k^2, \ \forall k\in\mathcal K, \label{P1-sub1-Eqv1-SCA-Eqv_cons:c}\\
	& \eqref{P1-sub1-Eqv1_cons:b_sca}, \eqref{P1-sub1-Eqv1_cons:c_1_sca}, \eqref{P1-sub1-Eqv1_cons:c_2}, \eqref{P1-sub1-Eqv1_cons:d}, \eqref{P1_cons:d}, \eqref{P1_cons:e},
	\end{align}
\end{subequations}	
where $\{\delta_{k,j}\}$ are newly introduced slack variables and $\Pi$ denotes the collection of all the terms that do not involve $\{\mathbf W_{i,j}\}$, $k,i\in\mathcal K$, $j\in\{2,3\}$. Problems \eqref{P1-sub1-Eqv1-SCA} and \eqref{P1-sub1-Eqv1-SCA-Eqv} are equivalent since it can be shown by contradiction that the constraints in \eqref{P1-sub1-Eqv1-SCA-Eqv_cons:b} and \eqref{P1-sub1-Eqv1-SCA-Eqv_cons:c} hold with equality at the optimum. Furthermore, it can be verified that the convex problem \eqref{P1-sub1-Eqv1-SCA-Eqv} satisfies the Slater's condition and thus enjoys a zero duality gap \cite{2014_Jie_SWIPT}. Then, we consider the Lagrangian of problem \eqref{P1-sub1-Eqv1-SCA-Eqv} given by 
	\begin{align}
	\mathcal L^{\rm Hy} & = \Xi +  \sum_{k=1}^K\sum_{j=2}^3\frac{\sum\nolimits_{i=1,i\neq k}^K{\rm tr}\left(\mathbf a_{i,k,j}\mathbf a_{i,k,j}^H\mathbf W_{i,j}\right)}{\Psi_{k,j}^t\ln2} \nonumber\\
	& - \sum_{k=1}^K\sum_{j=2}^3\alpha_{k,j}\sum_{i=1}^K{\rm tr}\left(\mathbf a_{i,k,j}\mathbf a_{i,k,j}^H\mathbf W_{i,j}\right) \nonumber\\ 
	& - \sum_{k=1}^K\sum_{j=1}^2\nu_{k,j}\zeta\sum_{i=1}^K{\rm tr}\left(\mathbf a_{i,k,j}\mathbf a_{i,k,j}^H \mathbf W_{i,j}\right) \nonumber\\
	& + \sum_{i=1}^K\sum_{j=1}^3\lambda_{i,j}{\rm tr}\left(\mathbf W_{i,j}\right) - \sum_{i=1}^K\sum_{j=1}^3{\rm tr}\left(\mathbf W_{i,j}\mathbf T_{i,j}\right),
	\end{align}
where the non-negative variables $\{\alpha_{k,2}\}$, $\{\alpha_{k,3}\}$, $\{\nu_{k,1}\}$, $\{\nu_{k,2}\}$, $\{\lambda_{i,j}\}$, and $\{\mathbf T_{i,j} \succeq \mathbf 0\}$, $i,k\in\mathcal K$, $j\in\{1,2,3\}$ are the dual variables associated with constraints \eqref{P1-sub1-Eqv1-SCA-Eqv_cons:b}, \eqref{P1-sub1-Eqv1-SCA-Eqv_cons:c},  \eqref{P1-sub1-Eqv1_cons:c_1_sca}, \eqref{P1-sub1-Eqv1_cons:c_2}, \eqref{P1-sub1-Eqv1_cons:d}, and $\mathbf W_{i,j} \succeq \mathbf 0$, respectively, and $\Xi$ includes all the terms that do not involve $\{\mathbf W_{i,j}\}$. According to the Karush-Kuhn-Tucker (KKT) conditions, the optimal $\{\mathbf W_{i,1}^{\star}\}$ satisfy
	\begin{align}
	& \text{K1:} \ \nu_{k,1}^{\star}, \lambda_{i,1}^{\star} \geq 0, \ \mathbf T_{i,1}^{\star}\succeq \mathbf 0, \ \forall i,k\in\mathcal K,\\
	& \text{K2:} \ \mathbf T_{i,1}^{\star}\mathbf W_{i,1}^{\star} = \mathbf 0, \ \forall i\in\mathcal K, \\
	& \text{K3:} \ \mathbf T_{i,1}^{\star} =  \lambda_{i,1}^{\star}\mathbf I_M - \zeta\sum_{k=1}^K\nu_{k,1}^{\star}\mathbf a_{i,k,1}\mathbf a_{i,k,1}^H,  \ \forall i\in\mathcal K,
	\end{align}
where $\{\nu_{k,1}^{\star}\}$, $\{\lambda_{i,1}^{\star}\}$, and $\{\mathbf T_{i,1}^{\star}\}$ are optimal for the dual problem of \eqref{P1-sub1-Eqv1-SCA-Eqv}, and the equalities in K3 are derived from $\nabla_{\mathbf W_{i,1}^{\star}}\mathcal L^{\rm Hy} = \mathbf 0$, $\forall i\in\mathcal K$.

Next, we prove the condition C1 by exploring the structure of $\{\mathbf T_{i,1}^{\star}\}$. First, since $\tau_1^{\star} > 0$ and the constraints in \eqref{P1-sub1-Eqv1_cons:d} associated with $\tau_1^{\star}$ are active, the optimal dual variables corresponding to these constraints are positive, i.e., $\lambda_{i,1}^{\star} > 0$, $\forall i\in\mathcal K$. Second, $\mathbf \Upsilon_{i,1} \triangleq \zeta\sum_{k=1}^K\nu_{k,1}^{\star}\mathbf a_{i,k,1}\mathbf a_{i,k,1}^H$,  $\forall i\in\mathcal K$, cannot be negative semidefinite. This can be proved by contradiction: if $\mathbf \Upsilon_{i,1} \prec \mathbf 0$ for $i\in\mathcal K$, then $\mathbf T_{i,1}^{\star}$ becomes a full-rank matrix due to K3. This together with K2 forces $\mathbf W_{i,1}^{\star}$ to be a zero matrix which cannot be optimal to problem \eqref{P1-sub1-Eqv1-SCA-Eqv} when $E_k > 0$, $\forall k\in\mathcal K$ and $\tau_1^{\star} > 0$. Thus, we only need to consider the case of $\mathbf \Upsilon_{i,1} \succeq \mathbf 0$, $\forall i\in\mathcal K$. In this case, it can be verified from K1 and K3 that $\lambda_{i,1}^{\star} \geq \lambda_{\mathbf \Upsilon_{i,1}}^{\rm max} \geq 0$, $\forall i\in\mathcal K$, where $\lambda_{\mathbf \Upsilon_{i,1}}^{\rm max}$ denotes the dominant eigenvalue of $\mathbf \Upsilon_{i,1}$. If $\lambda_{i,1}^{\star} > \lambda_{\mathbf \Upsilon_{i,1}}^{\rm max}$ for $i\in\mathcal K$, then $\mathbf T_{i,1}^{\star}$ is of full rank which has been contradicted. Therefore, it must hold that $\lambda_{i,1}^{\star} = \lambda_{\mathbf \Upsilon_{i,1}}^{\rm max}$, $\forall i\in\mathcal K$. On the other hand, the possibility that multiple
eigenvalues of $\mathbf \Upsilon_{i,1}$, $i\in\mathcal K$ have the same value $\lambda_{\mathbf \Upsilon_{i,1}}^{\rm max}$ is zero since $\{\mathbf a_{i,k,1}\}_{\forall k\in\mathcal K}$ are assumed to be independently distributed. As a result, it follows that ${\rm rank}\left(\mathbf T_{i,1}^{\star}\right) = M - 1 $, $\forall i\in\mathcal K$. Moreover, according to Sylvester’s inequality \cite{2013_Howard_rank}, it must hold that ${\rm rank}\left(\mathbf T_{i,1}^{\star}\right) + {\rm rank}\left(\mathbf W_{i,1}^{\star}\right) - M \leq {\rm rank}\left(\mathbf T_{i,1}^{\star}\mathbf W_{i,1}^{\star}\right)$, $\forall i\in\mathcal K$. Due to the fact that ${\rm rank}\left(\mathbf T_{i,1}^{\star}\mathbf W_{i,1}^{\star}\right) = 0$ (see K2), we have ${\rm rank}\left(\mathbf W_{i,1}^{\star}\right) \leq 1$, $\forall i\in\mathcal K$. Since we have already clarified earlier that $\mathbf W_{i,1}^{\star} = \mathbf 0$, $i\in\mathcal K$ cannot be an optimal solution, it can be verified that ${\rm rank}\left(\mathbf W_{i,1}^{\star}\right) = 1$, which completes the proof of the condition C1. 

The proofs of the conditions C2 and C3 are similar to that of C1 provided above and are therefore omitted for brevity. 

\bibliographystyle{IEEEtran}
\bibliography{ref}

\begin{IEEEbiography}[{\includegraphics[width=1in,height=1.25in,clip,keepaspectratio]{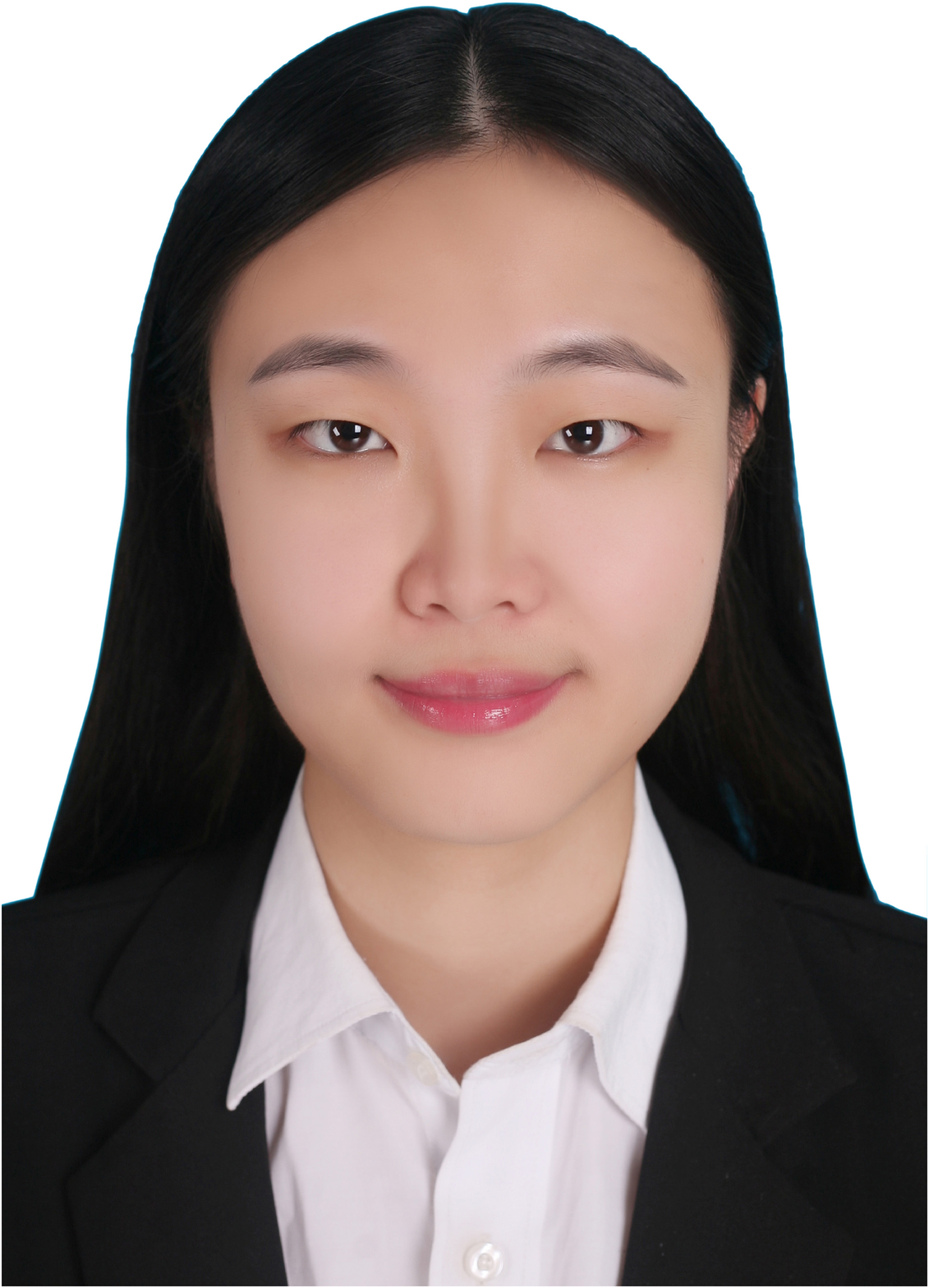}}]{Ying Gao} received the B.Eng. degree in electronic information engineering from Nanjing University of Science and Technology, Nanjing, China, in 2016, and the Ph.D. degree in communications and information systems from Shanghai Institute of Microsystem and Information Technology, Chinese Academy of Sciences, Shanghai, China, in 2021. She is currently a Post-Doctoral Researcher with the State Key Laboratory of Internet of Things for Smart City, University of Macau. Her current research interests include intelligent reflecting surface (IRS) assisted communications, unmanned aerial vehicle (UAV) enabled communications, physical layer security, and optimization theory. 
\end{IEEEbiography}

\begin{IEEEbiography}[{\includegraphics[width=1in,height=1.25in,clip,keepaspectratio]{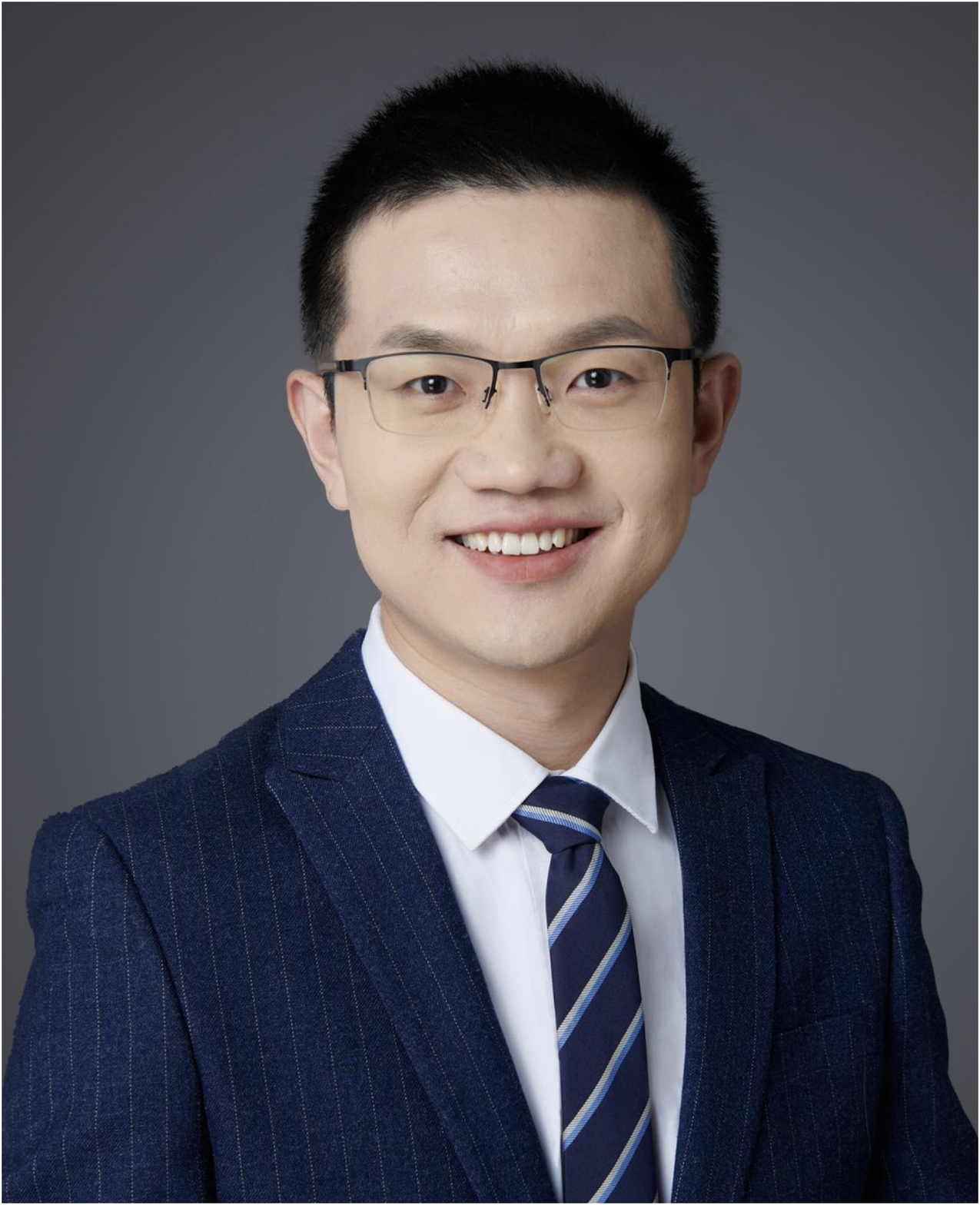}}]{Qingqing Wu} (Senior Member, IEEE) is an Associate Professor with Shanghai Jiao Tong University. His current research interest includes intelligent reflecting surface (IRS), unmanned aerial vehicle (UAV) communications, and MIMO transceiver design. He has coauthored more than 100 IEEE journal papers with 29 ESI highly cited papers and 9 ESI hot papers, which have received more than 22,000 Google citations. He was listed as the Clarivate ESI Highly Cited Researcher in 2022 and 2021, the Most Influential Scholar Award in AI-2000 by Aminer in 2021 and World’s Top 2$\%$ Scientist by Stanford University in 2020 and 2021. 
		
He was the recipient of the IEEE Communications Society Fred Ellersick Prize, IEEE  Best Tutorial Paper Award in 2023, Asia-Pacific Best Young Researcher Award and Outstanding Paper Award in 2022, Young Author Best Paper Award in 2021, the Outstanding Ph.D. Thesis Award of China Institute of Communications in 2017, the IEEE ICCC Best Paper Award in 2021, and IEEE WCSP Best Paper Award in 2015. He was the Exemplary Editor of IEEE Communications Letters in 2019 and the Exemplary Reviewer of several IEEE journals. He serves as an Associate Editor for IEEE Transactions on Communications, IEEE Communications Letters, and IEEE Wireless Communications Letters. He is the Lead Guest Editor for IEEE Journal on Selected Areas in Communications. He is the workshop co-chair for IEEE ICC 2019-2023 and IEEE GLOBECOM 2020. He serves as the Workshops and Symposia Officer of Reconfigurable Intelligent Surfaces Emerging Technology Initiative and Research Blog Officer of Aerial Communications Emerging Technology Initiative. He is the IEEE Communications Society Young Professional Chair in Asia Pacific Region.
\end{IEEEbiography}

\begin{IEEEbiography}[{\includegraphics[width=1in,height=1.25in,clip,keepaspectratio]{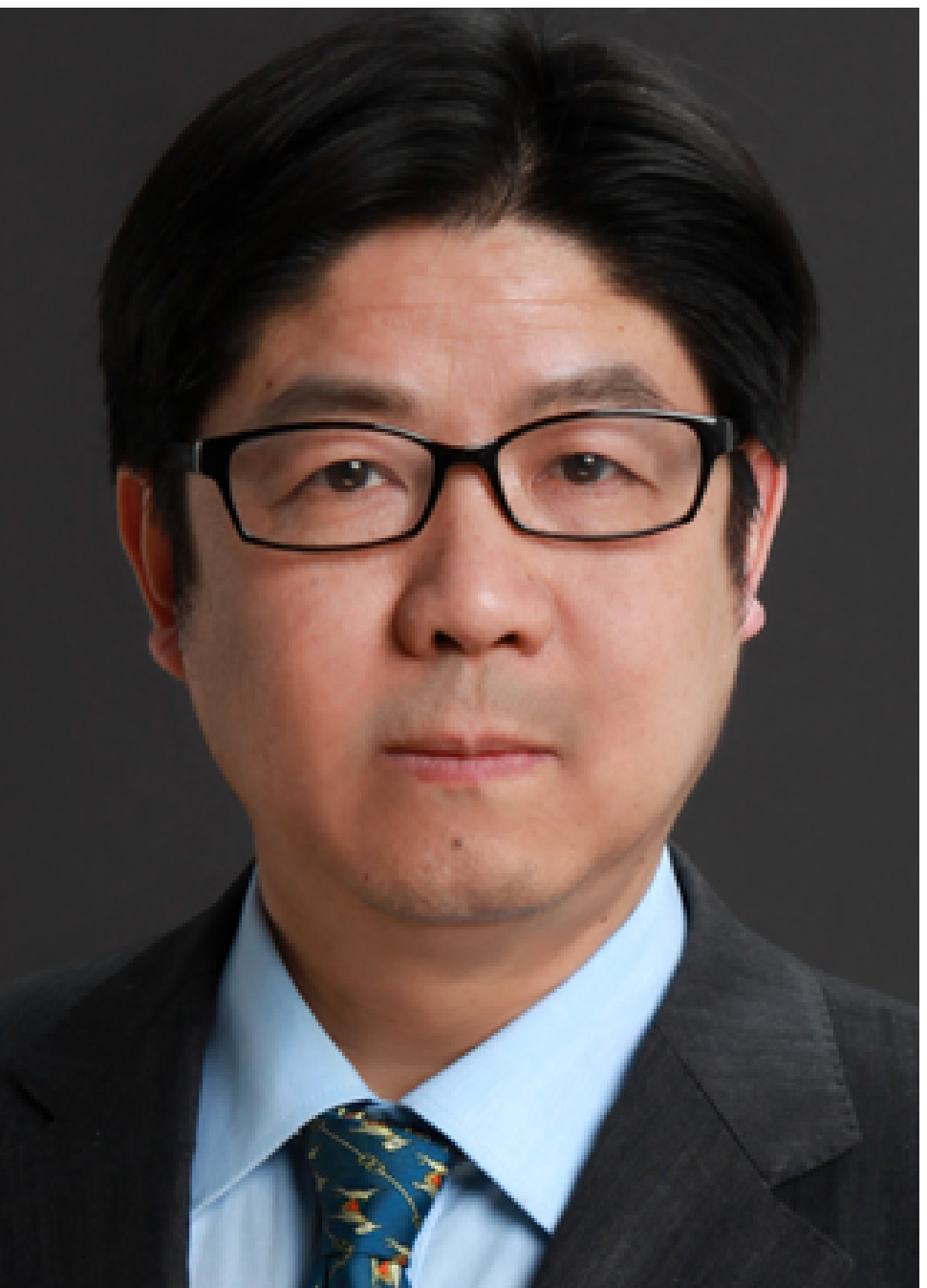}}] {Wen Chen} (Senior Member, IEEE) is a tenured Professor with the Department of Electronic Engineering, Shanghai Jiao Tong University, China, where he is the director of Broadband Access Network Laboratory. He is a fellow of Chinese Institute of Electronics and the distinguished lecturers of IEEE Communications Society and IEEE Vehicular Technology Society. He is the Shanghai Chapter Chair of IEEE Vehicular Technology Society, Editors of IEEE Transactions on Wireless Communications, IEEE Transactions on Communications, IEEE Access and IEEE Open Journal of Vehicular Technology. His research interests include multiple access, wireless AI and meta-surface communications. He has published more than 130 papers in IEEE journals and more than 120 papers in IEEE Conferences, with citations more than 9000 in google scholar.
\end{IEEEbiography}	

\begin{IEEEbiography}[{\includegraphics[width=1in,height=1.25in,clip,keepaspectratio]{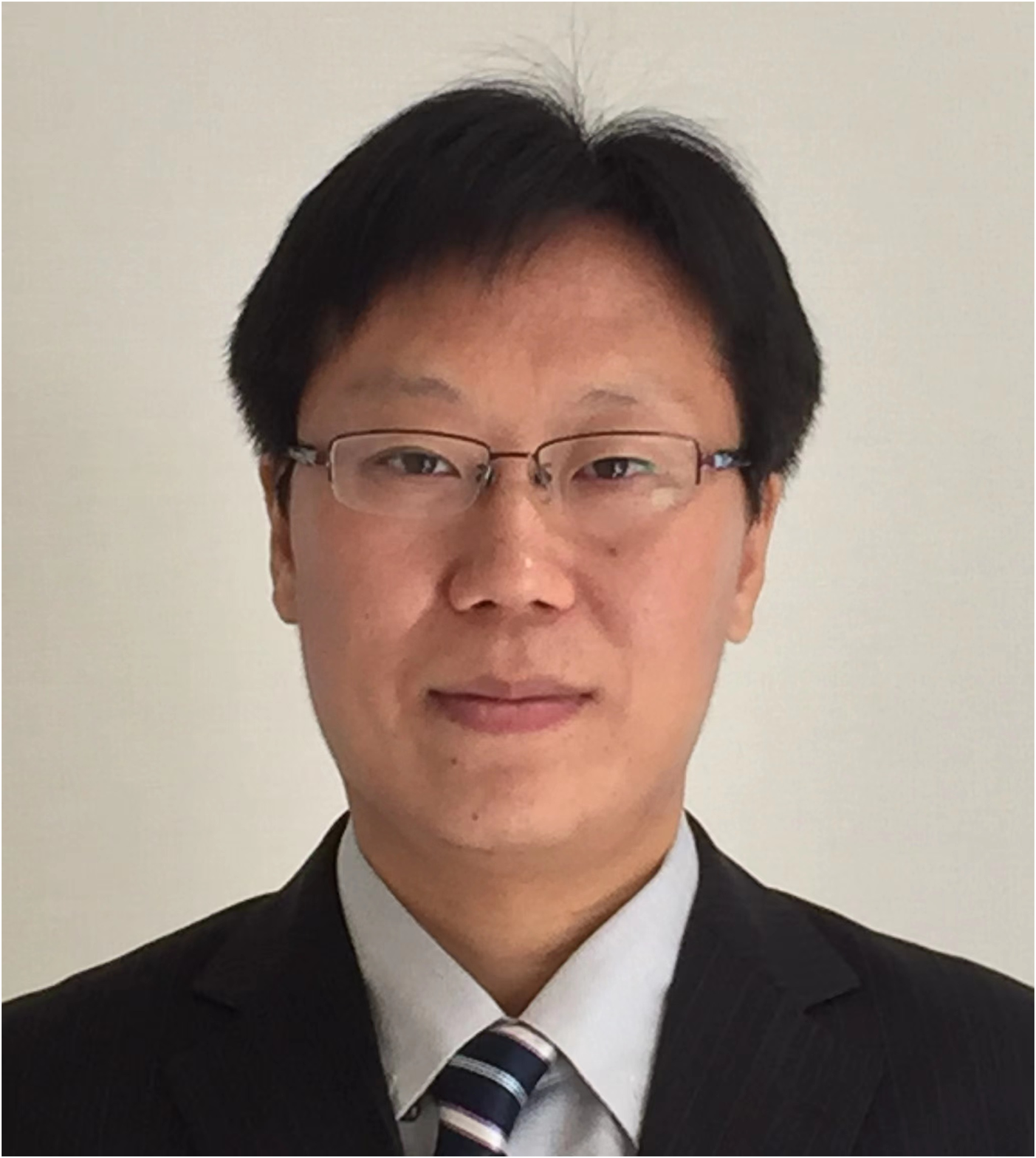}}]{Celimuge Wu} (Senior Member, IEEE) received his Ph.D. degree from The University of Electro-Communications, Japan. He is currently a professor and the director of Meta-Networking Research Center, The University of Electro-Communications. His research interests include Vehicular Networks, Edge Computing, IoT, and AI for Wireless Networking and Computing. He serves as an associate editor of IEEE Transactions on Cognitive Communications and Networking, IEEE Transactions on Network Science and Engineering, and IEEE Transactions on Green Communications and Networking. He is Vice Chair (Asia Pacific) of IEEE Technical Committee on Big Data (TCBD). He is a recipient of 2021 IEEE Communications Society Outstanding Paper Award, 2021 IEEE Internet of Things Journal Best Paper Award, IEEE Computer Society 2020 Best Paper Award and IEEE Computer Society 2019 Best Paper Award Runner-Up. He is an IEEE Vehicular Technology Society Distinguished Lecturer. 
\end{IEEEbiography}

\begin{IEEEbiography}[{\includegraphics[width=0.95in,height=1.4in]{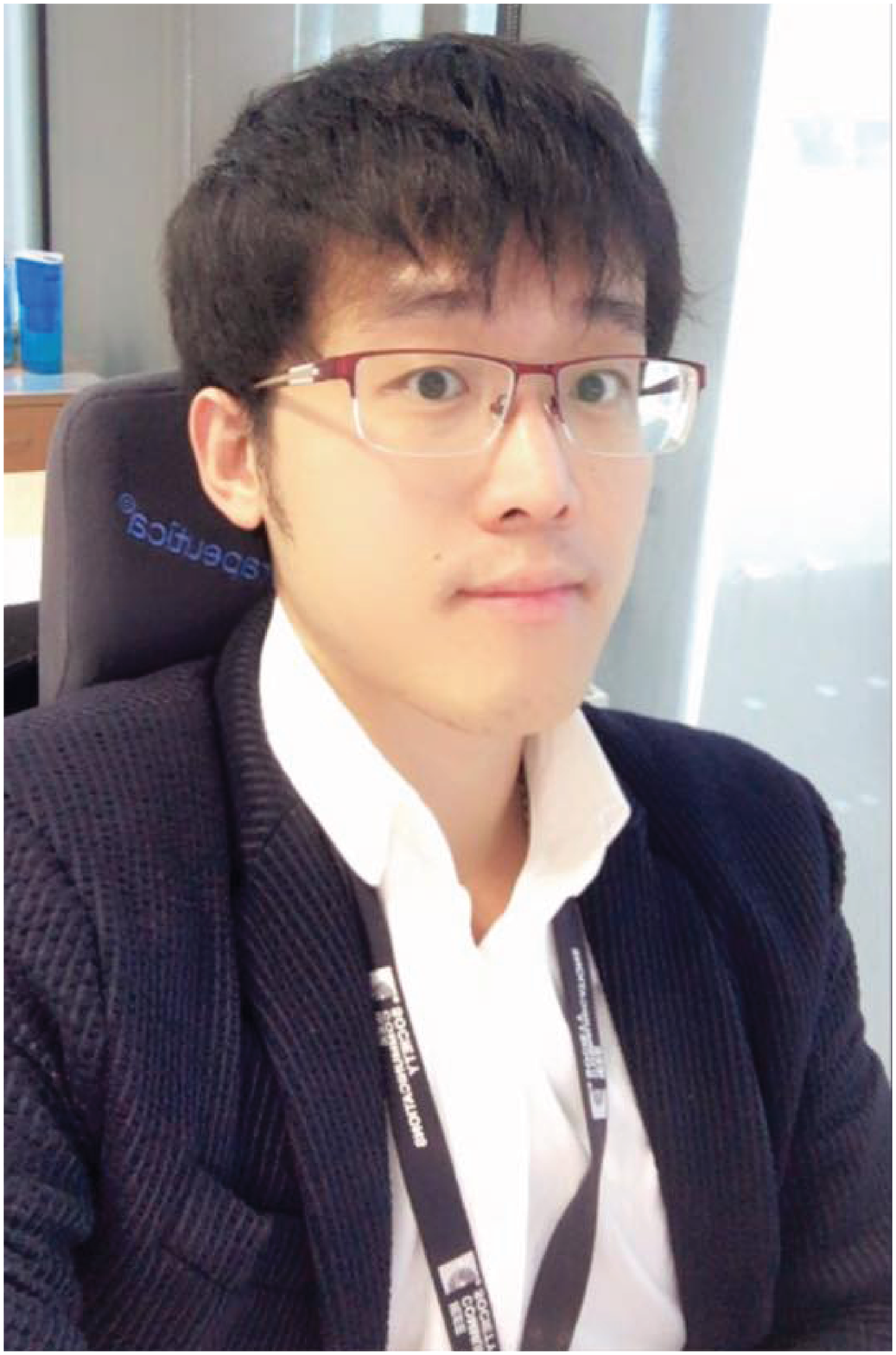}}]{Derrick Wing Kwan Ng} (Fellow, IEEE) received a bachelor's degree with first-class honors and a Master of Philosophy (M.Phil.) degree in electronic engineering from the Hong Kong University of Science and Technology (HKUST) in 2006 and 2008, respectively. He received his Ph.D. degree from the University of British Columbia (UBC) in Nov. 2012. He was a senior postdoctoral fellow at the Institute for Digital Communications, Friedrich-Alexander-University Erlangen-N\"urnberg (FAU), Germany. He is now working as a Scientia Associate Professor at the University of New South Wales, Sydney, Australia. His research interests include global optimization, physical layer security, IRS-assisted communication, UAV-assisted communication, wireless information and power transfer, and green (energy-efficient) wireless communications.
	
Dr. Ng has been listed as a Highly Cited Researcher by Clarivate Analytics (Web of Science) since 2018. He received the Australian Research Council (ARC) Discovery Early Career Researcher Award 2017, the IEEE Communications Society Leonard G. Abraham Prize 2023, the IEEE Communications Society Stephen O. Rice Prize 2022, the Best Paper Awards at the WCSP 2020, 2021, IEEE TCGCC Best Journal Paper Award 2018, INISCOM 2018, IEEE International Conference on Communications (ICC) 2018, 2021, 2023, IEEE International Conference on Computing, Networking and Communications (ICNC) 2016, IEEE Wireless Communications and Networking Conference (WCNC) 2012, the IEEE Global Telecommunication Conference (Globecom) 2011, 2021 and the IEEE Third International Conference on Communications and Networking in China 2008. He served as an editorial assistant to the Editor-in-Chief of the IEEE Transactions on Communications from Jan. 2012 to Dec. 2019. He is now serving as an editor for the IEEE Transactions on Communications and an Associate Editor-in-Chief for the IEEE Open Journal of the Communications Society.
\end{IEEEbiography}

\begin{IEEEbiography}[{\includegraphics[width=1in,height=1.25in,clip,keepaspectratio]{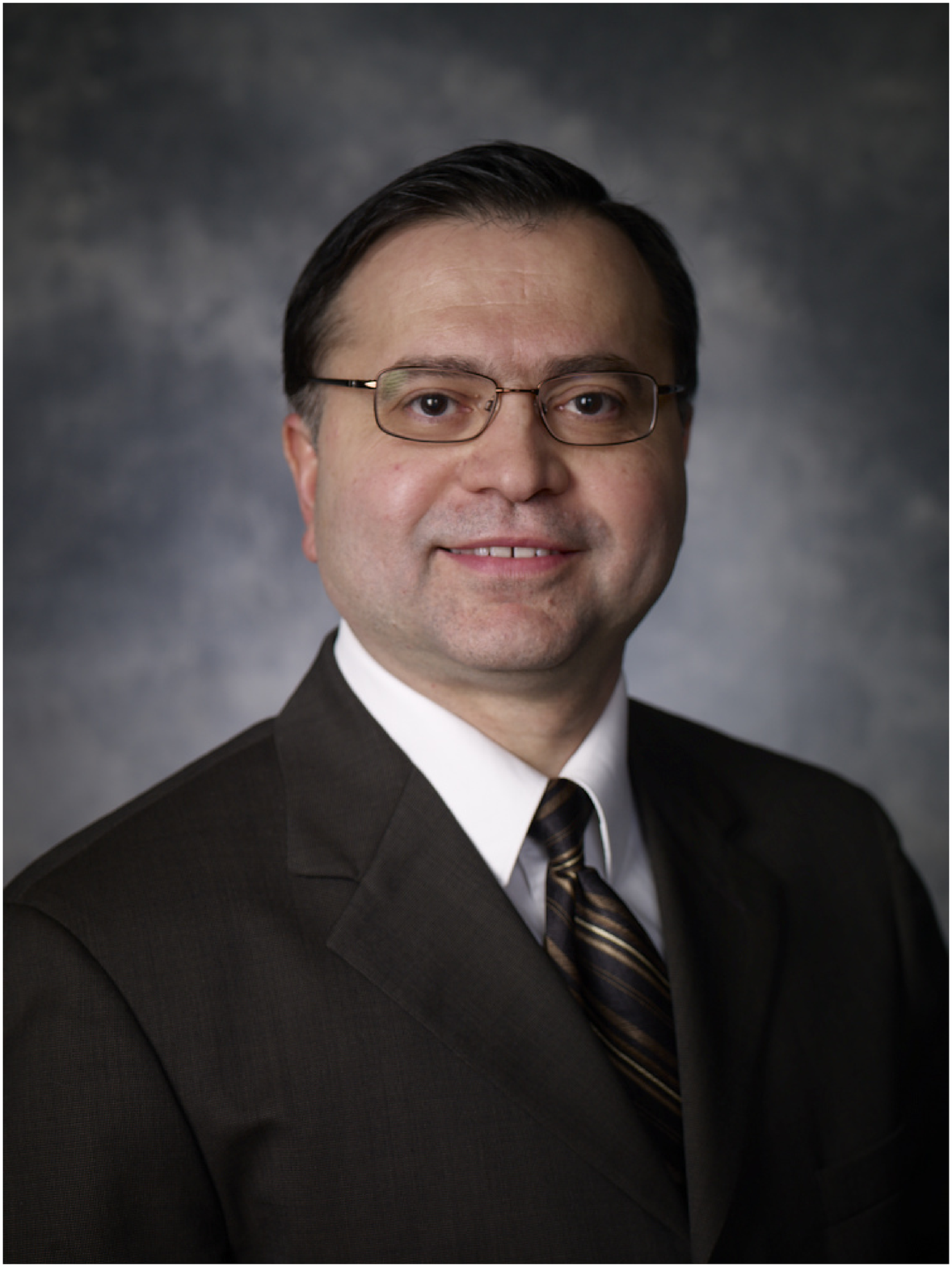}}]{Naofal Al-Dhahir} (Fellow, IEEE) is Erik Jonsson Distinguished Professor $\&$ ECE Associate Head at UT-Dallas. He earned his PhD degree from Stanford University and was a principal member of technical staff at GE Research Center and AT$\&$T Shannon Laboratory from 1994 to 2003.  He is co-inventor of 43 issued patents, co-author of over 550 papers and co-recipient of 5 IEEE best paper awards. He is an IEEE Fellow, AAIA Fellow, received 2019 IEEE COMSOC SPCC technical recognition award, 2021 Qualcomm faculty award, and 2022 IEEE COMSOC RCC technical recognition award. He served as Editor-in-Chief of IEEE Transactions on Communications from Jan. 2016 to Dec. 2019.  He is a Fellow of the US National Academy of Inventors and a Member of the European Academy of Sciences and Arts.
\end{IEEEbiography}

\end{document}